\renewcommand\@oddfoot{
	\hfil
	\rlap{%
		\vtop{%
			\vskip10mm
			\colorbox[rgb]{0.99,0.78,0.07}
			{\@tempdima\evensidemargin
				\advance\@tempdima1in
				\advance\@tempdima\hoffset
				\hb@xt@\@tempdima{%
					\textcolor{darkgray}{\normalsize\sffamily
						\bfseries\quad
						\expandafter\textsolittle\expandafter{
							arXiv.org}}%
					\strut\hss}}}}
}
\keywords{Fully Defective Networks, Noise Resilience, Leader Election, Ring Networks, Quiescent Computations}
\authorrunning{F.~Frei, R.~Gelles, A.~Ghazy, A.~Nolin}
\author{Fabian Frei}{CISPA Helmholtz Center for Information Security, Germany}{fabian.frei@cispa.de}{https://orcid.org/0000-0002-1368-3205}{}
\author{Ran Gelles}{Bar-Ilan University, Israel}{ran.gelles@biu.ac.il}{https://orcid.org/0000-0003-3615-3239}{Partially supported by a grant from the United States-Israel Binational Science Foundation (BSF), Jerusalem, Israel, Grant No.\@ 2020277. He would also like to thank the CISPA Helmholtz Center for Information Security for hosting him while part of this research was done.}
\author{Ahmed Ghazy}{CISPA Helmholtz Center for Information Security, Germany}{ahmed.ghazy@cispa.de}{https://orcid.org/0009-0009-7414-5871}{}
\author{Alexandre Nolin}{CISPA Helmholtz Center for Information Security, Germany}{alexandre.nolin@cispa.de}{https://orcid.org/0000-0002-3952-0586}{}
\crefname{prop}{Property}{Properties}
\algnewcommand{\IIf}[1]{\State\algorithmicif\ #1\ \algorithmicthen}
\algnewcommand{\EndIIf}{\unskip\ \algorithmicend\ \algorithmicif}
\algnewcommand{\ElseIIf}[1]{\algorithmicelse\ #1}
\algnewcommand{\IWhile}[1]{\State\algorithmicwhile\ #1\ \algorithmicdo}
\algnewcommand{\EndIWhile}{\unskip\ \algorithmicend\ \algorithmicwhile}
\newcommand{\directionformat}[1]{\ensuremath{\mathsf{#1}}\xspace}
\global\long\def\CW{\directionformat{CW}}
\global\long\def\CCW{\directionformat{CCW}}
\newcommand{\N}{\mathbb{N}}
\newcommand{\ID}[1]{\ensuremath{\mathsf{ID}_{#1}}}
\newcommand{\IDsup}[2]{\ensuremath{\ID{#1}^{\mathrm{#2}}}}
\newcommand{\IDcw}[1]{\IDsup{#1}{cw}}
\newcommand{\IDccw}[1]{\IDsup{#1}{ccw}}
\newcommand{\maxID}{\ensuremath{\ID{\max}}\xspace}
\newcommand{\vmax}{\ensuremath{\ell}\xspace}
\newcommand{\Vmax}{\ensuremath{V_{\max}}\xspace}
\newcommand{\stringlength}{\ensuremath{s}\xspace}
\DeclarePairedDelimiter{\set}{\lbrace}{\rbrace}
\DeclarePairedDelimiter{\card}{\lvert}{\rvert}
\newcommand{\SRDir}[3]{{{{#1}{#2}}\ensuremath{(#3)}}}
\newif\ifColors
\newcommand{\SendCW}[1]{\textcolor{red}{\SRDir{send}{\CW}{#1}}}
\newcommand{\SendCCW}[1]{\textcolor{blue}{\SRDir{send}{\CCW}{#1}}}
\newcommand{\RecvCW}[1]{\textcolor{red}{\SRDir{recv}{\CW}{#1}}}
\newcommand{\RecvCCW}[1]{\textcolor{blue}{\SRDir{recv}{\CCW}{#1}}}
\newcommand{\SendCW}[1]{\textcolor{black}{\SRDir{send}{\CW}{#1}}}
\newcommand{\SendCCW}[1]{\textcolor{black}{\SRDir{send}{\CCW}{#1}}}
\newcommand{\RecvCW}[1]{\textcolor{black}{\SRDir{recv}{\CW}{#1}}}
\newcommand{\RecvCCW}[1]{\textcolor{black}{\SRDir{recv}{\CCW}{#1}}}
\newcommand{\SendPort}[2]{\SRDir{send}{\ensuremath{\mathrm{Port}_{#1}}}{#2}}
\newcommand{\RecvPort}[2]{\SRDir{recv}{\ensuremath{\mathrm{Port}_{#1}}}{#2}}
\newcommand{\Port}[1]{\ensuremath{\mathrm{Port}_{#1}}}
\newcommand{\tot}[2]{\ensuremath{#1_{\mathsf{#2}}}}
\def\totrecvcw{\tot{\rho}{cw}}
\def\totrecvccw{\tot{\rho}{ccw}}
\def\totsentcw{\tot{\sigma}{cw}}
\def\totsentccw{\tot{\sigma}{ccw}}
\newcommand{\totrecvport}[1]{{\rho}_{#1}}
\def\state{\textit{state}}
\newcommand{\leader}{\ensuremath{\mathsf{Leader}}\xspace}
\newcommand{\nonleader}{\ensuremath{\mathsf{Non\textsf{-}Leader}}\xspace}
\title{Content-Oblivious Leader Election on Rings}
\begin{document}

\maketitle

\begin{abstract}
In \emph{content-oblivious computation}, 
$n$ nodes wish to compute a given task over an asynchronous network that suffers from an extremely harsh type of noise, which corrupts the content of all messages across all channels. 
In a recent work, Censor-Hillel, Cohen, Gelles, and Sela (Distributed Computing, 2023) showed how to perform arbitrary computations in a content-oblivious way in 2-edge connected networks but only if the network has a distinguished node (called \emph{root}) to initiate the computation. 

Our goal is to remove this assumption, which was conjectured to be necessary. 
Achieving this goal essentially reduces to performing a content-oblivious leader election  
since an elected leader can then serve as the root required to perform arbitrary content-oblivious computations. 
We focus on ring networks, which are the simplest 2-edge connected graphs. On \emph{oriented} rings, we obtain a leader election algorithm with message complexity $O(n \cdot \maxID)$, where $\maxID$ is the maximal assigned ID.
As it turns out, this dependency on $\maxID$ is inherent: 
we show a lower bound of $\Omega(n \log(\maxID/n))$ messages for content-oblivious leader election algorithms. 
We also extend our results to \emph{non-oriented} rings, where nodes cannot tell which channel leads to which neighbor. 
In this case, however, the algorithm does not terminate but only reaches quiescence.
\end{abstract}

\section{Introduction}
\label{sec:intro}

The field of distributed computing is rich with models helping us understand different types of architectures and computational hardness by making different kinds of assumptions about how computations are carried out. A recent work by 
Censor-Hillel, Cohen, Gelles, and Sela~\cite{ccgs23} introduced a particularly weak computational model coined 
\emph{fully defective networks}.
This model considers 
an asynchronous network in which messages may be fully corrupted and thus not carry any information beyond their sheer existence. 
Algorithms designed for this model cannot rely 
in any way on possible contents of messages, and thus are named \emph{content-oblivious}. 
Instead of relying on the content of messages or on their time of arrival (since arbitrary delays may occur in asynchronous networks), such algorithms depend solely on the \emph{order} in which messages arrive from different neighbors.

The aforementioned work by Censor-Hillel et al.~\cite{ccgs23} showed that
any computation possible in the asynchronous setting with reliable content-carrying messages can be simulated in the fully defective setting 
under two assumptions: that the network is 2-edge connected and that there is a distinguished leader (which they call the \emph{root node}) in the network.
While the same paper showed 2-edge connectivity to be essential for any kind of nontrivial computation in fully defective networks, 
the question of the necessity of a pre-existing leader was not settled. Censor-Hillel et al.\@ conjectured that general computations in fully defective networks do indeed require the existence of such a pre-elected leader.

In this paper, we \emph{disprove} this conjecture, at least for the most fundamental type of 2-edge connected topologies, namely, rings. 
We design a content-oblivious algorithm that successfully performs a leader election in oriented rings.
\begin{restatable}{theorem}{LEoriented}
\label{thm:LE-oriented}
    There is a \emph{quiescently terminating} content-oblivious algorithm of message complexity $n(2\cdot \maxID + 1)$
    that elects a leader in oriented rings of $n$ nodes with unique IDs.
\end{restatable}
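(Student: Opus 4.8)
The plan is to elect the node of maximum ID and to encode each ID in unary as a number of indistinguishable pulses, so that the only information crossing an edge is a count. The whole difficulty is to keep the total number of pulses linear in $n$: a naive ``let the larger ID survive'' relay already costs $\Theta(n^2\maxID)$ pulses in the worst case (e.g.\ decreasing IDs around the ring), so I would use \emph{incremental} forwarding in which every node only ever \emph{increases} the running maximum it has emitted in a given direction, sending just the difference. Because these increments telescope, each node emits at most $\maxID$ pulses per direction over the whole execution, capping every directed edge at $\maxID$ pulses and each of the two circulating waves at $n\maxID$.

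Concretely, I would run the algorithm in three stages. First, a clockwise \emph{election wave}: each node starts by sending $\ID{v}$ pulses \CW{}, and whenever the total it has received from its \CCW{} port exceeds the total it has so far forwarded \CW{}, it emits the difference; by the telescoping argument the running maximum $\maxID$ propagates all the way around and returns to the unique node $\ell$ with $\ID{\ell}=\maxID$, using exactly $n\maxID$ pulses. Second, a \emph{coronation}: a node that sees the maximum return to it launches a counter-clockwise wave of $\maxID$ pulses, forwarded by the same incremental rule; only the true maximum's coronation survives a full loop and comes back, while every other node, upon seeing a coronation larger than anything it initiated, permanently adopts the \nonleader{} state. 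This costs another $n\maxID$ pulses. Third, once $\ell$ receives its own coronation back it fixes itself as \leader{} and sends a single termination pulse around the ring, one per edge, for $n$ pulses, after which all nodes halt. The three stages sum to $n\maxID+n\maxID+n=n(2\maxID+1)$.

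The hard part will be the \emph{detection} step under asynchrony, and this is where I expect to spend the real effort. At the instant a node has received exactly $\ID{v}$ pulses from its \CCW{} port and has forwarded nothing extra, it cannot locally tell whether it is the global maximum (so no further pulses will ever arrive) or merely a node whose own count has been momentarily matched and will soon be overtaken by delayed pulses from a larger ID; the two situations are indistinguishable to a content-oblivious node at that moment. No instantaneous local rule can resolve this, so certainty must come from a global round trip, and the coronation wave is precisely that certificate, since a value returns to its originator only if it dominated every node. I would therefore prove correctness in two parts, namely that $\ell$'s coronation necessarily returns while every premature coronation is absorbed, so that exactly one \leader{} is declared, and then argue that the cost of those premature, soon-retracted coronations already lies inside the budget, because incremental forwarding bounds every edge by $\maxID$ regardless of how many nodes act prematurely or how the adversary interleaves deliveries. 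Finally I would verify \emph{quiescent termination}: the returning coronation triggers the single-pulse termination wave, every node transitions to a halting state exactly once, and no pulse remains in transit, which gives quiescence.
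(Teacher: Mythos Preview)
Your high-level plan---a clockwise pass to propagate the maximum, a counterclockwise pass as a certificate, then a single termination pulse---mirrors the paper's structure, and your incremental-forwarding primitive is a legitimate alternative to the paper's mechanism (there, each node injects \emph{one} pulse and absorbs the $\ID{v}$-th one it receives, rather than injecting $\ID{v}$ pulses and forwarding only the excess). Both primitives stabilize every directed edge at exactly $\maxID$ pulses, so the $n(2\maxID+1)$ accounting survives either way.

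The genuine gap is the local rule for launching Stage~3. You correctly observe that the moment $\totrecvcw[v]=\ID{v}$ is ambiguous and introduce the \CCW coronation as a certificate, but the \emph{same} ambiguity recurs for the coronation: every node, leader or not, passes through $\totrecvccw[v]=\ID{v}$ on its way to $\maxID$, so ``my coronation came back'' is not a uniquely observable event either. (The phrase ``premature coronations are absorbed'' is misleading in a unary content-oblivious encoding: nothing is absorbed; the counts just climb monotonically everywhere.) What the paper adds, and your proposal lacks, is the observation that the \emph{conjunction} $\totrecvcw[v]=\ID{v}=\totrecvccw[v]$ \emph{is} uniquely observable, together with a lag lemma (\cref{lem:progress}) proving that the \CCW count strictly trails the \CW count at every non-leader until \CW-quiescence---so a non-leader already has $\totrecvcw>\ID{v}$ by the time $\totrecvccw$ reaches $\ID{v}$. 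Since your scheme already delays the coronation to the moment $\totrecvcw=\ID{v}$, the same lag argument would in fact go through with your primitive; but you must state the correct local trigger and prove the inequality. As written, the coronation merely re-creates the detection problem it was meant to solve, and nothing prevents a non-leader from firing a spurious termination pulse, breaking both the single-leader guarantee and the exact pulse count.
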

Here, $\maxID$ denotes the maximal ID assigned to a node in the network, and \emph{quiescent termination} refers to the valuable property of our algorithm that nodes do not receive messages after termination. 
We discuss quiescent computations and the issue of composability in more detail  in Section~\ref{sec:termination}.

We further extend our result to the case where the ring is non-oriented, albeit under a weaker definition of computation. 
Instead of termination, we only require \emph{stabilization}, which means that every node eventually settles on a decision (to be or not to be a leader, in our case) that is never revised again. However, the nodes might not know whether they have achieved their stable output already and remain ready to receive and potentially send messages forever. This potential is not actualized for a \emph{quiescently} stabilizing algorithm, where all messaging activity ceases after finite time. Our algorithm not only elects a leader in such a manner but also orients the ring.

\begin{restatable}{theorem}{orientRing}
\label{thm:orient-ring}
     There is a quiescently stabilizing content-oblivious algorithm of message complexity $n(2\cdot\maxID + 1)$
     that elects a leader and orients a non-oriented ring of $n$ nodes with unique~IDs.
\end{restatable}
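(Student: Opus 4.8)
The plan is to reduce the non-oriented case to the oriented one by exploiting the fact that a ring, even without a global orientation, still admits exactly two consistent rotational senses, \CW and \CCW, and that these two senses use disjoint sets of directed channels. Concretely, I would have every node run the oriented algorithm of \Cref{thm:LE-oriented} along both of its ports in parallel, using the orientation-free relay rule ``a message that enters through one port is processed and then forwarded through the other port.'' The first thing to observe is that this rule makes every originated message circulate around the whole ring in a single rotational sense and return to its originator through its other port; consequently the stream of messages a node emits through one port all travel one way around the ring and those it emits through the other port travel the opposite way. Since each undirected edge is realized by two oppositely directed channels, the \CW stream and the \CCW stream never share a channel, so the two parallel executions cannot interfere with each other content-obliviously.

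Next I would argue that each stream is, on its own, a faithful execution of the oriented algorithm on a consistently oriented ring. The key point is that the oriented algorithm is \emph{direction-symmetric}: a node's behavior depends only on the messages arriving from its upstream neighbor and on its own ID, and never on a global name for the sense of travel. Thus, reading the incoming stream on one port and answering on the other exactly emulates the oriented protocol executed in that rotational sense, and \Cref{thm:LE-oriented} applies to each sense separately. Because the election criterion is ID-based, both senses elect the \emph{same} node---the one with ID \maxID---so a single \leader emerges despite the two independent executions. That leader can then fix an orientation by flooding one marker in one of its two senses; the marker propagates by the same in-one-port-out-the-other rule and thereby assigns every node a consistent \CW/\CCW labeling of its ports, simultaneously orienting the ring.

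For the message complexity I would open up the combined algorithm and count directly rather than appeal to the oriented bound as a black box. The total is the sum over the two senses, and I expect this to match $n(2\cdot\maxID+1)$ with the $2\cdot\maxID$ term accounting for the election traffic distributed across the two rotational streams and the additive $n$ accounting for exactly one orienting traversal around the ring (one marker per node). Verifying that the two senses together do not inflate the count beyond this bound, and that the orienting pass is absorbed into the additive $n$ term rather than adding a separate cost, is the routine but necessary accounting step.

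The remaining and, I expect, most delicate step is to explain why we must settle for quiescent \emph{stabilization} rather than termination, and to prove that stabilization nonetheless holds. In the oriented setting a node can predict and count the messages it will receive on its single incoming port and thereby certify completion; in the non-oriented setting the two interleaved streams rob a node of this ability, since content-obliviously it cannot attribute a freshly arriving pulse to the \CW or the \CCW execution, nor rule out a straggler or the returning orientation marker. I would therefore establish only the two weaker guarantees that define quiescent stabilization: first, that every node eventually reaches a final, never-revised decision (\leader or \nonleader, together with a fixed port orientation), which follows from the correctness of both single-sense executions and the uniqueness of the elected node; and second, that the total number of pulses across both senses is finite, so that all channel activity provably ceases after bounded time even though no node can locally certify this quiescence. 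Pinning down this finiteness argument cleanly across the two entangled streams, and making precise the impossibility of local termination detection, is where the main work of the proof lies.
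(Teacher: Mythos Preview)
Your reduction has a structural gap. The algorithm of \Cref{thm:LE-oriented} is \emph{not} unidirectional: it runs a \CW phase and then a \CCW phase, and the termination signal is itself a \CCW pulse. So your premise that ``a node's behavior depends only on the messages arriving from its upstream neighbor'' is false for that algorithm, and the non-interference argument collapses. If you launch two copies in opposite rotational senses, the \CCW-phase pulses of one copy travel on exactly the same directed channels as the \CW-phase pulses of the other; content-obliviously, a node cannot tell which copy a pulse belongs to, and both executions are corrupted. The building block that \emph{is} unidirectional is the warm-up \Cref{alg:nonterminating_le}, and that is what the paper actually runs in parallel over the two senses.

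Even with the correct unidirectional building block, your ``leader floods a marker'' step does not go through. After two symmetric copies of \Cref{alg:nonterminating_le} reach quiescence, every node has received exactly $\maxID$ pulses on each port; the would-be leader has no event distinguishing it, and (as you yourself note) no node can detect quiescence. Any marker a node sends is just another pulse and is indistinguishable from an election pulse. The paper sidesteps this by breaking the symmetry \emph{inside} the two executions rather than afterward: each node uses two different virtual IDs, $\ID{v}$ on one port and $\ID{v}+1$ on the other. Then the sense in which $\vmax$ uses $\maxID+1$ stabilizes at $\maxID+1$ pulses per node while the other sense stabilizes at $\maxID$; every node orients its ports by comparing its two receive counters, and the unique node whose larger counter equals its own $\ID{}+1$ is the leader. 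This gives the $n(\maxID+1)+n\cdot\maxID=n(2\maxID+1)$ count directly. A small technical point you would also need: with the $\ID{v}/\ID{v}+1$ choice the per-direction IDs are no longer globally unique, so one must check that \Cref{alg:nonterminating_le} still quiesces with every node at the directional maximum even under duplicates; the paper proves this separately.
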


Both of our leader election algorithms, for oriented and non-oriented rings, have message complexity~$O(n \cdot \maxID)$. 
The term $\maxID$ is not very common in message complexities of algorithms.\footnote{Some algorithms depend on the IDs assigned to nodes. However, it is common for the complexity analysis to assume that all IDs are of length $O(\log n)$ bits, making the dependence on the  IDs implicit.}  
Somewhat surprisingly, our analysis shows that this term is inherent to content-oblivious algorithms. We prove the following lower bound.
\begin{theorem}
\label{thm:LB}
    Any deterministic terminating content-oblivious algorithm for leader election in rings with unique IDs sends at least $n \lfloor\log(\maxID/n)\rfloor$ messages.
\end{theorem}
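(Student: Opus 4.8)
The plan is to argue by indistinguishability, exploiting that in a content-oblivious execution the only thing a node can ever learn about the rest of the network is the port-labeled order in which pulses reach it. First I would record the model consequence precisely: because message contents carry no information and the algorithm is deterministic, the complete behavior of a node $v$---the times (relative to its own received events) at which it fires pulses on each of its two ports, together with its final \leader/\nonleader decision---is a fixed function of $v$'s ID and of the sequence of received pulses, each tagged only by the port (\CW or \CCW) on which it arrived. To turn the asynchronous model into a clean combinatorial object, I would fix one convenient adversarial schedule (say, delivering messages one at a time in a canonical order); this is legitimate for a lower bound because a correct algorithm must succeed against \emph{every} schedule, so we may select the one most favorable to us. Under this schedule each execution is determined by the ID assignment, and the central tool is the following indistinguishability principle: if we cut the ring into two arcs along two edges, then the entire behavior of one arc is a function only of the pattern of pulses entering it across the cut.

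The logarithmic factor comes from bounding how much that crossing pattern can say. If an arc $A$ emits $m$ pulses in total across the (two-edge) cut, then from the content-oblivious viewpoint of the other side the only distinguishing features are which of the two entry ports each pulse uses and the relative order of the $m$ arrivals; hence the number of distinct signals $A$ can present is at most $2^{O(m)}$. Consequently the receiving arc can extract only $O(m)$ bits about $A$, so to make any decision on the receiving side depend on one out of $K$ possibilities on the sending side, the cut must carry $\Omega(\log K)$ pulses.

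To convert this into an $n$-fold bound I would use a structured family of instances: partition $\{1,\dots,\maxID\}$ into $n$ contiguous buckets of size $\lfloor\maxID/n\rfloor$ and let node $i$ range over bucket $i$, which keeps all IDs distinct while giving each node $K=\lfloor\maxID/n\rfloor$ independent choices. The goal is to charge $\lfloor\log(\maxID/n)\rfloor$ messages to each node on disjoint edge sets. I would do this with a gluing/fooling argument: if the edges incident to some node carried fewer than $\lfloor\log(\maxID/n)\rfloor$ messages, then by the counting bound two of its within-bucket choices would present the rest of the ring with identical crossing patterns; cutting at that node and re-pasting the low-traffic segment (or splicing in a second indistinguishable copy) then yields a ring on which the algorithm elects a number of leaders different from one, contradicting correctness. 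Summing the disjoint charges over all $n$ nodes gives the claimed $n\lfloor\log(\maxID/n)\rfloor$. An equivalent lens on the same bound is a reduction to the two-party communication complexity of \textsc{greater-than} across each edge, together with a direct-sum argument over the $n$ edges.

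The hard part will be making the forcing step genuinely watertight, and it splits into two difficulties. First, leader election outputs only a single bit per node and never names a global winner, so it is not automatic that a node's exact ID must be communicated at all; the fooling construction must therefore create a true inconsistency (two leaders or none) rather than a mere relabeling, which is what the gluing of indistinguishable segments is designed to produce, and I expect the delicate point to be preserving indistinguishability through the splice under a single fixed schedule. Second, I must ensure \emph{additivity}: that the $\lfloor\log(\maxID/n)\rfloor$ messages charged to different nodes are distinct, so that the per-node bounds sum rather than overlap. I would secure this by charging each node only messages on its own incident edges and by arranging the bucket family so that the $n$ symmetry-breaking subproblems are independent, the content-oblivious analogue of a communication-complexity direct sum; carrying this direct-sum step through a single asynchronous ring execution is the crux of the proof.
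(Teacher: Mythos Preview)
Your plan heads toward a communication-complexity/direct-sum argument, but the paper takes a quite different and much shorter route that sidesteps both difficulties you flag. Its central object is the \emph{solitude pattern} $p_i$: place a single node with ID $i$ on a one-node ring under a fixed one-pulse-at-a-time scheduler and record the binary string of port labels of the pulses it receives (equivalently, sends, since $n=1$). A brief two-node fooling argument shows $p_i\neq p_j$ for $i\neq j$: if $p_i=p_j$, then on a two-node ring each node sends exactly what the other would have sent to itself in solitude, so each sees its own solitude pattern and outputs \leader. With $k=\maxID$ distinct binary strings in hand, a pigeonhole on length-$s$ prefixes yields $n$ IDs whose solitude patterns agree on the first $s=\lfloor\log(\maxID/n)\rfloor$ symbols. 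Place exactly these $n$ IDs on the ring under the same scheduler: since every node emits the same first $s$ pulses (in direction and order), every node also \emph{receives} the same first $s$ pulses it would in solitude, so the ring stays in lockstep for $s$ steps and at least $n\cdot s$ pulses are sent. There is no arc cut, no information bound across a cut, no splice, and no additivity issue---the $n$ contributions are trivially disjoint because they are $n$ different nodes' pulses in one execution.

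On your sketch specifically: beyond the two obstacles you already name, charging each node the messages on its incident edges double-counts every edge and would halve the bound; and the fooling step ``two in-bucket IDs yield the same crossing pattern'' does not by itself contradict correctness, since most of a node's $K$ choices map to the same \nonleader output anyway, so identical boundary traffic need not produce an inconsistency without a further (ring-size-changing) splice. These are exactly the complications that the solitude-prefix trick avoids.
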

Note that this $\Omega(n \log(\maxID/n))$ lower bound implies that the number of messages in a ring of size $n$ is unbounded---
we can always increase the message complexity by assigning larger IDs, even when $n$ is a small constant.

\subsection{Quiescence and Composability}
\label{sec:termination}
As mentioned above,
we concatenate our content-oblivious leader election algorithm (Theorem~\ref{thm:LE-oriented}) with the root-dependent universal content-oblivious algorithm \cite[Thm.~1]{ccgs23} to obtain the following powerful corollary.
\begin{restatable}{corollary}{GeneralComputations}
\label{thm:GeneralComputations}
    Assuming unique IDs, any asynchronous algorithm on rings can be simulated in a fully defective oriented ring.
\end{restatable}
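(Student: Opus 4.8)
The plan is to prove the corollary by \emph{sequential composition} of two existing building blocks. First I would run the content-oblivious leader-election algorithm of Theorem~\ref{thm:LE-oriented} to designate a unique leader. Then I would let this leader serve as the root node required by the universal content-oblivious simulation of Censor-Hillel et al.~\cite[Thm.~1]{ccgs23}. Since a ring is trivially 2-edge connected, the only hypothesis of their result that is not met a priori is the existence of a distinguished root, and the elected leader supplies exactly this. At the level of functionality, concatenating the two algorithms therefore simulates any asynchronous ring algorithm in the fully defective model; the handover is immediate, with the leader initiating the simulation while every other node, having terminated as a non-leader, waits to participate.

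The main obstacle---and the only real content beyond invoking the two black boxes---is \emph{composability}. Because content-oblivious nodes distinguish messages solely by their per-port arrival order, a single straggling leader-election message arriving after a node has switched to the simulation phase would be misread as a simulation message and could derail the computation. This is precisely where \emph{quiescent} termination, rather than mere termination, becomes essential: Theorem~\ref{thm:LE-oriented} guarantees that no node receives any further leader-election message once it has terminated. Hence, for each node the \emph{next} message it receives after local termination is unambiguously the first simulation message, and each node can maintain a local phase indicator that it advances exactly once, at its own termination.

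The key step I would carry out carefully is to show that, on every directed channel, the delivered messages factor cleanly as all leader-election messages followed by all simulation messages. A simulation message on a channel is emitted only by an endpoint that has already terminated the first phase, whereas that same endpoint emits no leader-election message after termination; since delivery preserves the per-channel order, the receiver therefore sees the first phase entirely before the second. Combined with quiescence---which ensures that each node has drained all incoming leader-election messages by the time it terminates---this gives every node a clean initial configuration from which to begin the simulation of \cite{ccgs23}, so that the per-port orders observed in the second phase are exactly those the simulation expects. Chaining this observation around the ring yields the global two-phase ordering on every edge, which is all the composition requires.
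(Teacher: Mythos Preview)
Your proposal is correct and follows the same route as the paper's own (informal) justification in Section~\ref{sec:termination}: compose the leader election of Theorem~\ref{thm:LE-oriented} with the root-based scheme of~\cite[Thm.~1]{ccgs23}, using the elected leader as root, and argue composability via quiescent termination together with the fact that the leader---who alone initiates the second phase---is the last node to terminate.

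One small remark: the per-channel ordering argument in your final paragraph does not by itself rule out a simulation message reaching a node that is still executing the first phase. Showing that on every channel all leader-election pulses precede all simulation pulses, and that each node has drained its incoming leader-election pulses by the time it terminates, still leaves open the possibility that a neighbor which terminated earlier sends a simulation pulse that arrives before the receiver terminates. To close this you must invoke explicitly what you already state in your first paragraph and what the paper singles out as its second mechanism: the leader terminates \emph{last}, and in the scheme of~\cite{ccgs23} only the root ever sends an unprompted message, so no simulation pulse exists anywhere in the network until every node has already switched phases. With that fact folded into your ``key step,'' the argument is complete and coincides with the paper's.
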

However, some subtleties arise when concatenating algorithms in the content-oblivious setting. 
With reliable message content, each message could be tagged, indicating the algorithm it belongs to. 
However, when concatenating content-oblivious algorithms, messages sent by the first algorithm may be mistaken for ones sent by the second algorithm, and vice versa.

To make this composition work, we require two properties: (1) \emph{termination}, i.e., nodes should have a distinct point in time where they end the first algorithm and switch to the second one, and (2) \emph{message-algorithm attribution}, i.e., 
while a node executes an algorithm, it only ever receives messages generated by nodes while they were executing the same algorithm. 

Our algorithm for oriented rings (Theorem~\ref{thm:LE-oriented}) achieves correct message-algorithm attribution
(when composed with the scheme of~\cite{ccgs23}) 
by combining two mechanisms.
First, it features 
a \emph{quiescent} termination.
In particular, any node terminates eventually, and at that time no messages are in delivery towards that node anymore, nor will any message be sent to that node after its termination. 
Secondly, the nodes terminate in order, so that the leader is the last to terminate. This makes our algorithm easy to compose with the algorithm of~\cite{ccgs23}, by replacing the act of termination with the act of switching to the second algorithm. The leader, which is the last to terminate the first algorithm, is the node that initiates the computation of the scheme in~\cite{ccgs23} (i.e., it acts as the root), and at the time it sends its first message, we are guaranteed that all other nodes have already switched to that algorithm. 

As a matter of fact,
quiescent termination could be relaxed when composing general algorithms:
If we have a bound~$r$ on the number of messages of the first algorithm that might reach a node after it transitions to the second algorithm, we could still concatenate any algorithm in an \emph{altered form} where nodes send $r+1$ copies of each message, and  process arriving messages in groups of $r+1$ messages as well. However, this clearly leads to an undesired $r$-fold increase in the message complexity of the composed algorithm.

\medskip

On the other hand, our algorithm for non-oriented rings (Theorem~\ref{thm:orient-ring}) does not terminate and cannot be composed with other algorithms. As mentioned above, we only require the algorithm to reach quiescence in this case.

\subsection{Related Work}
\label{sec:relwork}
Leader election is a fundamental task that has been studied by the distributed computing community since the 1970s.
Simple leader election algorithms for asynchronous rings were proposed by Le Lann~\cite{LeLann77} and by Chang and Roberts~\cite{CR79}. These algorithms employ $O(n^2)$ messages to ensure that all nodes yield to the node with the maximal ID, which becomes the leader. Later work~\cite{HS80,DKR82,Peterson82} improved this down to~$O(n \log n)$ messages. This complexity is tight in asynchronous rings~\cite{Burnes80,FL87}. In \emph{synchronous} rings, leader election can be performed by communicating only $O(n)$ messages~\cite{FL87,EKCK91}. 

For the case of anonymous rings, i.e., with identical nodes without IDs,
Angluin~\cite{Angluin80} proved that symmetry cannot be broken,
and thus no leader election algorithm exists. 
Attiya, Snir, and Warmuth~\cite{ASW88} examined anonymous asynchronous rings further and characterized computable tasks. In particular, they showed that many tasks, including ring orientation, require communicating $\Omega(n^2)$~messages. Attiya and Snir~\cite{AS91} gave tight bounds on the complexities of randomized algorithms.
Relaxed notions of ring orientation were given by Attiya, Snir, and Warmuth~\cite{ASW88}, where the orientation is either consistent with all nodes or alternating between any two neighbors,
and 
by Syrotiuk and Pachel~\cite{SP88}, where nodes determine whether a majority agrees on the same orientation or not. 
Orienting a ring and leader election when $n$ is known to the nodes is presented by Flocchini et al.~\cite{FKKLS04}.

The question of termination in anonymous rings was explored by
Itai and Rodeh~\cite{IR90},
who discovered that a network cannot compute its size~$n$ by a terminating algorithm. As a consequence, a leader election algorithm that terminates is impossible, too, since it is trivial to learn $n$ once a leader is given.
On the other hand, if the nodes know $n$ or even an upper bound on~$n$, a randomized leader election algorithm that terminates exists~\cite{IR90}. Afek and Matias~\cite{AM94} give various leader election algorithms that trade off knowledge, termination, and number of sent messages.

\smallskip
Censor-Hillel, Gelles, and Haeupler~\cite{CGH19} designed a content-oblivious BFS algorithm as a pre-processing step for their distributed interactive coding scheme~\cite{gelles17}. 
Building on that idea, Censor-Hillel et al.~\cite{ccgs23} introduced the concept of content-oblivious computation and proved that general computations are only possible over 2-edge connected networks.
They designed a compiler that converts any asynchronous algorithm into a content-oblivious one, assuming a pre-existing
leader. They also gave explicit algorithms for content-oblivious DFS, ear decomposition, and (generalized) Hamiltonian cycle construction.

Computation with fully corrupted messages has been comparatively more studied in the synchronous setting, where the presence or absence of a message in a given round can still be used to send one bit of information~\cite{SW90}. 
A notable example is the \emph{Beeping} model, introduced by Cornejo and Kuhn~\cite{CK_disc10}, in which each message sent by a node is received by all its neighbors, and nodes can only distinguish between receiving no message and receiving at least one message.
Among other problems, leader election has been studied in this setting in a sequence of works~\cite{GH_soda13,FSW_disc14,DBB_disc18,CD_tcs19}.
Similar to the fully defective setting, compilers were devised for transforming algorithms with stronger communication primitives into algorithms for the Beeping model~\cite{DBB_podc20,AGL_iandc22,Davies_podc23a}.

\subsection{Organization}
Section~\ref{sec:preliminaries} formally introduces the content-oblivious setting and some other notions and notations.
In Section~\ref{sec:LE-oriented} we give our leader election algorithm in oriented rings, deferring some proofs to \cref{sec:deferred-analysis-terminating-le}. Section~\ref{sec:LE-non-oriented} discusses non-oriented rings. 
Our lower bound on the number of messages sent by a content-oblivious leader election algorithm is presented in Section~\ref{sec:lower-bounds}.
In Appendix~\ref{sec:anonymousRings}  we discuss anonymous rings and design a randomized quiescently stabilizing algorithm with a high probability of success for both electing a leader and orienting a non-oriented ring.
Section~\ref{sec:conclusion} concludes our work with a brief summary and suggests a few follow-up questions.
\section{Preliminaries}
\label{sec:preliminaries}

\subparagraph*{Notations.}
For an integer $n$, we denote the set $\{1,2,\ldots,n\}$ by $[n]$.
All logarithms are binary. For a variable $\mathit{var}$ and a node $v$, we let $\mathit{var}[v]$ denote the value the variable $\mathit{var}$ holds at the node~$v$.
By the term \emph{with high probability} we mean a probability of at least $1-n^{-O(1)}$.

\subparagraph*{The content-oblivious computation model.}
Consider a distributed network $G=(V,E)$ with $n=|V|$ nodes. We usually assume that each node $v\in V$ is assigned a unique ID  (unless otherwise stated), usually denoted by~$\ID{v}\in\N$. 
We denote the largest ID assigned to a node in the network 
by~$\maxID$, that is,
$\maxID := \max_{v\in V}{\ID{v}}$, and by $\vmax$ the node possessing this ID.
Apart from the IDs, the nodes are identical. 
Algorithms are \emph{uniform} by default, i.e., the nodes neither know the size of the network~$n$ nor any bound on it; for non-uniform algorithms, the nodes may be equipped with such knowledge. 

In this paper, we consider rings, i.e., connected graphs where all nodes have degree 2.
Neighboring nodes communicate by sending messages to each other. We assume that all messages are subject to corruption: the content of any message is completely erased by noise, resulting in an empty message of length 0, which we call a \emph{pulse}. 
Computations that ignore any potential content of a message and thus work only with pulses are called \emph{content-oblivious}. 

Further, the network is \emph{asynchronous}: the time it takes a pulse to travel through a channel and arrive at its end is unpredictable; the delays are unbounded but always finite. Pulses cannot be dropped or injected by the channel. In such an asynchronous network, the nodes possess neither a common clock nor any notion of time, and they are assumed to be \emph{event-driven}. This means that a node may act once right in the beginning of the computation and from then on only upon receiving a pulse.  
As a function of 
its own ID, the previously received pulses, and possibly its own source of randomness, it can then change its state and, for each connected channel, send any number of pulses. 
Most of our algorithms are deterministic, except for the ones in \cref{sec:anonymousRings}, where we consider a special case where nodes do not have IDs and use randomness in order to generate them.
We say that an algorithm \emph{terminates} (sometimes also referred to as process termination, explicit termination, or termination detection in the literature) if for each node there is a time at which it has decided on an output  
and entered a terminating state. Once in a terminating state, a node ignores all incoming pulses and does not send any new ones. 
We say that an algorithm has \emph{quiescent termination} if at the time the last node terminates it is guaranteed that no pulses are still in transit.
An algorithm's \emph{message complexity} is the total number of messages (pulses) it sends during a worst-case computation until all nodes have terminated or until the network has reached quiescence.

\subparagraph*{Ring's orientation.}
In a ring, each node communicates with its two neighbors via \Port0 and \Port1. 
Consider a pulse re-sent from \Port1
by every node receiving it. 
If such a pulse passes through all edges
(i.e., is never reflected by a node with misaligned ports), 
we call the ring \emph{oriented} and the pulse \emph{clockwise} (\CW). \Port1 of each node is then called its \CW port, 
leading to its \CW neighbor.  Counterclockwise (\CCW) is defined analogously via \Port0. Note that \CW pulses are sent from \CW ports but arrive at \CCW ports, and vice versa.

However, rings may not be oriented to begin with. In a \emph{non-oriented} ring, there is no guarantee that  \Port0 and \Port1 of a node are aligned with a clockwise or counterclockwise walk on the ring.
See Fig.~\ref{fig:ring-orientation} for a demonstration.
Algorithms for non-oriented rings must work correctly for all assignments of the nodes' ports. In this case the \CW/\CCW direction is local per node, and we will use the notion of \Port0 and \Port1 to avoid confusion. 

\begin{figure}[ht]
    \centering
    \includegraphics[width=0.3\textwidth]{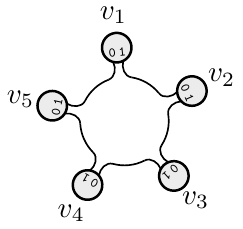}
    \rule{0.2\textwidth}{0pt}
    \includegraphics[width=0.3\textwidth]{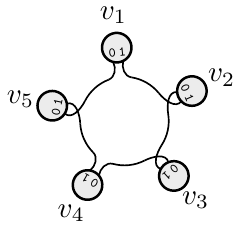}
    \caption{An oriented ring (left) and a non-oriented one (right).}
    \label{fig:ring-orientation}
\end{figure}

\section{Leader Election in an Oriented Ring}
\label{sec:LE-oriented}

In this section, we consider the problem
of electing a leader in an oriented ring of size unknown to the nodes. 
Recall that in the  \emph{leader election task},
nodes are required to terminate with an output: \leader or \nonleader, where  a single node $\vmax \in V$ outputs
\leader, while all other nodes 
must output \nonleader.
We design an algorithm to elect a leader that features quiescent termination and prove the following.
\LEoriented*

For the algorithms in this section, we will use the following methods for sending and receiving pulses. 
The method
\SendCW{} sends one pulse over the \CW channel.
The method \RecvCW{} checks whether pulses are waiting in the \CW incoming queue. If no pulse is in the queue, the method returns~0. Otherwise, it consumes a single pulse from the queue and returns~1. The methods \SendCCW{} and \RecvCCW{}  are the analogous \CCW versions.
Moreover, for each node $v$, we introduce counters $\totrecvcw$ and $\totsentcw$ for the total
number of received and sent \CW pulses, respectively.
Likewise, we introduce $\totrecvccw$ and $\totsentccw$ for \CCW pulses.
We assume each of the four methods above
updates those counters with every received or sent pulse. More precisely,
every node $v$ has variables $\totrecvcw[v]$ and $\totsentcw[v]$ initially set to $0$. 
Every time \RecvCW{} processes a pulse from the queue of incoming pulses or \SendCW{} sends a pulse, the counters
increase
according to 
$\totrecvcw[v] \gets \totrecvcw[v] + 1$ and 
$\totsentcw[v] \gets \totsentcw[v] + 1$, respectively.

\subsection{Warm-up: Leader Election Without Termination}
\label{sec:warmup}
To demonstrate some of the main ideas of our algorithm, let us begin with a simple quiescently stabilizing algorithm that uses only clockwise pulses and elects a leader in an oriented ring. We emphasize that this algorithm is
non-terminating.
In this algorithm (see Algorithm~\ref{alg:nonterminating_le}), each node starts by sending one pulse and then
relays every received pulse in the same direction, except for the single time when the number of received pulses reaches its own ID. In this event, the node does not relay this one pulse and assigns itself the state of \leader, at least temporarily. However, any pulses received after this are relayed again, and revert the node to being a \nonleader.

The main intuition behind this algorithm is that each node will eventually have sent and
received exactly $\maxID{}$ pulses:
$n$ pulses are being generated at the initialization, and these pulses keep circulating in the ring, increasing the counter of received pulses until some node has received as many pulses as its ID. At this point, the node removes one pulse, and we are left with $n-1$ pulses in circulation. 
Except for the node with \maxID, every node will receive more pulses than its ID,
ensuring it eventually becomes a follower. 
This continues until all nodes, except the one with~\maxID, have removed a single pulse from the circulation and declared themselves followers. 
The last remaining pulse keeps circulating until all nodes have received exactly $\maxID$ pulses. As soon as the last node receives its $\maxID$-th pulse, it sets itself as a leader and removes the last remaining pulse. Since the network no longer contains any pulse, which we call quiescence, the states of the event-driven nodes remain unchanged as well. Note, however, that nodes do not terminate since they do not know whether the ring has achieved this quiescent state or not, i.e., whether some pulses are still in transit. 

\begin{algorithm}[ht]\caption{Quiescently Stabilizing Leader Election for Node $v$}
\label{alg:nonterminating_le}
    \begin{algorithmic}[1]
        \State{\SendCW{}}
        \While {\textsf{true}}
            \If {\RecvCW{} returns 1}
                \If{$\totrecvcw = \ID{v}$} \label{line:no_send_cond}
                 \State   {$\state \gets \leader$} 
                \Else \Comment{$v$ acts as a relay unless $\totrecvcw = \ID{v}$}
                    \State{$\state \gets \nonleader$}
                    \State{\SendCW{}}
                \EndIf
            \EndIf
        \EndWhile
    \end{algorithmic}
\end{algorithm}

Even though \cref{alg:nonterminating_le} does not terminate, analyzing it will help us design our terminating leader election algorithm (Algorithm~\ref{alg:terminating_le}) later on.
We begin by proving some key invariants of Algorithm~\ref{alg:nonterminating_le}, which are also used to show its correctness.

\begin{lemma}\label{lem:le_invariants}
    For every node $v$ running \Cref{alg:nonterminating_le},
    the following invariants hold at the end of each iteration of the main loop:
    \begin{enumerate}
        \item If $\totrecvcw < \ID{v}$, then $\totsentcw  = \totrecvcw + 1$, i.e., $v$ has sent exactly one pulse
        more than it has received. \label[lemma]{lem:le_invariants:prop:one}
        \item If $\totrecvcw \geq \ID{v}$, then $\totsentcw  = \totrecvcw$, i.e., $v$ has sent exactly as many pulses as it has received. \label[lemma]{lem:le_invariants:prop:two}
    \end{enumerate}
\end{lemma}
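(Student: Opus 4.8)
The plan is to prove both statements simultaneously as a single loop invariant, by induction on the iterations of the main \textbf{while} loop of \cref{alg:nonterminating_le}, tracking the pair $(\totrecvcw, \totsentcw)$ just before and just after each iteration. The two cases of the lemma are mutually exclusive and jointly exhaustive (they partition on $\totrecvcw < \ID{v}$ versus $\totrecvcw \geq \ID{v}$), so it suffices to show that the correct invariant holds at the end of every iteration, assuming one of the two held at its start. For the base case, I would inspect the state at loop entry: line~1 performs exactly one \SendCW{}, so $\totsentcw = 1$ and $\totrecvcw = 0$; since IDs are positive, $0 < \ID{v}$, and invariant~(1) holds because $\totsentcw = 1 = \totrecvcw + 1$.

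For the inductive step I would split on the return value of \RecvCW{}. If it returns~$0$, no pulse is consumed and none is sent, so both counters are unchanged and whichever invariant held is trivially preserved. If it returns~$1$, then $\totrecvcw$ increases by one inside \RecvCW{}; writing $\totrecvcw'$ for the updated value, I distinguish three subcases. If $\totrecvcw' < \ID{v}$, the guard on line~\ref{line:no_send_cond} fails and the node relays, so $\totsentcw$ also rises by one; applying invariant~(1) to the previous value $\totrecvcw' - 1 < \ID{v}$ gives the new relation $\totsentcw = \totrecvcw' + 1$, re-establishing~(1). If $\totrecvcw' = \ID{v}$, the guard fires, the node sets $\state \gets \leader$ and does \emph{not} send; invariant~(1) applied to the previous value $\ID{v} - 1$ yields $\totsentcw = \ID{v} = \totrecvcw'$, establishing~(2). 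If $\totrecvcw' > \ID{v}$, the node relays again, and invariant~(2) applied to the previous value $\totrecvcw' - 1 \geq \ID{v}$ gives $\totsentcw = \totrecvcw'$, re-establishing~(2).

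I expect the only delicate point to be the boundary at $\totrecvcw = \ID{v}$, which is precisely where the regime switches from invariant~(1) to invariant~(2): the single pulse withheld when the node first becomes a leader is exactly what absorbs the surplus of one sent pulse, turning $\totsentcw = \totrecvcw + 1$ into $\totsentcw = \totrecvcw$, after which the node acts as a pure relay and invariant~(2) is self-maintaining. The subtle prerequisite making this work is that the comparison on line~\ref{line:no_send_cond} is evaluated against the \emph{post-increment} value of $\totrecvcw$; I would state explicitly that \RecvCW{} updates its counter before the guard is tested, since the entire accounting hinges on the withheld pulse landing on exactly the $\ID{v}$-th reception. Everything else is a routine verification of the arithmetic in each subcase.
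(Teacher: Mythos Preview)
Your proposal is correct and follows essentially the same approach as the paper's proof: induction over loop iterations, a trivial base case from the initial \SendCW{}, and an inductive step that is vacuous when \RecvCW{} returns~$0$ and otherwise splits into the same three subcases (below, at, and above~$\ID{v}$). The only cosmetic difference is that you case-split on the post-increment value~$\totrecvcw'$ while the paper indexes variables by iteration number and compares $\totrecvcw(i-1)$ against $\ID{v}-1$; these are equivalent.
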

\begin{proof}
    Fix an arbitrary node $v\in V$. For a variable $var$ and $i\in\N$, we write $var(i)$ to denote the value of~$var$ for this node~$v$ at the end of the $i$-th iteration. We also
    use $var(0)$ to denote the initial value of $var$ before the first iteration.
    We prove the statement by induction over $i$.
    Before the first iteration, exactly one initial pulse is sent, and $\totsentcw(0) = 1$.
    Due to 
    $\totrecvcw(0) = 0 < \ID{v}$, (2) is vacuously true, and (1) holds too since $\totsentcw(0) = 1 = \totrecvcw(0) + 1$, proving the induction basis.
    
    Assume now that the invariants hold true at the beginning of some iteration $i \geq 1$ or, equivalently, at the end of iteration $i-1$. We proceed to show that they hold at the end of iteration $i$ again.
    If no pulse is received during iteration $i$,
    then the variables in question remain unchanged, and the invariants still hold after the iteration.
    If a pulse is received, however, there are three cases:
    
    \begin{enumerate}
        \item If $\totrecvcw({i-1}) < \ID{v} - 1$, then $\totrecvcw(i) = \totrecvcw(i-1) + 1 < \ID{v}$, and we need
        to prove the first invariant for iteration $i$. Since the condition on \Cref{line:no_send_cond} is not met, a pulse is sent, and we get $\totsentcw(i) = \totsentcw(i-1) + 1 = \totrecvcw(i-1) + 2 = \totrecvcw(i) + 1$, where the second equality follows from the first invariant for iteration $i-1$. Thus, the invariants still hold.
    
        \item If $\totrecvcw({i-1}) = \ID{v}-1$, then $\totrecvcw({i}) = \ID{v}$.
        Thus, in iteration $i$, by the condition on \Cref{line:no_send_cond}, no pulses are sent, so
        $\totsentcw({i}) = \totsentcw({i-1}) = \totrecvcw(i-1) + 1 = \totrecvcw(i)$, where the second
        equality is due to the first invariant for iteration $i-1$.
        Thus, we have $\totsentcw({i}) = \totrecvcw({i})$ and $\totrecvcw({i}) \geq \ID{v}$, and the invariants hold.

        \item If $\totrecvcw({i-1}) > \ID{v} - 1$, we have $\totrecvcw(i) > \ID{v}$.
        The condition on \Cref{line:no_send_cond} is not met in iteration $i$, and a pulse is relayed.
        We thus have $\totsentcw(i) = \totsentcw(i-1) + 1 = \totrecvcw(i-1) + 1 = \totrecvcw(i)$, with the
        second equality again due to the second invariant for iteration $i-1$.
        \qedhere
    \end{enumerate}
\end{proof}

\begin{lemma}\label{lem:leader_last}
    Let $\vmax$ be the node with $\ID{\vmax} = \maxID$. If
    $\totrecvcw[\vmax] \geq \ID{\vmax}$ at some point, then $\totrecvcw[v] \geq \ID{v}$
    holds for every node $v$ at this point. That is, $\vmax$ is the last node to 
    satisfy $\totrecvcw[\vmax] \geq \ID{\vmax}$.
\end{lemma}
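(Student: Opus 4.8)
The plan is to argue \emph{backward} around the ring, starting from $\vmax$ and walking in the counterclockwise direction, showing that \emph{every} node encountered has in fact received at least $\maxID$ pulses. Besides \Cref{lem:le_invariants}, the only model-level fact I would use is that no node can receive more clockwise pulses than its counterclockwise neighbor has sent: if $u$ is the counterclockwise neighbor of $w$, so that $u$'s clockwise channel is the unique channel feeding $w$, then $\totsentcw[u] \geq \totrecvcw[w]$, since every clockwise pulse arriving at $w$ was emitted by $u$ and channels neither drop nor create pulses. Fixing the global snapshot at the claimed point in time, I would run an induction whose hypothesis is that a given node $w$ satisfies $\totrecvcw[w] \geq \maxID$; the base case is $w = \vmax$, which holds by assumption because $\ID{\vmax} = \maxID \leq \totrecvcw[\vmax]$.

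For the inductive step, let $u$ be the counterclockwise neighbor of $w$. The conservation fact gives $\totsentcw[u] \geq \totrecvcw[w] \geq \maxID$. Suppose, toward a contradiction, that $u$ has not yet reached its threshold, i.e.\ $\totrecvcw[u] < \ID{u}$. Then \Cref{lem:le_invariants:prop:one} yields $\totsentcw[u] = \totrecvcw[u] + 1$, and since counts and IDs are integers this gives $\totsentcw[u] \leq \ID{u}$; but $u \neq \vmax$ forces $\ID{u} < \maxID$ by uniqueness of the IDs, whence $\totsentcw[u] \leq \ID{u} < \maxID$, contradicting $\totsentcw[u] \geq \maxID$. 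Hence $\totrecvcw[u] \geq \ID{u}$, and now \Cref{lem:le_invariants:prop:two} gives $\totsentcw[u] = \totrecvcw[u]$, so that $\totrecvcw[u] = \totsentcw[u] \geq \maxID$. This re-establishes the hypothesis at $u$ while also showing $u$ has reached its threshold. The crucial observation is that the induction must carry the \emph{strong} bound $\totrecvcw[\cdot] \geq \maxID$ rather than the weaker $\totrecvcw[u] \geq \ID{u}$ we ultimately want: it is precisely the one-for-one forwarding of absorbed nodes, $\totsentcw[u] = \totrecvcw[u]$, that keeps the bound from decaying by one at each step as we move around the ring.

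Walking counterclockwise from $\vmax$ visits all $n-1$ remaining nodes, each distinct from $\vmax$, so the inductive step applies at every step and establishes $\totrecvcw[v] \geq \maxID \geq \ID{v}$ for every node $v$, which is exactly the claim. The main obstacle I anticipate is not the short algebra but the model-level bookkeeping: one must make precise what ``at some point'' means for an asynchronous, event-driven execution—formalizing it as a consistent global configuration in which the counters $\totrecvcw[\cdot]$ and $\totsentcw[\cdot]$ together with the number of in-transit pulses on each channel are simultaneously well defined—and justify the conservation inequality $\totsentcw[u] \geq \totrecvcw[w]$ in that configuration. The second subtlety worth flagging is the choice of induction hypothesis itself: a naive backward pass that only propagates $\totrecvcw[\cdot] \geq \ID{\cdot}$ loses a pulse per step and fails, and recognizing that \Cref{lem:le_invariants:prop:two} lets one propagate the verbatim $\maxID$ bound is what makes a single pass around the ring close the argument.
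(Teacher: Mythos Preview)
Your proof is correct and follows essentially the same approach as the paper: both walk counterclockwise from $\vmax$, using the conservation inequality $\totsentcw[u] \geq \totrecvcw[w]$ together with \Cref{lem:le_invariants} to propagate the bound $\maxID$ back through the ring. The paper phrases this as a contradiction (take the closest predecessor $x_k$ that is still ``bad'' and chain equalities through the intermediate ``good'' nodes to reach $\ID{\vmax} \leq \ID{x_k}$), whereas you phrase it as an explicit induction carrying the strong hypothesis $\totrecvcw[\cdot] \geq \maxID$; these are the same argument unwound differently.
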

\begin{proof}
    Consider the iteration where $\totrecvcw[\vmax] = \ID{\vmax}$ holds.
    Assume towards contradiction that the statement does not hold, and let $x_k\in V$ be the closest predecessor\footnote{i.e., along the \CW path of the network.} of $\vmax$ for which
    $\totrecvcw[x_k] < \ID{x_k}$.
    By \Cref{lem:le_invariants:prop:one},
    $\totsentcw[x_k] = \totrecvcw[x_k] + 1$. 
    Denoting the intermediate nodes by $x_1, \ldots, x_{k-1}$, by \Cref{lem:le_invariants:prop:two}, we
    have $\totsentcw[x_i] = \totrecvcw[x_i]$ for all $i \in [k-1]$.

    As the number of pulses sent by a node upper bounds the number of pulses its neighbor has received,
    we have the following inequality.
    \[
    \totrecvcw[\vmax] \leq \totsentcw[x_1] = \totrecvcw[x_1] \leq \dots \leq \totsentcw[x_k] = \totrecvcw[x_k] + 1.
    \]
    Since $\totrecvcw[\vmax] = \ID{\vmax}$ and  $\totrecvcw[x_k] < \ID{x_k}$, the above inequality implies
    $\ID{\vmax} \leq \ID{x_k}$, which is a contradiction due to the uniqueness of the largest ID and $\ID{\vmax} = \maxID$.
    Therefore, $\vmax$ is the last node meeting the condition with $\totrecvcw[\vmax] = \ID{\vmax}$.
\end{proof}

We now show that quiescence has been reached when $\totrecvcw[v] \geq \ID{v}$ holds for all nodes.

\begin{lemma}\label{lem:sat_to_quiescence}
    If $\totrecvcw[v] \geq \ID{v}$ holds at every node $v$, then the network
    is in quiescence.
\end{lemma}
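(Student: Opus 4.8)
The plan is to prove this by a pulse-conservation argument applied edge by edge around the ring. The crucial physical fact is that the channels neither drop nor inject pulses: every \CW pulse that a node $v$ emits travels to its \CW neighbor and, until consumed there, sits either on the wire or in that neighbor's incoming queue. Consequently, for each node $v$ with \CW neighbor $w$, the number of pulses currently pending on the $v \to w$ channel (those on the wire together with those still unread in $w$'s queue) equals exactly $\totsentcw[v] - \totrecvcw[w]$, and this quantity is always nonnegative. Showing quiescence thus amounts to showing that every such per-edge count is zero.

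First I would invoke \Cref{lem:le_invariants:prop:two}: since the hypothesis gives $\totrecvcw[v] \geq \ID{v}$ at every node $v$, we immediately obtain $\totsentcw[v] = \totrecvcw[v]$ for every $v$. Next I would sum the per-edge pending counts over all $n$ \CW edges of the ring. Writing $w(v)$ for the \CW neighbor of $v$, the total number of pulses in transit is $\sum_v \bigl(\totsentcw[v] - \totrecvcw[w(v)]\bigr)$. Because $v \mapsto w(v)$ is a bijection of the node set (a cyclic shift), the terms $\totrecvcw[w(v)]$ range over all nodes exactly once, so this sum re-indexes to $\sum_v \totsentcw[v] - \sum_v \totrecvcw[v] = \sum_v \bigl(\totsentcw[v] - \totrecvcw[v]\bigr)$, which is $0$ by the previous step.

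Finally, since the total is a sum of nonnegative per-edge terms that vanishes, each term is individually $0$; hence no channel and no incoming queue holds a \CW pulse. As \Cref{alg:nonterminating_le} only ever sends \CW pulses (no node calls \SendCCW{}), there are likewise no \CCW pulses anywhere. Thus no pulse is in transit, which is exactly quiescence; and since the nodes are event-driven, with nothing left to trigger them they will never fire again, so the configuration is moreover stable.

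The only delicate point I expect is the bookkeeping in the asynchronous setting. I must read the hypothesis as describing a global configuration in which each node's counters already satisfy the invariants of \Cref{lem:le_invariants} (i.e., no node is caught mid-event, so nothing is in the process of being sent but uncounted), and I must correctly match each \CW pulse emitted by $v$ with the receive counter $\totrecvcw$ of its \CW neighbor, despite the fact that such a pulse arrives at that neighbor's \CCW port. Once these identifications are pinned down, the nonnegativity of the per-edge counts together with their vanishing sum does all the work.
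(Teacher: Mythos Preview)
Your argument is correct and follows the same approach as the paper: invoke \Cref{lem:le_invariants:prop:two} to get $\totsentcw[v]=\totrecvcw[v]$ at every node, then observe that the global totals of sent and received pulses coincide, so nothing is in transit. The paper's version is more terse (it jumps directly from equal totals to quiescence without the per-edge nonnegativity step or the explicit remark about \CCW pulses), but the substance is identical.
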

\begin{proof}
    By \Cref{lem:le_invariants:prop:two}, we get that
    $\totsentcw[v] = \totrecvcw[v]$ holds for all $v$; hence, the total number of sent pulses is equal to
    the total number of received pulses. 
    In particular,
    no pulses
    are in transit (sent but not received).
\end{proof}

In fact, the converse is also true; that is, a necessary condition for quiescence
is that $\totrecvcw[v] \geq \ID{v}$ holds at every node $v$ and becomes a relay.

\begin{lemma}\label{lem:quiescence_to_sat}
    If the network is in quiescence, then
    $\totrecvcw[v] \geq \ID{v}$ holds at every node $v$.
\end{lemma}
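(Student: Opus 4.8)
The plan is to prove the statement through a simple global counting argument on the clockwise pulses. The first step is to extract what quiescence gives us at the level of the counters. Since quiescence means that no pulse is in transit, every clockwise pulse that was sent has by now been received; as each such pulse is sent by exactly one node and received by exactly one node (its \CW neighbor), the global totals must agree, giving the identity $\sum_{v\in V}\totsentcw[v] = \sum_{v\in V}\totrecvcw[v]$. This is the same observation already used in the forward direction (\Cref{lem:sat_to_quiescence}), now read globally rather than node by node.

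The second step is to feed the per-node invariants of \Cref{lem:le_invariants} into this identity. These invariants are stated for the end of a loop iteration, and in quiescence every node sits idle having just completed its last iteration, so they describe the counters in the quiescent configuration. For a fixed node $v$, the invariants tell us exactly that $\totsentcw[v]-\totrecvcw[v]$ equals $1$ when $\totrecvcw[v] < \ID{v}$ (by \Cref{lem:le_invariants:prop:one}) and equals $0$ when $\totrecvcw[v]\geq\ID{v}$ (by \Cref{lem:le_invariants:prop:two}). In other words, this per-node surplus is an indicator of whether $v$ has not yet reached its ID threshold.

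The third step then closes the argument: summing the surplus over all nodes, the left-hand side becomes $\sum_{v\in V}\totsentcw[v]-\sum_{v\in V}\totrecvcw[v]$, which is $0$ by the quiescence identity, while the right-hand side counts precisely the nodes $v$ with $\totrecvcw[v] < \ID{v}$. Hence there are no such nodes, i.e.\ $\totrecvcw[v]\geq\ID{v}$ holds everywhere, as claimed.

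I expect the counting to be entirely routine once the two structural inputs are in place, so the only points requiring care are exactly those inputs: justifying that quiescence equates the total number of sent and received \CW pulses (which rests on the facts that each pulse is sent and received once and that none remain in transit), and justifying that the invariants of \Cref{lem:le_invariants}, phrased for the end of an iteration, genuinely apply to the quiescent configuration. Neither is deep, but both should be stated explicitly rather than assumed.
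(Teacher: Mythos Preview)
Your proposal is correct and takes essentially the same approach as the paper: both combine the quiescence identity $\sum_v \totsentcw[v] = \sum_v \totrecvcw[v]$ with the per-node invariants of \Cref{lem:le_invariants} to conclude that no node can still have $\totrecvcw[v] < \ID{v}$. The paper phrases this as a short contradiction (a single bad node forces $\sum_v \totsentcw[v] \geq \sum_v \totrecvcw[v] + 1$), whereas you sum the per-node surplus directly as an indicator, but the underlying argument is the same.
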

\begin{proof}
    If there is quiescence, then $\sum_v{\totsentcw[v]} = \sum_v{\totrecvcw[v]}$.
    
    By \Cref{lem:le_invariants}, every node $v$ has $\totsentcw[v] \geq \totrecvcw[v]$.
    Assuming there is some bad node $b$ with $\totrecvcw[b] < \ID{b}$, then by \Cref{lem:le_invariants:prop:one},
    $\totsentcw[b] = \totrecvcw[b] + 1$. Therefore,
    $\sum_v{\totsentcw[v]} =  \sum_{v \neq b} {\totsentcw[v]} + \totsentcw[b] \geq \sum_v{\totrecvcw[v]} + 1$, so this cannot occur.
\end{proof}

\begin{corollary}\label{cor:quiescence_convergence}
    There is quiescence at some point of time
    if and only if each node $v$ has $\totrecvcw[v] \geq \ID{v}$ at that point of time.
\end{corollary}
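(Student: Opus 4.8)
The plan is to observe that this corollary is nothing more than the conjunction of the two immediately preceding lemmas, so the proof amounts to assembling the two implications into a single biconditional. Concretely, I would prove the two directions separately and cite the relevant lemma for each.

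For the forward direction, assume the network is in quiescence at the point of time in question. Then \Cref{lem:quiescence_to_sat} directly yields that $\totrecvcw[v] \geq \ID{v}$ holds at every node $v$. For the converse direction, assume that $\totrecvcw[v] \geq \ID{v}$ holds at every node $v$. Then \Cref{lem:sat_to_quiescence} directly yields that the network is in quiescence. Combining these two implications gives the claimed equivalence, and the proof is essentially a one-line citation of each lemma.

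The one point that warrants a moment of care is the phrase \emph{at that point of time}: both lemmas implicitly reason about a consistent global snapshot in which the per-node invariants of \Cref{lem:le_invariants} hold, i.e., a snapshot taken between atomic node actions. Since the nodes are event-driven and modify their counters only within an iteration of the main loop, any snapshot in which we evaluate quiescence (no pulse in transit and no node poised to send) is such a consistent configuration, so both lemmas apply verbatim. I would note this briefly to make clear that the two directions are being applied to the same well-defined configuration.

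I do not expect any genuine obstacle here: all of the real work has already been carried out in establishing \Cref{lem:le_invariants} and, through it, the counting arguments of \Cref{lem:sat_to_quiescence} and \Cref{lem:quiescence_to_sat}. The corollary merely packages these results, and its value is conceptual rather than technical, since it furnishes an exact characterization of quiescence purely in terms of the local counters, which is the form most convenient for designing and analyzing the terminating algorithm later on.
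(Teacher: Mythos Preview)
Your proposal is correct and matches the paper's own proof, which is literally a one-line citation of \Cref{lem:sat_to_quiescence,lem:quiescence_to_sat}. Your additional remark about the consistency of the snapshot is a reasonable sanity check but not something the paper bothers to spell out.
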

\begin{proof}
    The corollary follows directly from \Cref{lem:sat_to_quiescence,lem:quiescence_to_sat}.
\end{proof}

Another equivalent statement to the ones in \Cref{cor:quiescence_convergence} is
that every node has sent and received exactly \maxID pulses.

\begin{lemma}\label{lem:le_equivalences}
    In any execution of \Cref{alg:nonterminating_le}, at any point of time, the following
    statements are equivalent:
    \begin{enumerate}
        \item The network is in quiescence,
        \item $\forall{v}: \totrecvcw[v] \geq \ID{v}$, and
        \item $\forall{v}: \totrecvcw[v] = \totsentcw[v] = \maxID$.
    \end{enumerate}
\end{lemma}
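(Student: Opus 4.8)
The plan is to prove \Cref{lem:le_equivalences} by leveraging the lemmas already established rather than reproving anything from scratch. The three statements should form a cycle of implications, and the cleanest route is to first observe that the equivalence of (1) and (2) is \emph{exactly} \Cref{cor:quiescence_convergence}, so no work is needed there. It then suffices to show that (3) is equivalent to the conjunction of (1) and (2); I would prove (3)$\Rightarrow$(2) and then ($(1)\wedge(2)$)$\Rightarrow$(3), which closes the cycle since (1) and (2) are already known to be interchangeable.

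First I would handle the easy direction (3)$\Rightarrow$(2). If $\totrecvcw[v]=\maxID$ for every node $v$, then in particular $\totrecvcw[v]=\maxID\geq\ID{v}$ for every $v$, since $\maxID$ is by definition the maximum ID in the network. This is immediate and requires only the definition of $\maxID$.

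The substantive direction is ($(1)\wedge(2)$)$\Rightarrow$(3). Assume the network is in quiescence and $\totrecvcw[v]\geq\ID{v}$ holds everywhere. By \Cref{lem:le_invariants:prop:two}, quiescence together with statement (2) gives $\totsentcw[v]=\totrecvcw[v]$ at every node, so it only remains to pin down the common value to be exactly $\maxID$. Here is where I would use a counting argument: in quiescence, the number of pulses each node received from its \CW-predecessor equals the number that predecessor sent, so $\totrecvcw[v]=\totsentcw[u]$ where $u$ is the \CW-predecessor of $v$. Combined with $\totsentcw[u]=\totrecvcw[u]$, this forces all the counter values $\totrecvcw[v]$ around the ring to be equal to a single common value, call it $c$. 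Since statement (2) gives $c=\totrecvcw[v]\geq\ID{v}$ for every $v$, in particular $c\geq\ID{\vmax}=\maxID$. For the reverse inequality $c\leq\maxID$, I would invoke \Cref{lem:leader_last} or the structure of the argument behind it: the node $\vmax$ stops relaying precisely at its $\maxID$-th received pulse, so once $\totrecvcw[\vmax]$ reaches $\maxID$ the node $\vmax$ has become a sink removing the final circulating pulse, and no further pulse can accumulate; more directly, the conservation argument shows the common value cannot exceed $\maxID$ because $\vmax$ absorbs rather than relays at exactly that threshold. Pinning $c=\maxID$ then yields (3).

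I expect the main obstacle to be the reverse inequality $c\leq\maxID$, i.e., arguing that the common counter value cannot overshoot $\maxID$. The equalities from quiescence readily give that all counters agree and are at least $\maxID$, but ruling out a strictly larger common value requires using the specific behavior of $\vmax$ as the unique node that never relays once it has received $\maxID$ pulses. The cleanest way to close this gap is to recall that exactly $n$ pulses are generated at initialization and each node that reaches its ID-threshold permanently removes one pulse from circulation; since $\vmax$ is the last to do so by \Cref{lem:leader_last}, at quiescence the total pulse budget has been fully consumed, which forces the common per-node throughput to be precisely $\maxID$. If a slicker argument is available, it would be to simply cite \Cref{lem:leader_last} to note that once \emph{all} nodes satisfy the threshold the circulation is empty, and the per-edge conservation then uniquely determines the value; I would verify that this does not secretly require an additional global-counting step before committing to it.
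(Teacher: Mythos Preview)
Your overall structure matches the paper's: (1)$\Leftrightarrow$(2) via \Cref{cor:quiescence_convergence}, (3)$\Rightarrow$(2) trivially, and the hard direction is pinning the common counter value at exactly~$\maxID$. The ring-propagation step (quiescence gives $\totrecvcw[v]=\totsentcw[u]$ on every edge, \Cref{lem:le_invariants:prop:two} gives $\totsentcw[u]=\totrecvcw[u]$, so all counters coincide) is fine and is also what the paper does.

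The gap is exactly where you flagged it, but your proposed justifications for $c\leq\maxID$ do not work. You write that ``$\vmax$ has become a sink removing the final circulating pulse, and no further pulse can accumulate,'' and that it ``absorbs rather than relays at exactly that threshold.'' That is a misreading of \Cref{alg:nonterminating_le}: the node~$\vmax$ swallows only the single pulse for which $\totrecvcw=\ID{\vmax}$ and would happily relay any later one. So there is no permanent sink, and the conservation/budget argument you sketch (that $n$ pulses are generated and $n$ are absorbed) only re-proves quiescence; it does not by itself bound the common value~$c$.

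The paper closes this gap differently, and it is worth seeing why their use of \Cref{lem:leader_last} is sharper than what you propose. Rather than bounding~$c$ from above, they pin it exactly: by \Cref{lem:leader_last}, $\vmax$ is the \emph{last} node to satisfy $\totrecvcw[v]\geq\ID{v}$, so at the first moment (2) holds globally, $\vmax$ has just crossed its threshold and therefore $\totrecvcw[\vmax]=\maxID$ on the nose. At that same moment (1) holds by the already-established equivalence, hence no counter ever changes again. Now the ring-propagation step forces every counter to equal this one known value~$\maxID$, and (3) follows for that moment and for all later times. In short: use \Cref{lem:leader_last} to fix $\totrecvcw[\vmax]=\maxID$ at the first crossing, then propagate; do not try to bound an abstract common value~$c$ after the fact.
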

\begin{proof}
    The first two statements are equivalent by \Cref{cor:quiescence_convergence}.
    
    Now, assuming $\totrecvcw[v] = \maxID$ holds for every $v$, then
    so does $\totrecvcw[v] \geq \ID{v}$. Conversely, assume
    $\forall{v}: \totrecvcw[v] \geq \ID{v}$.
    By \Cref{lem:leader_last}, the node $\vmax$ with the largest ID is the last
    to satisfy that inequality. After the iteration where this occurs
    for the first time, we have
    \[
    \totrecvcw[\vmax] = \ID{\vmax} = \maxID.
    \]
    By the first equivalence, there is quiescence, and
    no more pulses are sent or received. Any pulses that have been sent by a node $u$ have
    been received by its neighbor $v$, that is, over all \CW edges~$(u, v)$, it holds that
    \(
    \totsentcw[u] = \totrecvcw[v]\text{.}
    \)
    
    Also, by \Cref{lem:le_invariants:prop:two}, every node $v$ has
    \(
    \totsentcw[v] = \totrecvcw[v]\text{.}
    \)
    As the network forms a ring, combining those equations yields that
    $\totrecvcw[v] = \totsentcw[v] = \maxID$ holds for every $v$.
\end{proof}

Given the above properties, we are ready to prove that certain interesting events occur in every execution
of \Cref{alg:nonterminating_le}.

First, we show that the inequality $\totrecvcw[v] \geq \ID{v}$ is eventually satisfied
by every $v$. Due to the equivalences given by \Cref{lem:le_equivalences},
this leads to a setting where quiescence is achieved, and every node has sent
and received exactly \maxID pulses.

\begin{lemma}\label{lem:le_convergence}
In any execution of \Cref{alg:nonterminating_le},
for every node $v$, there is some iteration,
where $\totrecvcw[v] \geq \ID{v}$ holds.
\end{lemma}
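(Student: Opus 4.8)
The plan is to use the number of pulses currently in transit as a potential function, show it is non-increasing, and argue that it must reach $0$; at that moment every node satisfies $\totrecvcw[v]\geq\ID{v}$, which is exactly the claim.

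First I would observe that the invariants of \Cref{lem:le_invariants} pin down the number of in-transit pulses exactly. At any point in the execution (i.e., after any finite prefix of events), every pulse sent by a node is eventually received by its unique \CW successor, so the number of pulses still in transit equals $\sum_v(\totsentcw[v]-\totrecvcw[v])$. By \Cref{lem:le_invariants}, each summand is $1$ when $\totrecvcw[v]<\ID{v}$ and $0$ otherwise. Hence the number of pulses in transit equals the number of \emph{unsatisfied} nodes, that is, nodes with $\totrecvcw[v]<\ID{v}$; call this count $U$. Since all counters are monotonically non-decreasing, once $\totrecvcw[v]\geq\ID{v}$ holds it holds forever, so $U$ is non-increasing and converges to some $m\geq 0$.

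The heart of the argument is to show $m=0$ by contradiction. Suppose $m\geq 1$, and fix a point after which $U$ has stabilized at $m$; from then on the set $B$ of unsatisfied nodes is fixed and nonempty, and exactly $m\geq 1$ pulses remain perpetually in transit. After stabilization every delivered pulse is \emph{relayed}: a satisfied node ($\totrecvcw[v]\geq\ID{v}$) increments its counter past $\ID{v}$ and therefore fails the test on \Cref{line:no_send_cond}, while an unsatisfied node that reached $\totrecvcw[v]=\ID{v}$ would become satisfied and decrease $U$, contradicting stabilization---so it too must relay. Finite channel delays forbid a pulse from resting in a channel forever, so with $m\geq 1$ pulses always present, infinitely many deliveries must occur. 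As there are only $n$ nodes, some node $w$ receives infinitely often; since every such delivery is relayed, the \CW successor of $w$ also receives infinitely often, and propagating this around the ring shows that \emph{every} node, including each $b\in B$, receives infinitely many pulses. Then $\totrecvcw[b]\to\infty\geq\ID{b}$, contradicting $b$ being unsatisfied forever.

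Therefore $m=0$: eventually $U=0$, meaning $\totrecvcw[v]\geq\ID{v}$ holds at every node, which is exactly the claim (and, by \Cref{cor:quiescence_convergence}, coincides with quiescence). The main obstacle I anticipate is the asynchronous bookkeeping: making precise what a ``point in the execution'' and an ``in-transit pulse'' are, so that the identity between in-transit pulses and $U$, as well as the ``infinitely many deliveries'' step, are rigorous despite unbounded but finite delays. Once the relaying dichotomy and the finite-delay delivery guarantee are stated carefully, the propagation around the ring is routine.
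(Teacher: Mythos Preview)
Your proposal is correct and follows essentially the same approach as the paper: both identify the number of in-transit pulses with the count of unsatisfied nodes via \Cref{lem:le_invariants}, argue this count is non-increasing, assume it stabilizes at a positive value, and derive a contradiction. The only difference is in the final step: the paper tracks the gaps $\Delta_b=\ID{b}-\totrecvcw[b]$ for $b\in B$ and argues directly that pulses must keep reaching~$B$ (since nodes outside~$B$ are pure relays) until some $\Delta_b$ hits~$0$, whereas you take the slightly less direct route of first showing some node receives infinitely often and then propagating this around the ring. Both are valid; the paper's version avoids the ``infinitely many deliveries'' bookkeeping you flagged as the main obstacle.
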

\begin{proof} 
    Let us track the evolution through time of $B$, the set of nodes that have
    not met the condition yet; that is, at every point of time,
    for all $b \in B$, $\totrecvcw[b] < \ID{v}$.
    
    Initially, $B$ contains all nodes, each of which 
    are removed upon meeting the condition. Since a removed node never enters $B$
    again, $|B|$ is monotonically decreasing.

    Consider the point of time where $|B|$ reaches a minimum, so $B$ remains fixed.
    If $|B| = 0$, then there is nothing to prove, so assume $|B| > 0$.
    At that time, maintain values $\Delta_b :=  \ID{b} - \totrecvcw[b] > 0$ for every
    $b \in B$.

    By \Cref{lem:le_invariants:prop:one}, for all $b \in B$, we have that $\totsentcw[b] = \totrecvcw[b] + 1$ always holds.
    Also, from that point on, for every
    $v \notin B$, we have $\totsentcw[v] = \totrecvcw[v]$ by \Cref{lem:le_invariants:prop:two}.

    The number of pulses in transit\footnote{Including pulses that are in some node's queue, but were not processed yet.} at any given time is the difference between the total number of sent and received
    pulses across all nodes. 
    Therefore, since
    $\sum_v{\totsentcw[v]} = \sum_v{\totrecvcw[v]} + |B|$, there are still $|B|$ pulses
    in the network. Since the nodes outside $B$ act as relays, they maintain the number of pulses in transit and, in particular, never remove a pulse from the network. 
    Eventually, some node $b \in B$ must receive a pulse,
    which decreases the difference $\Delta_b$ by one. 
    If $\Delta_b = 0$, then $b$ is removed from $B$, and $|B|$ decreases beyond its
    minimum, a contradiction.
    Otherwise, $b$ forwards a pulse and the number of pulses in transit
    remains~$|B|$, so we
    can re-apply this argument.
    At some point, $\Delta_b$ reaches~$0$
    for some $b \in B$. Thus, 
    $\totrecvcw[b] \geq \ID{b}$ holds for~$b$, and $b$ is removed from~$B$,
    again, a contradiction.
\end{proof}

A result of \Cref{lem:le_convergence}, is that, eventually, all nodes
send and receive exactly $\maxID$ pulses, and no further activity occurs on
the network.

\begin{corollary}\label{cor:sent_recv_maxid}
    In any execution of \Cref{alg:nonterminating_le}, at some point,
    every node has sent and received exactly $\maxID$ pulses, and the network reached quiescence.
\end{corollary}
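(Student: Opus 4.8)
The plan is to derive this corollary directly from the two results immediately preceding it, namely the convergence guarantee of \Cref{lem:le_convergence} and the equivalences of \Cref{lem:le_equivalences}. Note that \Cref{lem:le_convergence} only provides a \emph{per-node} statement: for each individual node $v$ there exists \emph{some} iteration at which $\totrecvcw[v] \geq \ID{v}$ holds. The corollary, by contrast, asserts a \emph{simultaneous} property of all $n$ nodes at once. Hence the first and only nontrivial task is to upgrade these separate per-node guarantees into a single global moment at which the inequality holds everywhere.

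The key observation enabling this upgrade is monotonicity: the counter $\totrecvcw[v]$ never decreases, since \RecvCW{} only ever increments it and nothing in \Cref{alg:nonterminating_le} resets it. Consequently, once a node $v$ attains $\totrecvcw[v] \geq \ID{v}$, this relation persists for the remainder of the execution. I would therefore let $t_v$ denote the first point of time at which $\totrecvcw[v] \geq \ID{v}$ holds, which exists for every $v$ by \Cref{lem:le_convergence}, and then consider the latest such moment $t^\ast := \max_{v \in V} t_v$ over the finitely many nodes. By the persistence just noted, at time $t^\ast$ every node $v$ simultaneously satisfies $\totrecvcw[v] \geq \ID{v}$.

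This is precisely statement~(2) of \Cref{lem:le_equivalences}, so the conclusion follows immediately by invoking that equivalence at time $t^\ast$: statement~(2) is equivalent to statement~(1), that the network is in quiescence, and to statement~(3), that $\totrecvcw[v] = \totsentcw[v] = \maxID$ holds at every node. Together these two facts are exactly the assertions of the corollary.

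I expect the only real obstacle to be the alignment step of the second paragraph, that is, arguing carefully that the existential per-node statements of \Cref{lem:le_convergence} can be merged into one global point of time. This is justified purely by the monotonicity of the receive counters together with the finiteness of $V$, and in particular requires no appeal to synchrony or to timing assumptions, which is essential in the asynchronous model considered here. Everything after this alignment is a direct application of the already-established equivalences.
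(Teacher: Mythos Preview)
Your proposal is correct and follows essentially the same approach as the paper: invoke \Cref{lem:le_convergence} to get $\totrecvcw[v] \geq \ID{v}$ for every node, then apply \Cref{lem:le_equivalences}. The paper's proof is two sentences and leaves the per-node-to-global alignment implicit, whereas you spell out the monotonicity and finiteness argument explicitly; this is a reasonable elaboration of a step the paper glosses over.
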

\begin{proof}
    By \Cref{lem:le_convergence}, there is an iteration, after which
    $\totrecvcw[v] \geq \ID{v}$ holds for all nodes $v$. The statement
    then follows directly from \Cref{lem:le_equivalences}.
\end{proof}

We also have the following trivial upper bound as a direct corollary.

\begin{corollary}\label{cor:recv_bound}
    In any execution of \Cref{alg:nonterminating_le}, for every node $v$, at every point of time
    $\totrecvcw[v] \leq \maxID$.
\end{corollary}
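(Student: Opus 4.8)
The plan is to bound $\totrecvcw[v]$ by combining the monotone convergence already established in \Cref{lem:le_convergence} with the upper invariant that quiescence coincides with every node having received exactly $\maxID$ pulses. Concretely, \Cref{cor:sent_recv_maxid} tells us that in any execution there is a final point at which every node has received exactly $\maxID$ pulses and the network is quiescent; the task is to argue that no node ever \emph{overshoots} this value at any earlier point in time.

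First I would observe that $\totrecvcw[v]$ is a monotonically nondecreasing quantity: by inspection of \Cref{alg:nonterminating_le}, the counter is only ever incremented (by the \RecvCW{} method) and never reset or decremented. Hence it suffices to show that the counter cannot strictly exceed $\maxID$ at the moment it is incremented past $\maxID$ for the first time. Suppose toward contradiction that some node $v$ reaches $\totrecvcw[v] = \maxID + 1$. I would then count the total number of pulses ever in circulation: since each of the $n$ nodes emits exactly one initial pulse and thereafter acts as a relay (forwarding received pulses) except on the single iteration where $\totrecvcw = \ID{v}$, at which it absorbs a pulse, the total number of pulses that can ever be received across all nodes is bounded by the $n$ injected pulses plus the forwarding, and each node with $\ID{v} \le \maxID$ removes exactly one pulse from circulation when it first satisfies the condition.

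The cleanest route, and the one I would actually carry out, is to invoke the invariants of \Cref{lem:le_invariants} directly. At any point in time, consider the edge delivering pulses to $v$, coming from its \CW predecessor $u$. The number of pulses $v$ has received is at most the number $u$ has sent, i.e.\ $\totrecvcw[v] \le \totsentcw[u]$. By \Cref{lem:le_invariants}, $\totsentcw[u] \le \totrecvcw[u] + 1$ for every node $u$ at every iteration's end. Chaining this around the ring and using that the node $\vmax$ with the maximal ID can never have sent more than it received plus one, together with the fact established in \Cref{lem:leader_last} that $\vmax$ is the last node to become saturated, I would derive that if $v$ had received $\maxID + 1$ pulses then some node must have sent strictly more than $\maxID$ pulses, forcing $\totrecvcw$ somewhere to exceed $\maxID$ as well—ultimately contradicting that the total pulse budget is conserved and caps out at $\maxID$ per node by \Cref{lem:le_equivalences}.

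The main obstacle I anticipate is handling the asynchrony carefully: \Cref{lem:le_equivalences} characterizes the quiescent \emph{limit}, but \Cref{cor:recv_bound} asserts the bound at \emph{every} intermediate point of time, when pulses may still be in transit and the per-node sent/received equalities need not hold. The subtlety is that the invariants of \Cref{lem:le_invariants} hold only at iteration boundaries of a given node, so I must phrase the edge inequality $\totrecvcw[v] \le \totsentcw[u]$ in a way valid between arbitrary global snapshots and then reconcile it with the node-local invariants. I expect the slickest proof avoids this bookkeeping altogether by arguing directly: since $\totrecvcw[v]$ is monotone and, by \Cref{lem:le_convergence} together with \Cref{lem:le_equivalences}, stabilizes at exactly $\maxID$, and since $v$ absorbs (rather than forwards) precisely the pulse that brings its count to $\ID{v} \le \maxID$, no further pulse beyond the $\maxID$-th can arrive without violating conservation of the pulse count in the ring.
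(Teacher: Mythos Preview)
Your final sentence is exactly the paper's proof, and it is the whole proof: \Cref{cor:sent_recv_maxid} says that at some point the network is quiescent with $\totrecvcw[v] = \maxID$ for every $v$; after quiescence no further pulses arrive, so $\totrecvcw[v]$ stays at $\maxID$; and since $\totrecvcw[v]$ is monotone nondecreasing (initialized to $0$), every earlier value is at most $\maxID$. That is all the paper says, and all that is needed.

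Everything preceding that sentence in your proposal---the contradiction setup, the edge-chaining $\totrecvcw[v] \le \totsentcw[u] \le \totrecvcw[u]+1$ around the ring via \Cref{lem:le_invariants}, the worry about reconciling global snapshots with node-local iteration boundaries---is unnecessary machinery for this corollary. You correctly diagnosed the asynchrony obstacle that those approaches would face, but rather than working through it you should simply drop those approaches: the monotonicity-plus-limit argument sidesteps the obstacle entirely, because it never needs to compare counters across different nodes at a common instant. In short, you found the right proof; just present it directly instead of as the last of three candidate routes.
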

\begin{proof}
    Immediate from \Cref{cor:sent_recv_maxid} and the monotonicity of
    $\totrecvcw[v]$, which is initially~$0$.
\end{proof}

\subsection{Leader Election With Quiescent Termination}
\label{sec:le-quiescent-termination}

Although \Cref{alg:nonterminating_le} is a quiescently stabilizing algorithm for leader election, more ideas are needed in order to achieve this with quiescent termination.
The main idea is to utilize the \CCW channel to notify all nodes when the leader is elected.
The immediate approach would be to leverage the event $\totrecvcw[v] = \ID{v}$, which signifies the successful election once it happens at the node with maximal ID. However, this is impossible, since the same event also occurs for every other node before the election process has finished.  

To be able to detect termination, we require an event that occurs \textit{uniquely} for the leader, and never for other nodes.
If we managed to run \cref{alg:nonterminating_le} over the \CCW channel after a full execution of the same algorithm over
the \CW channel, then, by symmetry, all nodes would eventually
receive \maxID many \CCW pulses.
In this scenario, the event 
$\totrecvcw[v] = \ID{v} = \totrecvccw[v]$ would, in fact, be unique to the node with the largest ID. Indeed, an initial full execution of \cref{alg:nonterminating_le} over the \CW channel guarantees that all other nodes have $\totrecvcw[v] > \ID{v}$ prior to exchanging \CCW pulses, so the above event occurs uniquely at the leader and could be used as the trigger for termination.

The main remaining difficulty is that we \emph{cannot} start the \CCW algorithm 
after the \CW one is done since neither the leader nor any other node can infer that the \CW algorithm has stabilized purely from 
the number of \CW pulses received.
We overcome this obstacle by running both algorithms in parallel, ensuring that the \CCW one lags behind the \CW one. By subtly prioritizing the execution of the \CW algorithm over that of the \CCW one, we enforce that once
$ \totrecvccw[v] = \ID{v} $ occurs for some \nonleader node, we already have $\totrecvcw[v] > \ID{v}$. Then, the only node $v$ satisfying $\totrecvcw[v] = \ID{v} = \totrecvccw[v]$ is the elected leader.
It is the uniqueness of all IDs, crucially including~\maxID, that enables this approach.

\begin{algorithm}[ht]   
\caption{Quiescently Terminating Leader Election for Node $v$}\label{alg:terminating_le}
    \begin{algorithmic}[1]
        \State{\SendCW{}}
        \Repeat
            \If{\RecvCW{} returns 1} 
            \label{alg:LE:CW-begin}
            \Comment{Run \Cref{alg:nonterminating_le} over the \CW channel}
                \If{$\totrecvcw = \ID{v}$} \label{alg:LE:if-totCW=ID}
                \State    {$\state \gets \leader$}
                \Else
                    \State{$\state \gets \nonleader$}
                    \State{\SendCW{}}
                \EndIf
            \EndIf  \label{alg:LE:CW-end}

            \Statex
            
            \If{$\totrecvcw \geq \ID{v}$}
            \label{alg:LE:CCW-begin}
            \Comment{Run \Cref{alg:nonterminating_le} over the \CCW channel, once $\totrecvcw \geq \ID{v}$}
                \IIf{$\totsentccw = 0$} {\SendCCW{}}
                \EndIIf
                \If{\RecvCCW{} returns 1}
                    \If{$\totrecvccw = \ID{v}$}
                        \State   {\textbf{pass}}
                    \Else
                        \State \SendCCW{}
                    \EndIf
                \EndIf
            \EndIf
            \label{alg:LE:CCW-end}
            
            \Statex
            
            \If{$\totrecvcw = \ID{v} = \totrecvccw$}
            \Comment{Initiate a termination pulse}
            \label{alg:LE:term-begin}
                \State{\SendCCW{}}  \label{line:terminating_pulse_}
                
                \While {\RecvCCW{} returns 0} \State {\textbf{pass}}\label{line:leader_waits} \Comment{Wait for return of termination pulse}
                \EndWhile
            
            \EndIf  
            \label{alg:LE:term-end}
        \Until{$\totrecvccw > \totrecvcw$} \label{line:terminating_condition_}
        \State \textbf{output} $\state$
    \end{algorithmic}
\end{algorithm}

The complete algorithm with quiescent termination is given in  \Cref{alg:terminating_le}.
It is composed of two instances of \Cref{alg:nonterminating_le}: one over the \CW channel (\crefrange{alg:LE:CW-begin}{alg:LE:CW-end}) and one over the \CCW channel (\crefrange{alg:LE:CCW-begin}{alg:LE:CCW-end}).
The \CW instance starts, as before, upon initialization, while the \CCW instance starts at node~$v$ once it reaches the
$\totrecvcw[v] = \ID{v}$ event in the \CW instance. This guarantees that the \CCW instance lags behind the \CW one. 
Finally, \crefrange{alg:LE:term-begin}{alg:LE:term-end} identify the event $\totrecvcw[v] = \ID{v} = \totrecvccw[v]$ that uniquely occurs at the leader. At that point of time, all the nodes have $\totrecvcw$ and $\totrecvccw$ set to~$\maxID$, and the leader is the last node for which this event occurs. This triggers the termination process: the leader sends a single \CCW pulse. Any node receiving this extra pulse sees, for the first time, $\totrecvccw > \totrecvcw$, forwards the pulse and terminates 
(\cref{line:terminating_condition_}). The extra pulse is forwarded until it returns to the leader, causing it to terminate without forwarding the pulse.
The analysis of \cref{alg:terminating_le} is deferred to \cref{sec:deferred-analysis-terminating-le} due to the space constraints.

\section{Leader Election in Non-Oriented Rings}
\label{sec:LE-non-oriented}

A natural follow-up question to the above leader election algorithm in oriented rings is whether the same holds for \emph{non-oriented} rings. 
In this setting, nodes do not possess a predefined \CW channel and \CCW channel anymore. Instead, they have two ports, \Port0 and \Port1, connecting them to their two neighbors in an arbitrary order.

The straightforward approach would be to first orient the ring with a quiescently terminating algorithm and then use our leader election algorithm from Section~\ref{sec:LE-oriented}. 
Since a ring orientation is easily computable once a leader has been elected with quiescent termination, orientation and leader election are essentially equivalent tasks from the perspective of quiescently terminating algorithms.

In this section, we instead present a quiescently \emph{stabilizing} algorithm for non-oriented rings, which both elects a leader and orients the ring. Recall that the difference from quiescent termination is that the nodes do not need to terminate explicitly; it suffices for all pulse activity to cease with the correctly computed output still present. Recall our main theorem for this part.

\orientRing*

We emphasize again that this algorithm does not terminate in the usual sense. Instead, its success is defined as reaching quiescence while guaranteeing that at that time only a single node has set its internal state to \leader, while all other nodes have set their state to  \nonleader.
Additionally, we require that nodes achieve a consistent ring orientation as follows: each node has to label exactly one of its two ports as the port leading to the \CW neighbor such that starting at some node and repeatedly moving to node connected to \CW port lets us pass through all edges in the ring. 

For ease of exposure, we first present an algorithm of slightly worse complexity but whose analysis is almost fully captured by results from earlier sections. We then improve its complexity by proving an additional property about \cref{alg:nonterminating_le}, our main building block for our algorithm for non-oriented rings (\cref{alg:ring_orientation}). Namely, we show that executing \cref{alg:nonterminating_le} on a ring with non-unique IDs essentially achieves the same guarantees as when used on a ring with unique IDs (\cref{lem:nt-le-duplicate-ids}). This observation allows us to halve the complexity of \cref{alg:ring_orientation}, as we shall see, and also has implications for solving the same tasks on anonymous rings with access to randomness.
As this extension to anonymous rings is a minor effort and follows from standard techniques, we defer it to \cref{sec:anonymousRings}.

\subparagraph*{Algorithm overview.}
At a high level, \cref{alg:ring_orientation} consists of two parallels executions of 
\cref{alg:nonterminating_le}, each one using each channel in the ring in a single direction. For an intuition of how that is possible, consider a setting where the network has a single pulse in transit. Suppose that all nodes execute the same algorithm that sends a pulse on \Port1 whenever one is received on \Port0, and vice versa. Forwarding the pulse in this manner has it travel the entire ring, since every time a pulse is received by a given node $v$ from one of its neighbors, it is sent to its other neighbor. If adding another pulse going in the other direction to the network, the two pulses independently travel around the ring in opposite directions. The nodes can thus effectively run two algorithms in parallel without them interfering with one another, as long as one only works with clockwise pulses and the other with counterclockwise pulses. As \cref{alg:nonterminating_le} only uses pulses going in one direction, it satisfies this requirement. The major caveat is that the nodes cannot be certain which of the two algorithms they execute is working with clockwise pulses and which is working with counterclockwise pulses.

Our algorithm has essentially two parts: one in which the node reacts to pulses and possibly forwards them (Lines~\ref{line:nt-le-pulse-start} to \ref{line:nt-le-pulse-end}), and one in which it computes its output depending on the number of pulses it received from each port (Lines~\ref{line:nt-le-output-computing-start} to \ref{line:nt-le-output-computing-end}). From the point of view of analyzing how many pulses are eventually sent in the network, only the first part is relevant. It is the part that simulates two executions of \cref{alg:nonterminating_le}.
For the nodes to settle on a consistent orientation of the ring in the second part, we break symmetry between the two options by having strictly more pulses sent in one orientation of the ring than in the other.

We distinguish the two parallel executions of \cref{alg:nonterminating_le} by having each node $v$ pick two distinct virtual IDs for itself, $\IDsup{v}{0}$ and $\IDsup{v}{1}$ (\cref{line:nt-le-new-ids}). $\ID{v}^{(1)}$ affects how $v$ behaves regarding pulses received from its \Port0, and symmetrically for $\ID{v}^{(0)}$ and \Port1. While nodes do not know which of their two virtual IDs is used in the clockwise or counterclockwise execution of \cref{alg:nonterminating_le}, they use a distinct ID in both executions. 
Eventually, in each direction, the number of pulses received by each node stabilizes to the largest ID in that direction. The choice of IDs ensures that the two parallel executions have distinct largest IDs, so eventually all nodes see strictly more pulses being sent in one direction than the other. This allows nodes to agree on a common orientation of the ring, and elect as leader the node who was the source of the largest ID.

The nodes use the following methods for sending and receiving pulses. Let $i\in\set{0,1}$.  
\begin{enumerate}
    \item \SendPort{i}{}: sends a pulse through \Port{i},
    \item \RecvPort{i}{}: check whether a pulse is waiting in the incoming queue of \Port{i}. If not, return~0. Otherwise, consume a single pulse from the queue and return~1.
\end{enumerate}

\begin{algorithm}[ht]\caption{Quiescently Stabilizing Leader Election on Non-Oriented Rings for Node $v$}
\label{alg:ring_orientation}
\begin{algorithmic}[1]
        \For{$i \in \set{0,1}$}
            \State{$\ID{v}^{(i)} \gets 2\cdot \ID{v} - 1 + i$}\label{line:nt-le-new-ids}
            \State{\SendPort{i}{}} \label{line:nt-le-first-pulse}
        \EndFor
        \While {\textsf{true}}
            \For{$i \in \set{0,1}$}\label{line:nt-le-pulse-start}
                \If {\RecvPort{1-i}{} returns 1 \textbf{and} $\totrecvport{1-i} \neq \ID{v}^{(i)}$} \label{line:nt-le-forward-test}
                        \State{\SendPort{i}{}} \Comment{Pulses received at one port are sent forward at the opposite one}\label{line:nt-le-forward-send}
                \EndIf
            \EndFor\label{line:nt-le-pulse-end}
            \If {$\max(\totrecvport{0},\totrecvport{1}) \geq \ID{v}^{(1)}$}\label{line:nt-le-output-computing-start}
                \If { $\totrecvport{0} = \ID{v}^{(1)}$ \textbf{and} $\totrecvport{1} < \ID{v}^{(1)}$}
                    \State{state $\gets$ \leader}
                \Else
                    \State{state $\gets$ \nonleader}
                \EndIf
                \If{$\totrecvport{0} > \totrecvport{1}$}\label{line:nt-le-orient}
                    \State{name $\Port0 \coloneqq \CCW$ and $\Port1 \coloneqq \CW$}
                    \Comment{I.e., $\Port0$ connects the \CCW neighbor}
                \Else
                    \State name $\Port0 \coloneqq \CW$ and $\Port1 \coloneqq \CCW$
                \EndIf
            \EndIf\label{line:nt-le-output-computing-end}
        \EndWhile
\end{algorithmic}
\end{algorithm}

\begin{proposition}
    \label{prop:suboptimal-nt-le}
    \Cref{alg:ring_orientation} elects a leader and consistently orients the ring in $n(4\cdot\maxID - 1)$ pulses. It achieves quiescence but does not terminate. 
\end{proposition}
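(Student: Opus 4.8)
\textbf{Proof plan for Proposition~\ref{prop:suboptimal-nt-le}.}
The plan is to reduce the analysis of \cref{alg:ring_orientation} to two independent instances of \cref{alg:nonterminating_le}, for which all the needed properties have already been established in \cref{sec:warmup}. First I would formalize the informal claim from the algorithm overview that the two directions do not interfere: a pulse received at \Port{1-i} triggers a send on \Port{i}, so the forwarding rule in \crefrange{line:nt-le-pulse-start}{line:nt-le-pulse-end} deterministically routes every pulse around the ring in a fixed direction that never changes. Concretely, fix a global clockwise orientation of the ring (which the nodes do not know). For each node $v$, exactly one of its two ports is its true \CW port; the pulses that this port emits form one circulating family, and the pulses emitted by the other port form the opposite-direction family. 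I would argue that the send/receive behavior governing the \CW-circulating pulses is \emph{exactly} that of \cref{alg:nonterminating_le} run with the virtual IDs assigned to the relevant ports, and symmetrically for the \CCW family. The key point is that the condition $\totrecvport{1-i} \neq \ID{v}^{(i)}$ on \cref{line:nt-le-forward-test} withholds precisely one pulse exactly when the per-direction received-count hits the virtual ID governing that direction, which is the defining behavior of \cref{alg:nonterminating_le}.

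Once this correspondence is set up, I would invoke \cref{cor:sent_recv_maxid} separately for each direction. Each of the two executions uses a set of $n$ virtual IDs; since the virtual IDs are $\ID{v}^{(0)} = 2\ID{v}-1$ and $\ID{v}^{(1)} = 2\ID{v}$, the largest virtual ID overall is $\ID{\vmax}^{(1)} = 2\maxID$, attained in whichever direction uses $\vmax$'s even ID, and the largest virtual ID in the other direction is $2\maxID - 1$, attained by $\vmax$'s odd ID. By \cref{cor:sent_recv_maxid} applied to each direction, every node eventually receives and sends exactly the maximal virtual ID of that direction, and both directions reach quiescence. Thus eventually one of $\totrecvport0, \totrecvport1$ stabilizes to $2\maxID$ and the other to $2\maxID - 1$ at every node, and no pulses remain in transit, giving quiescence for the whole network.

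For correctness of the output, I would examine \crefrange{line:nt-le-output-computing-start}{line:nt-le-output-computing-end} in the quiescent state. Since the two stabilized counts are $2\maxID$ and $2\maxID-1$, the guard $\max(\totrecvport0,\totrecvport1) \geq \ID{v}^{(1)} = 2\ID{v}$ is eventually satisfied at every node, so every node commits to an orientation and a leader/non-leader label. The orientation test $\totrecvport0 > \totrecvport1$ compares the two stabilized counts: because the direction carrying $2\maxID$ is globally consistent (it is the single direction in which the even-ID family circulates), every node names the same global direction \CW, yielding a consistent orientation. For the leader, the condition $\totrecvport0 = \ID{v}^{(1)}$ and $\totrecvport1 < \ID{v}^{(1)}$ (with the analysis needing me to check the symmetric role of the two ports carefully) holds exactly at the node whose virtual ID equals the maximal circulating value $2\maxID$ in the direction arriving at \Port0; I would argue via the uniqueness of the top virtual ID that this holds at exactly one node, namely $\vmax$, and nowhere else, so exactly one \leader is elected. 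Finally, for the message count, the two directions together send $n \cdot 2\maxID + n \cdot (2\maxID - 1) = n(4\maxID - 1)$ pulses by the per-direction total from \cref{cor:sent_recv_maxid}.

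The main obstacle I anticipate is the bookkeeping in the first paragraph: rigorously proving that the two port-families are genuinely independent and each simulates \cref{alg:nonterminating_le} with the correct virtual ID, despite the nodes' ignorance of the true orientation. In particular I must track which virtual ID ($\ID{v}^{(0)}$ or $\ID{v}^{(1)}$) governs each circulating family and confirm that the index arithmetic $1-i$ in \cref{line:nt-le-forward-test} matches the port a pulse leaves by, so that the withholding event lines up with the received-count in that direction rather than the other. Getting this port-to-direction matching exactly right — and hence confirming that the larger count $2\maxID$ indeed always sits on the globally \CW side at every node, which is what makes the orientation consistent — is the delicate part; the remaining steps are direct appeals to the already-proved lemmas of \cref{sec:warmup}.
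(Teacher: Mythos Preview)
Your proposal is correct and follows essentially the same approach as the paper: reduce \cref{alg:ring_orientation} to two non-interfering instances of \cref{alg:nonterminating_le}, invoke \cref{cor:sent_recv_maxid} in each direction to get stabilized per-node counts of $2\maxID$ and $2\maxID-1$, and read off quiescence, consistent orientation, unique leader, and the $n(4\maxID-1)$ pulse count. The paper carries out the ``delicate part'' you flag by explicitly defining clockwise as the direction leaving $\vmax$'s \Port1 and then renaming \SendPort{i}{}/\RecvPort{i}{} to \SendCW{}/\RecvCW{} per node according to its port alignment, which cleanly shows each direction executes exactly the code of \cref{alg:nonterminating_le}; your plan arrives at the same conclusion by the same mechanism.
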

\begin{proof}
    Consider $\vmax$, the node of largest ID.
    We define as clockwise the direction of a pulse sent from $\vmax$'s \Port1 and forwarded by all other nodes (i.e., sent from \Port{i} after arriving at \Port{1-i}). We call the ports sending such a pulse clockwise, and those receiving it counterclockwise.
    We show the our algorithm elects $\vmax$ as a leader and all nodes declare the correct port as clockwise.

    We show that the network eventually achieves quiescence and all nodes receive the same number of clockwise and counterclockwise pulses. Let us argue this for clockwise pulses; the property for counterclockwise pulses will follow by symmetry. For every node $v$ whose \Port1 connects it to its clockwise neighbor, let us rename \SendPort{1}{} to \SendCW{} and \RecvPort{0}{} to \RecvCW{} in its code. Let us also define $\IDcw{v} = \IDsup{v}{1}$ for such nodes. For the other nodes, which are connected to their clockwise neighbors through \Port0, rename \SendPort{0}{} to \SendCW{} and \RecvPort{1}{} to \RecvCW{} in their code, and let $\IDcw{v} = \IDsup{v}{0}$. We emphasize that this renaming is done purely for our analysis and is not an operation performed by the nodes, which are oblivious to what the clockwise direction is. Consider an execution of \cref{alg:ring_orientation}. Whenever the scheduler delivers a clockwise pulse, this pulse is read by a \RecvCW{} method, and if forwarded (which depends on $\IDcw{v}$), it is re-sent by a \SendCW{} method, according to our renaming. Other methods are never activated by clockwise pulses and never emit a clockwise pulse. As such, \cref{alg:ring_orientation} executes the exact same code on clockwise pulses as \cref{alg:nonterminating_le} and thus has the same guarantees as \cref{alg:nonterminating_le} regarding clockwise pulses. By \cref{cor:sent_recv_maxid}, this means that we achieve quiescence for clockwise pulses, with all nodes eventually having sent and received exactly the same number $\max_v \IDcw{v}$ of clockwise pulses. The same holds symmetrically for $\max_v \IDccw{v}$ pulses counterclockwise pulses per node, where $\IDccw{v}$ is defined similarly to $\IDcw{v}$.

    Since $\vmax$ picks as identifiers $2\cdot \maxID$ and $2\cdot \maxID-1$ for the two directions, we have $\max_v \IDcw{v} = 2\cdot \maxID$ and $\max_v \IDccw{v} = 2\cdot \maxID-1$. Hence, all nodes have sent and received $2\cdot \maxID$ clockwise pulses and $2\cdot \maxID-1$ counterclockwise pulses. This yields a bound of $n(4\cdot \maxID - 1)$ pulses, ensuring a consistent orientation according to the test on \cref{line:nt-le-orient}.
\end{proof}

\subparagraph*{Improving the message complexity.}
We now show how to improve the complexity of \cref{alg:ring_orientation} to $n  (2\cdot\maxID + 1)$. 
Since nodes send pulses according to their IDs,  \cref{line:nt-le-new-ids} effectively doubles the number of pulses sent by the algorithm. To avoid this doubling, one can generate the two IDs in a different manner; for instance,
$\ID{v}^{(1)} \gets \ID{v} + 1$, and $\ID{v}^{(0)} \gets \ID{v}$.
However, this leads to assigning the same ID to multiple nodes. We argue that \cref{alg:ring_orientation} works correctly even when IDs are not unique as long as the largest clockwise and counterclockwise IDs are different. 
To that goal, we need to re-analyze \cref{alg:nonterminating_le} in such case, which we do in the next two technical lemmas.

\begin{lemma}
\label{lem:nt-le-duplicate-ids}
     \Cref{cor:sent_recv_maxid} still holds if nodes run \cref{alg:nonterminating_le} with non-unique IDs. This includes the case in which multiple nodes $v$ have $\ID{v} = \maxID$.
\end{lemma}

Note that \cref{lem:nt-le-duplicate-ids} is about \cref{alg:nonterminating_le}, the implicit main subroutine of \cref{alg:ring_orientation}, not \cref{alg:ring_orientation} itself. Most elements of the proof of \cref{cor:sent_recv_maxid} make no reference to the node of largest $\ID{}$. Most importantly, the invariants of \cref{lem:le_invariants} (that $\totsentcw[v] = \totrecvcw[v] + 1$ while $\totrecvcw[v] < \ID{v}$, and $\totsentcw[v] = \totrecvcw[v]$ once $\totrecvcw[v] \geq \ID{v}$) are consequences of how each node reacts to the pulses it receives, and changing the distribution of $\ID{}$s does not change that. \Cref{lem:leader_last}, however, makes an explicit reference to a node of largest $\ID{}$, which requires us to change the argument somewhat. \Cref{lem:multileader-last}, which we shall now prove, generalizes \cref{lem:leader_last} to the setting with non-unique $\ID{}$s.

\begin{lemma}
    \label{lem:multileader-last}
    Consider the set of nodes of largest $\ID{}$, $\Vmax = \set{v : \ID{v} = \maxID}$. In \cref{alg:nonterminating_le}, if
    $\totrecvcw[v] \geq \ID{v}$ for all $v \in \Vmax$ at some point, then $\totrecvcw[v] \geq \ID{v}$
    holds for every node $v$ at this point. That is, one of the nodes in $\Vmax$ is the last node to 
    satisfy $\totrecvcw[v] \geq \ID{v}$.
\end{lemma}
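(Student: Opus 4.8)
The plan is to adapt the proof of \Cref{lem:leader_last} almost verbatim, modifying only the single step where the uniqueness of \maxID was invoked. As in that proof, I argue by contradiction: I fix the point in time at which every $v \in \Vmax$ satisfies $\totrecvcw[v] \geq \ID{v} = \maxID$, and suppose for contradiction that some node is still \emph{unsaturated}, meaning it has $\totrecvcw < \ID{}$ at that time.

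Concretely, I would first pick an arbitrary anchor $w \in \Vmax$; by hypothesis $w$ is saturated with $\totrecvcw[w] \geq \maxID$. Walking backward from $w$ along the \CCW direction (through its predecessors on the \CW path), I locate the \emph{closest} predecessor $x_k$ of $w$ that is unsaturated. Such a node exists because, by the contradiction assumption, at least one unsaturated node lies somewhere on the ring, so the backward walk meets one at the latest after a full loop. By construction every intermediate node $x_1, \ldots, x_{k-1}$ strictly between $x_k$ and $w$ is saturated, and $w \neq x_k$ since $w$ is saturated while $x_k$ is not. Applying \Cref{lem:le_invariants:prop:one} to $x_k$ and \Cref{lem:le_invariants:prop:two} to each $x_i$ with $i \in [k-1]$, and using that the number of pulses a node sends upper-bounds the number its \CW neighbor receives, yields exactly the same telescoping chain as before:
\[
\totrecvcw[w] \leq \totsentcw[x_1] = \totrecvcw[x_1] \leq \dots \leq \totsentcw[x_k] = \totrecvcw[x_k] + 1 .
\]
Since $x_k$ is unsaturated, $\totrecvcw[x_k] + 1 \leq \ID{x_k}$, so combining with $\totrecvcw[w] \geq \maxID$ gives $\ID{x_k} \geq \maxID$, forcing $\ID{x_k} = \maxID$ and hence $x_k \in \Vmax$.

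The crux, and the only place the argument genuinely departs from \Cref{lem:leader_last}, is the conclusion drawn from $x_k \in \Vmax$. In the unique-ID case one immediately got $x_k = \vmax$, which contradicted $\vmax$ being the saturated anchor. Here I instead observe that $x_k \in \Vmax$ together with $x_k$ being unsaturated directly contradicts the lemma's hypothesis that \emph{every} node of $\Vmax$ is saturated. I expect the main thing to be careful about is avoiding circularity: I must not invoke the equality $\totrecvcw[w] = \maxID$, since that is a consequence of \Cref{cor:sent_recv_maxid}, which in turn relies on the present lemma. I use only the weaker bound $\totrecvcw[w] \geq \maxID$ supplied by the hypothesis, which, fortunately, is all the telescoping chain needs. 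The remaining details, namely the existence of $x_k$ and the direction of the inequalities, are routine once the backward walk from a fixed $w \in \Vmax$ is set up correctly.
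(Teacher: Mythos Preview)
Your proof is correct. It is a faithful adaptation of the contradiction argument in \Cref{lem:leader_last}: pick a saturated anchor $w \in \Vmax$, walk \CCW to the nearest unsaturated predecessor $x_k$, telescope through the invariants of \Cref{lem:le_invariants}, and conclude $\ID{x_k} \geq \maxID$, contradicting the hypothesis that every node of $\Vmax$ is already saturated. Your care in using only $\totrecvcw[w] \geq \maxID$ rather than equality is well placed.

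The paper argues directly rather than by contradiction: it partitions the ring into arcs between consecutive members of $\Vmax$ and, for each arc preceding some $v_i \in \Vmax$, proves by induction (stepping \CCW one node at a time) the stronger statement that every node in that arc has $\totrecvcw \geq \maxID$. The inductive step is the same telescoping via \Cref{lem:le_invariants} that you use. The two arguments are thus the same idea in different clothing; your version is a bit more economical because it requires only a single backward walk and reuses the structure of \Cref{lem:leader_last} verbatim, while the paper's version makes the segment-by-segment picture and the stronger intermediate conclusion $\totrecvcw \geq \maxID$ explicit.
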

\begin{proof}
    Let $m = \card{\Vmax}$ be the number of nodes that hold~$\maxID$. 
    Denote these nodes $v_1,\ldots,v_m$, ordered clockwise from an arbitrary one of them.
    Let us identify $v_{m+1} = v_1$ for ease of notation. 
    For each $i \in [m]$, let $k_i \in \set{0,\ldots,n-1}$ be the number of consecutive nodes with $\ID{v} < \maxID$ preceding $v_i$ in the ring. 
    For each $j \in [k_i]$, let $x_{i,j}$ be the node $j$ counterclockwise hops from $v_i$ in the ring. See \cref{fig:naming} for an illustration.
    
    We show that if $\totrecvcw[v_i] \geq \ID{v_i}$ holds at $v_i$, then it also holds at all $x_{i,j}$, $j\in [k_i]$. Since every node $v\not \in \Vmax$ has a node of largest $\ID{}$ in its clockwise direction later in the ring, this implies that when $\totrecvcw[v] \geq \ID{v}$ holds at all $v_i \in \Vmax$, it also holds at every node $v \in V$.
    
    \begin{figure}[ht]
        \centering
        \includegraphics[width=0.6\linewidth,page=2,trim={0 0 0 0}]{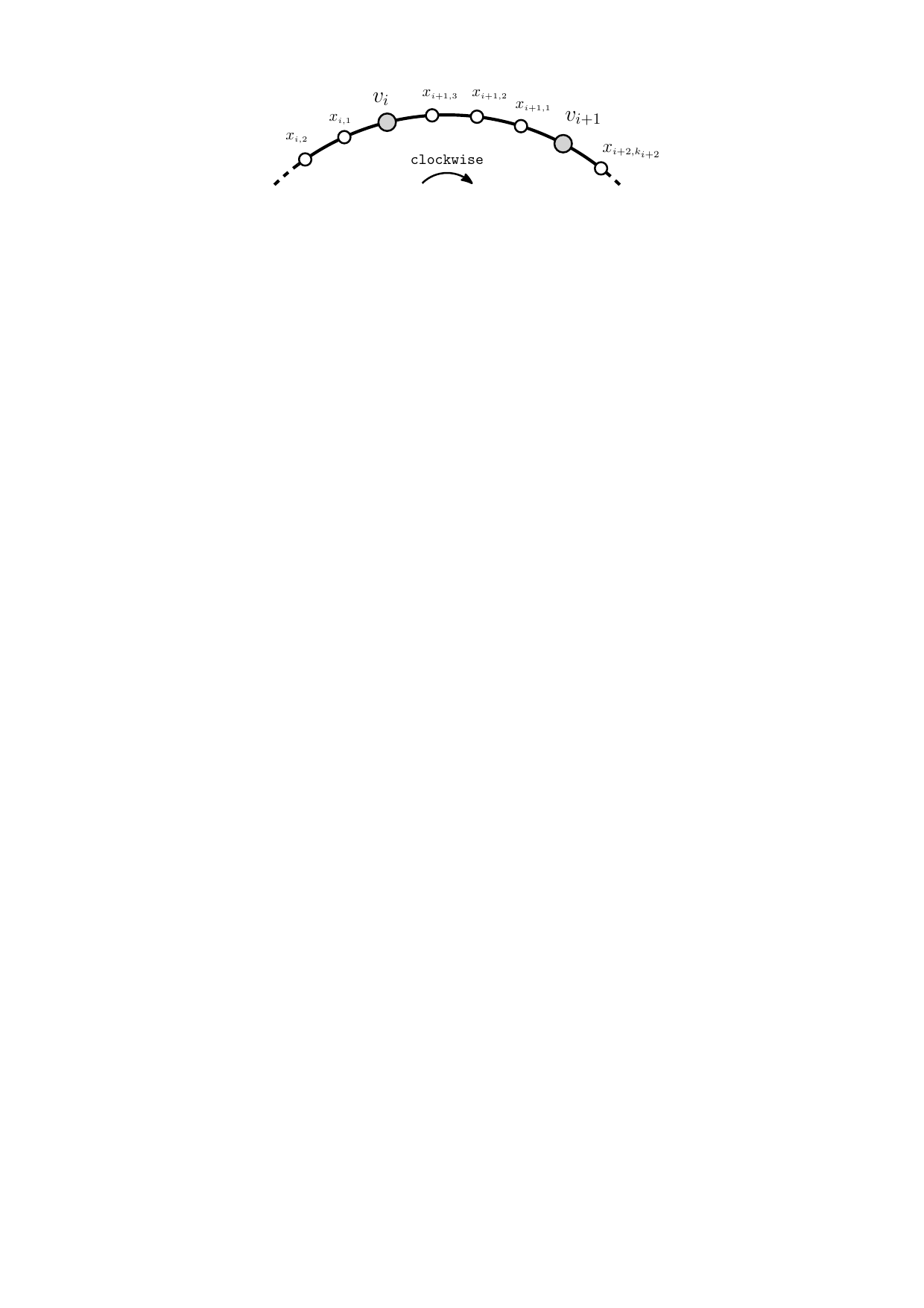}
        \caption{The naming of nodes between nodes of largest $\ID{}$ in the proof of \cref{lem:multileader-last}.}
        \label{fig:naming}
    \end{figure}

    Consider $v_i \in \Vmax$ s.t.\ $\totrecvcw[v_i] \geq \ID{v_i} = \maxID$. Suppose $k_i > 0$, as otherwise the result is trivial. Let $x_{i,0} = v_i$. For each $j \in [0,k_i)$, we show that $\totrecvcw[x_{i,j}] \geq \maxID$ implies $\totrecvcw[x_{i,j+1}] \geq \maxID$. As the base case $\totrecvcw[x_{i,0}] = \totrecvcw[v_i] \geq \maxID$ holds, we get our result by induction.
    
    Let $j \in [0,k_i)$ and suppose $\totrecvcw[x_{i,j}] \geq \maxID$. Since $\totrecvcw[x_{i,j}] \leq \totsentcw[x_{i,j+1}]$, we have that $\totsentcw[x_{i,j+1}] \geq \maxID$. Since $\maxID > \ID{x_{i,j+1}}$, by \cref{lem:le_invariants}, it needs to hold that $\totrecvcw[x_{i,j+1}] = \totsentcw[x_{i,j+1}]$. Therefore, $\totrecvcw[x_{i,j+1}] \geq \maxID$.
\end{proof}

Equipped with \cref{lem:multileader-last},  the proof of \cref{lem:nt-le-duplicate-ids} follows quite naturally.

\begin{proof}[Proof of \cref{lem:nt-le-duplicate-ids}]
    Let us review the proof of \cref{cor:sent_recv_maxid}, and see which elements of it could be affected by multiple nodes having the same $\ID{}$. Let $\Vmax = \set{v: \ID{v} = \maxID}$ be the set of nodes of largest $\ID{}$.
    
    As already stated, the invariants of \cref{lem:le_invariants} still hold, and arguments relying on \cref{lem:leader_last} must be amended to rely on \cref{lem:multileader-last} instead.
    \Cref{lem:quiescence_to_sat,lem:sat_to_quiescence,cor:quiescence_convergence}, about how the network is in quiescence if and only if each node~$v$ has received (and, by \cref{lem:le_invariants}, also sent) at least $\ID{v}$~pulses, are immediate consequences of \cref{lem:le_invariants}, and hence still hold.

    We now get to \cref{lem:le_equivalences,lem:le_convergence}, which are two lemmas from which \cref{cor:sent_recv_maxid} follows most directly.
    \Cref{lem:le_equivalences} shows equivalences between three properties: quiescence, that $\totrecvcw[v] \geq \ID{v}$ at each $v$, and that $\totrecvcw[v] = \totsentcw[v] = \maxID$ at each $v$. Again, the arguments still hold if $\ID{}$s are not unique: the connection between quiescence and all nodes satisfying $\totrecvcw[v] \geq \ID{v}$ was shown in \cref{cor:quiescence_convergence}; while some node $v \in \Vmax$ has received less than $\maxID$ pulses, the network is not in quiescence; for all nodes in $\Vmax$ to have received $\maxID$ pulses, other nodes in the network need to have sent this many pulses, which implies they have also received $\maxID$ pulses.

    \Cref{lem:le_convergence} shows that the network cannot permanently remain in a state in which some nodes have $\totrecvcw[v] < \ID{v}$. The argument again does not rely on the uniqueness of the $\ID{}$s, and only uses that if some non-empty set $B$ of nodes satisfies $\totrecvcw[v] < \ID{v}$ for each $v \in B$, then since for those nodes $\totsentcw[v] = \totrecvcw[v] + 1$ (\cref{lem:le_invariants}) the network has pulses still in transit. 
    These pulses must eventually reach nodes in~$B$, increase the number of received pulses $\totrecvcw$ of nodes in it, and eventually to the point that for a node in $B$, $\totrecvcw[v] \geq \ID{v}$.

    Put together, the whole argument still holds with non-unique $\ID{}$s.
\end{proof}

The proof of our second main theorem is a corollary of the above.
\begin{proof}[Proof of \cref{thm:orient-ring}]
    Let us modify \cref{alg:ring_orientation} as follows:
    \begin{itemize}
        \item In \cref{line:nt-le-new-ids}, we set as $\ID{}$s $\ID{v}^{(1)} \gets \ID{v} + 1$, and $\ID{v}^{(0)} \gets \ID{v}$.
    \end{itemize}
    Similar to the argument in the proof of \cref{prop:suboptimal-nt-le}, this amounts to running two parallel instances of \cref{alg:nonterminating_le} over each channel.
    As in that proof, consider the maximal clockwise and counterclockwise IDs, $\max_v \IDcw{v} = \maxID+1$ and $\max_v \IDccw{v} = \maxID$. From \cref{lem:nt-le-duplicate-ids}, we know that the number of clockwise pulses sent and received by each node eventually stabilizes at $\maxID+1$, and similarly stabilizes at $\maxID$ for the counterclockwise direction. 
    This results in a single leader being elected and a consistent orientation as before. The upper bound of $n(2\maxID + 1)$ follows from $n (\maxID + 1)$ pulses being exchanged in one direction, and $n \cdot \maxID$ in the other.
\end{proof}

\section{Lower Bound On Message Complexity in Content-Oblivious Rings}
\label{sec:lower-bounds}
In this section, we show that the  dependency  of the message complexity of our algorithms on $\maxID$ is natural and inevitable by providing a lower bound showing that the number of pulses sent increases indefinitely with the number of available IDs. 
\begin{theorem}\label{thm:lowerbound}
Let $k$ and $n$ be arbitrary positive integers, $k\ge n$. 
If $k$ distinct IDs are assignable to the $n$ nodes of the ring, 
at least $n\lfloor\log (k/n)\rfloor$ pulses are sent by any leader election algorithm for some assignment of IDs.
In particular, an unbounded number of pulses is sent for an infinite supply of IDs even on rings with just a single node. 
\end{theorem}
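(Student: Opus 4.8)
The plan is to prove the bound by a counting-and-gluing argument, treating the single-node case ($n=1$) as the heart of the matter and then lifting it to general $n$ by partitioning the ID space. Throughout I assume the algorithm is deterministic, terminating, and content-oblivious (matching the hypotheses of \cref{thm:LB}), and I fix a global clockwise orientation for the analysis only, the nodes remaining oblivious to it. To each solo execution I associate its \emph{send-string}: the sequence of send operations performed by the node, each labeled by the port ($\Port0$ or $\Port1$) it uses. Since the algorithm terminates and every pulse a solo node emits is delivered before quiescence, the send-string is finite and its length equals the number of pulses the node sends.

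First I would establish the single-node bound. A lone node sits on a self-loop and, for correctness, must eventually terminate with output \leader, as it is trivially the maximum-ID node. The key lemma is a gluing statement: if two distinct IDs $x \neq y$ induce the \emph{same} port-labeled send-string in their solo executions, then one can assemble a $2$-node ring carrying these two IDs in which both nodes replay their solo executions verbatim and hence both output \leader, contradicting correctness. The construction runs the two nodes in lockstep; because in the self-loop the pulses a node receives on $\Port0$ (resp.\ $\Port1$) are exactly the pulses it sent on $\Port1$ (resp.\ $\Port0$), whereas in the $2$-ring these incoming pulses are instead supplied by the partner's identical sends, the asynchronous scheduler can deliver pulses so that each node sees precisely its solo receive-sequence. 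Consequently distinct IDs must yield distinct send-strings. A simple count finishes the case: there are only $2^{m+1}-1$ port-labeled binary strings of length at most $m$, so if every one of the $k$ assignable IDs sent at most $m = \lfloor\log k\rfloor - 1$ pulses we would have at most $2^{\lfloor\log k\rfloor}-1 \leq k-1 < k$ distinct send-strings, forcing a collision. Hence some ID forces at least $\lfloor\log k\rfloor$ pulses, which is the $n=1$ instance of the theorem and already yields the unboundedness claim.

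For general $n$ I would partition the $k$ available IDs into $n$ blocks $G_1,\dots,G_n$, each of size $\lfloor k/n\rfloor$, and assign node $j$ an ID drawn from $G_j$. The goal is to show that, over this product family of assignments, each node must realize at least $\lfloor k/n\rfloor$ distinct send-strings, so that in the worst assignment each node sends at least $\lfloor\log(k/n)\rfloor$ pulses, summing to the claimed $n\lfloor\log(k/n)\rfloor$. The mechanism is the same gluing idea applied locally: if, for some node $j$, two IDs in $G_j$ produced the same send-string at $j$ while the rest of the ring behaved identically, then exchanging them, and where needed merging the corresponding executions of neighboring segments, yields two assignments whose executions are indistinguishable to every node yet demand different leaders. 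Choosing the blocks so that varying a node's ID within its block can flip whether that node is the global maximum is what makes the outputs genuinely depend on each node's send-string.

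The main obstacle I anticipate is making the local gluing in the $n$-node ring fully rigorous. Unlike the self-loop case, a node's incoming pulses are produced by its neighbors through a long chain of causally dependent events, so I must argue carefully that when two local behaviors share a send-string the asynchronous scheduler can still splice the surrounding executions into a single legal ring execution in which the targeted node reproduces its local view exactly, with all pulses eventually delivered and all nodes terminating. A secondary, more mechanical difficulty is tightening the counting so that the per-node bound comes out as the floor $\lfloor\log(k/n)\rfloor$ rather than merely $\Omega(\log(k/n))$; this should follow from the same block-size calculation as in the single-node case, applied independently to each block.
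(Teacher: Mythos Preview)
Your single-node argument is sound and essentially matches the paper's: distinct IDs yield distinct solo behaviors (send-strings for you, \emph{solitude patterns} in the paper) via a two-node gluing, and a count gives the $\lfloor\log k\rfloor$ bound.

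The gap is in the lift to general $n$. Your swap argument aims to show that two IDs in $G_j$ producing the same send-string at node $j$ yield indistinguishable ring executions that ``demand different leaders.'' But leader election only requires a \emph{unique} leader, not a specific one; if the two executions are identical, they elect the same node and nothing is violated. Even under the stronger reading that the leader must be the maximum-ID node, varying node~$j$'s ID within its block while the others are fixed changes which node holds the global maximum at a single threshold at best, so this separates at most two send-strings, not $\lfloor k/n\rfloor$. And even granting a per-node bound, your conclusion needs a \emph{single} assignment on which \emph{every} node simultaneously sends $\lfloor\log(k/n)\rfloor$ pulses; your scheme would instead produce a different worst assignment for each node.

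The paper sidesteps all three issues with one combinatorial step you are missing. After showing that the $k$ solitude patterns are distinct binary strings, it observes by pigeonhole that among any $k$ distinct strings some $n$ of them share a common prefix of length $s=\lfloor\log(k/n)\rfloor$. Placing \emph{those} $n$ IDs on the ring and scheduling exactly as in the solitude definition, every node receives from its neighbor precisely what it would have received from itself in solitude for the first $s$ pulses, because all $n$ neighbors are emitting the same prefix. Hence every node sends at least $s$ pulses in this one execution, giving $ns$ in total. The common-prefix step is what converts the solo-uniqueness lemma into a simultaneous lower bound on a single $n$-ring assignment; your block-partition approach has no analogue of it.
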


The proof of this theorem makes use of the following definition.
\begin{definition}[Solitude pattern]
Consider a ring with a single node ($n=1$), and fix a specific algorithm. 
Assume a scheduler that delivers pulses  one by one, keeping the order in which they were sent (breaking ties by prioritizing \CW pulses). 
Define the \emph{solitude pattern} as the sequence of incoming pulses observed by the node, encoded as 
a binary string where $0$ and $1$ 
encode \CW and \CCW pulses, respectively.
We denote the solitude pattern of a node with $\ID{}=i$ by $p_i$.
\end{definition}
Besides this crucial definition, we make use of the following lemma telling us that each ID has its own, unique solitude pattern.
Essentially, the proof relies on  matching all possible pairs of IDs against each other in a ring of two nodes. If any two IDs had the same solitude pattern, they would send and receive pulses in this ring ($n=2$) exactly as they would in solitude ($n=1$). Thus, in one of these execution they give an invalid output.
\begin{lemma}
\label{lem:p-is-unique}
    For any uniform content-oblivious leader election algorithm, each solitude pattern is unique. 
    In other words, for any pair of distinct IDs $i\neq j$, we have $p_i \ne p_j$.
\end{lemma}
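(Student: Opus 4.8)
The plan is to argue by contradiction. Suppose two distinct IDs $i \neq j$ satisfy $p_i = p_j$, and write $p$ for this common solitude pattern. I will exhibit a single legal execution on the two-node ring in which the node $u$ with $\ID{u}=i$ and the node $v$ with $\ID{v}=j$ each replay, verbatim, their respective solitude executions. Since in a one-node ring the lone node must output \leader, a terminating algorithm ends each solitude execution in the state \leader; hence in the constructed two-node execution both $u$ and $v$ output \leader, contradicting the correctness of leader election, which allows exactly one leader in a ring. Crucially, the algorithm is \emph{uniform}, so each node runs the very same code for $n=1$ and $n=2$; this is what makes ``replaying the solitude execution'' well defined.

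First I would wire the two-node ring so that \CW pulses circulate $u \to v \to u$: connect \Port{1} of $u$ to \Port{0} of $v$, and \Port{1} of $v$ to \Port{0} of $u$. Then a \CW pulse (sent from \Port{1}) always arrives at the partner's \Port{0}, and a \CCW pulse (sent from \Port{0}) arrives at the partner's \Port{1} — exactly the port-to-direction correspondence each node sees in solitude, where its own \Port{1} feeds its own \Port{0}. The decisive point is content-obliviousness: a node perceives only the \emph{port} at which a bare pulse arrives, so $u$ cannot tell whether a \CW pulse reaching \Port{0} is its own relayed pulse (as in solitude) or one produced by $v$ (as in the two-node ring). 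Moreover, in the two-node ring $u$'s own sends travel to $v$ and never return, so $u$'s entire incoming stream is supplied by $v$ and is fully under the scheduler's control. Consequently, if I deliver to $u$ the arrival pattern $p_i$, it produces precisely the send-sequence of solitude-$i$, and symmetrically for $v$.

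The schedule I would use runs the two nodes in lockstep on the common pattern $p$: once both have processed the first $k$ received pulses, I deliver the $(k+1)$-st symbol $p[k+1]$ to each node, drawing the required pulse from the partner's output. To confirm that this is a legal asynchronous schedule, the only nontrivial check is causality: when $u$ is about to receive its $m$-th \CW pulse (so $m$ is the number of $0$'s in the length-$(k+1)$ prefix of $p$), that pulse must already have been emitted by $v$. But $v$, replaying solitude-$j$, has by this moment sent exactly as many \CW pulses as solitude-$j$ had after $k$ received pulses, and the validity of the solitude execution of ID $j$ — in which $v$'s own $m$-th \CW receive at step $k+1$ is necessarily preceded by its $m$-th \CW send — guarantees that this count is at least $m$. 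Reading the same inequality for ID $i$ supplies $v$'s receives. Thus at every step both needed pulses are available, all deliveries respect per-direction FIFO order, and the lockstep halts exactly when both solitude executions terminate, leaving two \leader outputs.

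The step I expect to be the main obstacle is precisely this causal consistency of the mutual simulation: $u$'s receives are fed by $v$'s sends and vice versa, so one must rule out a circular wait in which each node is blocked awaiting a pulse the other has not yet produced. What dissolves the apparent circularity is the lockstep coupling combined with the observation that the counting inequality needed at each step is nothing other than the solitude-validity of the \emph{opposite} ID — available precisely because both patterns equal the same $p$. Everything else, namely the port wiring, the matching of \CW/\CCW counts, and the final ``two leaders in a two-node ring'' contradiction, is then routine.
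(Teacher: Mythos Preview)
Your proposal is correct and follows essentially the same approach as the paper: assume $p_i=p_j$, place the two IDs on a two-node ring, and exhibit a schedule under which each node replays its solitude execution and hence outputs \leader. The paper's proof is terser---it simply declares a scheduler that applies ``the same delay to all pulses'' and asserts that each node then receives its solitude pattern---whereas you spell out the lockstep coupling and the causality check (that the $m$-th \CW pulse needed by $u$ at step $k+1$ was already emitted by $v$, by validity of the solitude-$j$ execution); this extra care is a genuine improvement in rigor over the paper's argument, but the underlying idea is identical.
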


\begin{proof}
Fix a content-oblivious leader election algorithm on uniform rings.
Seeking contradiction, assume that two nodes with distinct IDs, $i\neq j$, have the same solitude pattern, $p_i= p_j$. 
Note that each of these nodes outputs \leader when run in isolation ($n=1$). 

Consider a ring with $n=2$ nodes, which are 
assigned the IDs $i$ and~$j$, respectively. 
Assume a scheduler behaving the same way as in the definition of a solitude pattern for each node individually. That is, pulses arrive one by one in the order they were sent out. 
Moreover, we assume that the same delay is applied to all pulses.  
Since the two nodes' solitude patterns are identical, and the scheduler maintains order, each node will receive (and thus generate) exactly its solitude pattern.
Indeed, since $p_i=p_j$, both nodes send their first pulse in the same direction, and hence both receive it from the same direction as they would when alone and then send their next pulse accordingly; thus both receive and send exactly the pattern~$p_i=p_j$.
This means that both nodes output \leader as they must in their solitude situation. This contradicts the guarantees of the leader election task.
\end{proof}

Having established that each ID has its own unique solitude pattern, a lower bound arises from properties of binary strings, namely the length required to avoid repeating patterns.
We begin with the following simple property following from the pigeon-hole principle.
\begin{lemma}\label{lem:string-pigeon-hole}
For any $\stringlength,n\ge 1$, any set of $n2^\stringlength$ distinct binary strings contains $n$ strings sharing a common prefix of length at least~$\stringlength$.
\end{lemma}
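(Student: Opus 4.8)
The plan is to establish \Cref{lem:string-pigeon-hole} by a direct pigeon-hole argument on the length-$\stringlength$ prefixes of the given strings. First I would observe that there are exactly $2^\stringlength$ distinct binary strings of length $\stringlength$, so if we map each of the $n2^\stringlength$ given strings to its prefix of length $\stringlength$, the codomain of this map has size $2^\stringlength$. Since the domain has size $n2^\stringlength$, by the pigeon-hole principle some prefix value is the image of at least $\lceil n2^\stringlength / 2^\stringlength\rceil = n$ of the strings. Those $n$ strings then share a common prefix of length exactly $\stringlength$, hence of length \emph{at least} $\stringlength$, which is what we want.

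The only point that needs a word of care is well-definedness of the prefix map: each of the $n2^\stringlength$ strings must actually \emph{have} a prefix of length $\stringlength$, i.e.\ each must have length at least $\stringlength$. I would address this up front by noting that the statement should be read with this hypothesis implicit (or, in the intended application, the solitude patterns in question are long enough); alternatively, if some strings are shorter than $\stringlength$, one can restrict attention to those of length at least $\stringlength$ and the bound degrades gracefully, but for the clean statement I would simply assume every string has length $\ge \stringlength$ and partition by the length-$\stringlength$ prefix. Because the strings are distinct, nothing prevents $n$ of them from sharing a prefix, so there is no parity or counting obstruction beyond the raw pigeon-hole count.

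There is essentially no hard part here; the lemma is a textbook pigeon-hole application, and the main (minor) obstacle is purely bookkeeping, namely stating the domain, codomain, and the counting bound $n2^\stringlength / 2^\stringlength = n$ precisely enough that the conclusion ``$n$ strings sharing a prefix'' follows immediately. I would keep the proof to two or three sentences: define the map $\phi$ sending a string to its first $\stringlength$ bits, note $\card{\operatorname{im}\phi}\le 2^\stringlength$, and conclude by pigeon-hole that some fiber $\phi^{-1}(\text{prefix})$ has at least $n$ elements, all of which share that common prefix of length $\stringlength$.
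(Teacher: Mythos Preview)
Your pigeon-hole argument is the right core idea, but there is a genuine gap: you cannot ``simply assume every string has length $\ge \stringlength$.'' The lemma as stated is about an arbitrary set of $n2^\stringlength$ distinct binary strings, with no length hypothesis, and in the intended application (solitude patterns of IDs) there is no a~priori bound on how short a pattern may be. Saying the bound ``degrades gracefully'' is not enough either: you need it not to degrade at all, since the conclusion requires exactly $n$ strings with a common prefix.

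The fix, which the paper's proof makes explicit, is that distinctness already bounds the number of short strings: there are only $2^\stringlength-1$ binary strings of length strictly less than $\stringlength$, so at least $n2^\stringlength - (2^\stringlength-1) = (n-1)2^\stringlength + 1$ of the given strings have length at least $\stringlength$ and hence possess a length-$\stringlength$ prefix. Applying pigeon-hole to these, some prefix is shared by at least $\lceil ((n-1)2^\stringlength + 1)/2^\stringlength \rceil = \lceil n - 1 + 2^{-\stringlength}\rceil = n$ strings. So your map $\phi$ and the counting are exactly right once restricted to the long strings; what is missing is the one-line observation that the short strings number fewer than $2^\stringlength$, which is where the distinctness hypothesis does its work.
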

\begin{proof}
Let $\stringlength$ and $n$ be positive integers. 
There are only $2^\stringlength-1$ distinct binary strings shorter than~$\stringlength$. 
Therefore, in a set of $n2^\stringlength$ distinct binary strings at least $n2^\stringlength-2^\stringlength-1=(n-1)2^\stringlength+1$ of them have length at least $\stringlength$, implying that they contain a prefix string of length $\stringlength$.
There are only $2^\stringlength$ distinct binary prefixes of length $\stringlength$, thus the pigeon-hole principle implies that at least one prefix of length $\stringlength$ is 
shared by at least $\lceil ((n-1)2^\stringlength+1)/2^\stringlength\rceil=\lceil n-1+2^{-\stringlength}\rceil=n$ of the strings in the set. 
\end{proof}

After deducing another corollary, we are ready to prove the lower bound of Theorem~\ref{thm:lowerbound}.

\begin{corollary}
\label{cor:common-prefix}
For any integers $k\ge n\ge 1$, any set of $k$ distinct binary strings contains $n$ strings sharing a common prefix of length at least~$\lfloor\log (k/n)\rfloor$.
\end{corollary}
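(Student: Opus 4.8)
The plan is to derive this corollary directly from \Cref{lem:string-pigeon-hole} by instantiating the prefix-length parameter appropriately. I would set $\stringlength := \lfloor \log(k/n)\rfloor$ and then argue that the hypothesis $k \ge n$ makes this quantity well-defined and non-negative: since $k \ge n$ gives $k/n \ge 1$, we have $\log(k/n) \ge 0$ and hence $\stringlength \ge 0$.

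First I would dispose of the degenerate case $\stringlength = 0$, which occurs exactly when $k < 2n$. In this case the promised common prefix has length $0$, i.e.\ it is the empty string, which every binary string trivially shares; since $k \ge n$, picking any $n$ of the $k$ strings already satisfies the claim. For the main case $\stringlength \ge 1$, the key observation is that $\stringlength = \lfloor\log(k/n)\rfloor \le \log(k/n)$ implies $2^\stringlength \le k/n$, and therefore $n\,2^\stringlength \le k$. Thus the given collection of $k$ distinct strings contains a subset of at least $n\,2^\stringlength$ distinct strings, to which \Cref{lem:string-pigeon-hole} applies verbatim, yielding $n$ strings that share a common prefix of length at least $\stringlength = \lfloor\log(k/n)\rfloor$. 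This is precisely the statement to be proven.

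I do not expect any genuine obstacle in this step, since the corollary is a routine repackaging of the lemma. The only two points that require care are the non-negativity of $\lfloor\log(k/n)\rfloor$ (guaranteed by the assumption $k \ge n$, without which the bound would be vacuous or ill-posed) and the handling of the boundary value $\stringlength = 0$, which must be treated separately because \Cref{lem:string-pigeon-hole} is stated only for $\stringlength \ge 1$.
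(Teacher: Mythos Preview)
Your proposal is correct and follows essentially the same approach as the paper: instantiate \Cref{lem:string-pigeon-hole} with $\stringlength = \lfloor\log(k/n)\rfloor$ and verify that $n\,2^\stringlength \le k$. Your treatment is in fact slightly more careful than the paper's, which applies the lemma directly without separately addressing the boundary case $\stringlength = 0$ (where the lemma, stated for $\stringlength \ge 1$, does not formally apply); your explicit handling of that case via the empty prefix is a welcome addition.
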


\begin{proof}
Apply Lemma~\ref{lem:string-pigeon-hole} with the given $n$ and $\stringlength=\lfloor\log (k/n)\rfloor$. Note that the given set is large enough since $n2^\stringlength\le n2^{\log(k/n)}=n(k/n)=k$.
\end{proof}

\begin{proof}[Proof of Theorem~\ref{thm:lowerbound}]
Assume that we have a uniform leader election algorithm, a ring with $n$ nodes, and at least $k$ assignable IDs for any positive integers $k\ge n$. 
Due to Lemma~\ref{lem:p-is-unique} we know that each ID has its own unique solitude pattern, which is just a binary string. 
By \cref{cor:common-prefix}, there are at least $n$ IDs whose solitude patterns share a common prefix of length~$\stringlength=\lfloor\log (k/n)\rfloor$.
Assume a scheduler that behaves as described in Lemma~\ref{lem:p-is-unique}, for all $n$~nodes. 
It follows that all nodes send and receive pulses in exactly the same way as in their respective solitude situations for the first $\lfloor\log (k/n)\rfloor$ time steps, sending one pulse in each time step. Consequently, at least $n\cdot\stringlength=n\lfloor\log (k/n)\rfloor$ pulses are sent in total, proving the first part of the theorem. The last statement in the theorem follows for an infinite set of assignable IDs because $n\lfloor\log (k/n)\rfloor$ grows indefinitely with increasing $k$, even for $n=1$.
\end{proof}
Theorem~\ref{thm:LB} is then an immediate corollary of the above since the number of distinct IDs is bounded by~\maxID. This lower bound complements the upper bound of Theorem~\ref{thm:LE-oriented} and proves that the $\maxID$ term is not an artifact of our analysis or algorithm design but, rather, an inherent property of the problem in this setting.

\section{Conclusion and Open Questions}\label{sec:conclusion}
As our main result, we have presented a quiescently terminating algorithm for leader election in oriented rings with unique IDs that communicates $n(2\cdot\maxID+1)$ pulses. 
This implies that any content-oblivious computation can be performed on rings without assuming a pre-existing leader.
We have also provided a lower bound showing that the message complexity depending on \maxID is not a fluke but inherent to the problem. 

An immediate candidate for future work is to extend our results
from rings to general networks, i.e., to design a content-oblivious leader election algorithm in arbitrary 2-edge connected networks or, alternatively, prove this task impossible. 
Considering non-oriented rings may be useful towards that goal since there is no sense of direction in general networks.
Our content-oblivious leader election for non-oriented rings does not terminate, and we conjecture that this is inherent to the model. 
It remains as an open task for future work to prove this or find a terminating algorithm. 

\bibliographystyle{plain}
\bibliography{network}

\clearpage
\appendix
\section*{Appendix}

\section{Deferred Proofs Regarding Achieving Quiescent Termination}
\label{sec:deferred-analysis-terminating-le}

In this section, we provide an analysis of \cref{alg:terminating_le} showing that it satisfies the conditions of \cref{thm:LE-oriented}, proving the theorem.

\LEoriented*

We begin with the following lemma, which asserts certain conditions on the number of pulses received over each channel. The lemma suggests that the \CCW instance cannot advance beyond the \CW instance. We use \textit{\CW-quiescence} to denote 
the less general state of quiescence with respect to the \CW channel only, that is, the network is in \CW-quiescence if
it no longer contains \CW pulses. The network then reaches quiescence if both \CW-quiescence and \CCW-quiescence are
achieved, where the latter is defined analogously.

\begin{lemma}\label{lem:progress}
    As long as \CW-quiescence has not been reached,
    $\totrecvccw[v] \le \totrecvcw[v]$ holds for every node~$v$, with equality
    only when $\totrecvcw[v] = 0$.
\end{lemma}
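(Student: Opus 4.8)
The plan is to prove the statement by induction over the global order in which pulses are received (equivalently, over the loop iterations of all nodes), maintaining as a joint invariant the two counter relations of \cref{lem:le_invariants} for \emph{both} instances together with the target relation itself. The relations of \cref{lem:le_invariants} hold verbatim for the \CW counters by construction; since the \CCW part of \cref{alg:terminating_le} is just \cref{alg:nonterminating_le} started with a delay, the same relations govern $\totsentccw[v],\totrecvccw[v]$ once the \CCW instance is active (with $\totsentccw[v]=\totrecvccw[v]=0$ before). A preliminary observation, justified inside the induction from the very invariant being proved, is that before \CW-quiescence the termination block (\crefrange{alg:LE:term-begin}{alg:LE:term-end}) never fires, because firing requires $\totrecvccw[v]=\totrecvcw[v]\ge 1$; hence in this regime the \CCW counters evolve purely as in \cref{alg:nonterminating_le}.

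For the inductive step I would observe that only a \CCW-reception can threaten the invariant, since a \CW-reception merely increases $\totrecvcw[v]$ and so preserves both $\totrecvccw[v]\le\totrecvcw[v]$ and its strictness. So suppose $v$ processes its $j$-th \CCW pulse; write $v^{+}$ for the clockwise neighbour of $v$ (to which $v$ sends \CW pulses and from which it receives \CCW pulses), so that by FIFO this pulse is $v^{+}$'s $j$-th \CCW send. The key point is that $v$ increments $\totrecvccw[v]$ only when $\totrecvcw[v]\ge\ID{v}$, i.e.\ when $v$ is \CW-saturated; by \cref{lem:le_invariants:prop:two} this gives $\totsentcw[v]=\totrecvcw[v]$ with \emph{no} extra $+1$, and combined with $\totrecvcw[v^{+}]\le\totsentcw[v]$ it yields $\totrecvcw[v]\ge\totrecvcw[v^{+}]$ at the reception time. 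It then remains to lower-bound $\totrecvcw[v^{+}]$ at the moment it emitted its $j$-th \CCW pulse. A short sub-induction on the \CCW invariant splits on whether $v^{+}$ was already \CCW-saturated then: if so, \cref{lem:le_invariants:prop:two} and the induction hypothesis give $\totrecvcw[v^{+}]\ge j+1$; if not, the initial-pulse offset forces $\totrecvcw[v^{+}]\ge\max(j,\ID{v^{+}})$. Either way $\totrecvcw[v]\ge j=\totrecvccw[v]$, which settles the non-strict inequality.

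The hard part, as I expect, is the strict clause (equality only when $\totrecvcw[v]=0$), which demands one further unit of slack. This extra unit is available in every case except the boundary one where $v^{+}$ is emitting its last pre-saturation pulse, namely $j=\ID{v^{+}}$ together with $\ID{v}<\ID{v^{+}}$: in all other cases it comes either from the \CCW-saturation branch above (the induction hypothesis already delivers $j+1$) or from \cref{lem:le_invariants} together with the uniqueness of IDs, which makes $\max(\ID{v},\ID{v^{+}})\ge j+1$. In the boundary case the local chain only gives $\totrecvcw[v]\ge j$, and closing the final gap is precisely where the hypothesis that \CW-quiescence has \emph{not} been reached must enter. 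I would handle it by a minimal-counterexample argument: consider the first instant at which some node $v$ attains $\totrecvccw[v]=\totrecvcw[v]\ge 1$, show that this forces $\totrecvcw[v]=\totrecvcw[v^{+}]=\ID{v^{+}}$ with no \CW pulse in transit on the edge from $v$ to $v^{+}$, and then propagate these equalities clockwise around the ring to conclude that every node is \CW-saturated. By \cref{cor:quiescence_convergence} (equivalently \cref{lem:le_equivalences}) this is exactly \CW-quiescence, contradicting the hypothesis. It is this global chaining around the ring, rather than the per-event bookkeeping, that I anticipate to be the main obstacle.
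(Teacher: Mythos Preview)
Your inductive framework for the non-strict inequality is sound, and your isolation of the single problematic boundary configuration ($j=\ID{v^{+}}$ with $\ID{v}<\ID{v^{+}}$) is accurate. The gap lies in the resolution of that case. From $\totrecvcw[v]=\totrecvcw[v^{+}]=\ID{v^{+}}=j$ with no \CW pulse in transit on the edge $v\to v^{+}$, the analogous equalities do \emph{not} carry one step further: you obtain only $\totrecvcw[v^{++}]\le\totsentcw[v^{+}]=j$, with no matching lower bound, so neither $\totrecvcw[v^{++}]=j$ nor even \CW-saturation of $v^{++}$ follows. What actually propagates is a decreasing lower bound on the \CCW counters, $\totrecvccw[u_i]\ge j-i$ along the clockwise walk $u_0=v,u_1=v^{+},\ldots$, which certifies that $u_0,\ldots,u_j$ are \CW-saturated and (via the \CW chain) have $\ID{u_i}\le j$. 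If $j+1\ge n$ this does cover all nodes and gives your \CW-quiescence contradiction; but if $j+1<n$ it does not, and the contradiction must instead come from having $j+1$ distinct nodes whose IDs all lie in $\{1,\ldots,j\}$, a pigeonhole on IDs that you never state for this case. So the plan as written (``conclude that every node is \CW-saturated'') is incomplete.

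The paper sidesteps the induction over events entirely and argues at a single instant. For any $v$ with $\totrecvcw[v]\ge\ID{v}$, it locates the nearest node $u_k$ in the \CW direction that is not yet \CW-saturated; such a barrier exists precisely because \CW-quiescence has not been reached (in particular $\vmax$ always qualifies). Since $u_k$ has sent no \CCW pulses, chaining the \CCW invariants through the intermediate \CW-saturated nodes gives $\totrecvccw[v]\le k-1$. Among the $k$ nodes $v,u_1,\ldots,u_{k-1}$ one has ID at least~$k$ by uniqueness, and since they are all \CW-saturated, chaining the \CW equalities back to $v$ yields $\totrecvcw[v]\ge k>k-1\ge\totrecvccw[v]$. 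These are exactly the barrier-and-pigeonhole ingredients your propagation would need once made precise, but packaged directly rather than through a temporal induction plus minimal counterexample.
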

\begin{proof}
    First, note that if a node $v$ has $\totrecvcw[v] < \ID{v}$, then it has not yet participated in the
    \CCW algorithm, and $\totrecvccw[v] = 0$. Since $\totrecvcw[v] \geq 0$, the statement holds for $v$.
    Further, this is always the case at the node~$\vmax$ for which $\ID{\vmax} = \maxID$. 
    Indeed, since \Cref{alg:terminating_le} consists of two independent instances of \Cref{alg:nonterminating_le}, one over each channel, 
    if we restrict the discussion to the \CW channel, the execution of \Cref{alg:terminating_le}
    is identical to an execution of \Cref{alg:nonterminating_le}, and we may invoke the properties proven in \Cref{sec:warmup}. 
    Then, by \Cref{lem:le_equivalences}, while 
    \CW-quiescence has not occurred, we have $\totrecvcw[\vmax] < \maxID = \ID{\vmax}$, and the statement holds for~$\vmax$.

    We next consider the other nodes and show the statement holds also when $\totrecvcw[v] \ge \ID{v}$.
    Fix an arbitrary node $v \neq \vmax$ with $\totrecvcw[v] \geq \ID{v}$. 
    By \Cref{lem:le_invariants:prop:two},
    $\totrecvcw[v] = \totsentcw[v]$.
    Now, let $u_1, \ldots, u_k$ be \CW successors of $v$ such that
    $u_k$ is the closest node to $v$ with $\totrecvcw[u_k] < \ID{u_k}$.
    Since $v \neq \vmax$, such a node $u_k$ exists.

    Since $\totrecvcw$ is monotonically non-decreasing, once the
    condition $\totrecvcw \geq \ID{v}$ on \Cref{alg:LE:CCW-begin} is met, it is
    always met for all future iterations. As \CCW pulses are buffered, i.e., never lost, 
    and because the nodes $\{v, u_1, \ldots, u_{k-1}\}$ meet the condition
    on \Cref{alg:LE:CCW-begin}, they can be
    viewed as running \Cref{alg:nonterminating_le} over the \CCW channel. As such, all the properties proven in the previous section hold for the \CCW instance as well
    \footnote{The only difference is that nodes start the \CCW instance in different times, while for the \CW instance we assumed all nodes start at the same time. However, this is not an issue. Pulses on the \CCW channel remain in the queue of a node that has not started yet. That is, the situation is equivalent to the case where all the nodes start at a later time and pulses are being delayed accordingly.},
    and we may apply \Cref{lem:le_invariants} over the \CCW channel.

    Consequently,
    $\totsentccw[u_i] \leq \totrecvccw[u_i] + 1$ for $i < k$. Moreover,
    $\totrecvccw[u_i] \leq \totsentccw[u_{i+1}]$ for $i < k$, and $\totrecvccw[v] \leq \totsentccw[u_1]$.
    Combining these inequalities yields
    \[
        \totrecvccw[v] \leq \totsentccw[u_k] + k - 1\text{.}
    \]
    
    Since $\totsentccw[u_k] = 0$, we get $\totrecvccw[v] \leq k - 1$. 
    Let $u$ be the node with the
    largest ID out of $\{v, u_1, \ldots, u_{k-1}\}$. Since we have $k$ nodes, at least one
    must have an ID at least $k$ by the uniqueness of IDs, that is, $\ID{u} > k - 1$.
    
    If $u = v$, then we have  $\totrecvccw[v] < \ID{v} \leq \totrecvcw[v]$.
    On the other hand, if $u \neq v$, then $u = u_{k'}$ for some $k' < k$. Using 
    $\totrecvcw[u_{i+1}] \leq \totsentcw[u_{i}]$ for $i < k'$ and $\totsentcw[u_i] = \totrecvcw[u_i]$
    yields
    \[
    \totrecvccw[v] \leq k - 1 < \ID{u_{k'}} \leq \totrecvcw[u_{k'}] \leq  \totsentcw[u_{k'-1}] =   \totrecvcw[u_{k'-1}] 
    \leq \ldots \leq \totsentcw[v] = \totrecvcw[v].
    \]
    Therefore, in both cases we get $\totrecvccw[v] < \totrecvcw[v]$, which concludes the proof.
\end{proof}

A direct implication of \Cref{lem:progress} is that no node may terminate until the \CW phase has concluded since
no node may meet the termination condition on \Cref{line:terminating_condition_} and exit the loop.

\begin{corollary}\label{cor:cw_qu_non_termination}
    As long as \CW-quiescence has not been reached, nodes do not terminate.
\end{corollary}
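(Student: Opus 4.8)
The plan is to read the corollary off directly from the preceding \Cref{lem:progress}. First I would pin down what termination means in \Cref{alg:terminating_le}: a node leaves the main \textbf{repeat}--\textbf{until} loop, and thus enters its terminating state and outputs \state, exactly when the condition on \cref{line:terminating_condition_} is satisfied, namely $\totrecvccw[v] > \totrecvcw[v]$. Since this is the sole exit point of the loop, to show that no node terminates while \CW-quiescence has not been reached it suffices to show that this strict inequality fails at every node throughout that period.

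This is precisely the content of \Cref{lem:progress}, which guarantees $\totrecvccw[v] \le \totrecvcw[v]$ at every node $v$ as long as \CW-quiescence has not been reached. Because $\totrecvccw[v] \le \totrecvcw[v]$ and $\totrecvccw[v] > \totrecvcw[v]$ are mutually exclusive---note that even the equality case $\totrecvccw[v] = \totrecvcw[v]$ does not meet the strict condition---the \textbf{until} condition on \cref{line:terminating_condition_} is never met before \CW-quiescence, so no node exits the loop. I expect no real obstacle here: the only things to verify are that \cref{line:terminating_condition_} is indeed the unique exit from the loop and that crossing it is what we call termination, after which the single-line argument via \Cref{lem:progress} immediately closes the proof.
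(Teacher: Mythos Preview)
Your proposal is correct and matches the paper's own proof essentially line for line: both identify \cref{line:terminating_condition_} as the sole exit from the loop and then invoke \Cref{lem:progress} to conclude that the strict inequality $\totrecvccw[v] > \totrecvcw[v]$ cannot hold before \CW-quiescence. The paper adds only the brief observation that each node initially sends a \CW pulse, which is context rather than an additional argument.
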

\begin{proof}
    Initially, every node sends a \CW pulse.
    By \Cref{lem:progress},
    until there is \CW-quiescence, all nodes have $\totrecvccw \leq \totrecvcw$ and, hence, cannot terminate in \cref{line:terminating_condition_}.
\end{proof}

Now we prove a set of necessary conditions that hold if some $v$ meets
$\totrecvcw[v] = \totrecvccw[v] = \ID{v}$.
In particular,
$v$ is necessarily the node with the largest ID, every node is still awake with $\totrecvcw = \totrecvccw$,
and there must be quiescence in the network over both channels.

\begin{lemma}\label{lem:necessary_term_conditions}
    If the event $\totrecvcw[\vmax] = \totrecvccw[\vmax] = \ID{\vmax}$ occurs 
    for the first time at a node $\vmax$,
    then at that point of time:
    \begin{enumerate}
        \item the network has reached \CW-quiescence,
                                                    \label[lemma]{lem:necessary_term_conditions:prop:cw_quiescence}
        \item $\ID{\vmax} = \maxID$,                    \label[lemma]{lem:necessary_term_conditions:prop:l_is_maxid}
        \item there are no \CCW pulses in transit, i.e., $\sum_v{\totsentccw[v]} = \sum_v{\totrecvccw[v]}$,
                                                    \label[lemma]{lem:necessary_term_conditions:prop:ccw_quiescent}
        \item $\forall{v}: \totrecvcw[v] = \totrecvccw[v] = \maxID$, and
                                                    \label[lemma]{lem:necessary_term_conditions:prop:all_recv_maxid}
        \item no node has terminated.
                                                    \label[lemma]{lem:necessary_term_conditions:prop:non_termination}
    \end{enumerate}
\end{lemma}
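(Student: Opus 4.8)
The plan is to fix the first moment at which the event
$\totrecvcw[\vmax] = \totrecvccw[\vmax] = \ID{\vmax}$ occurs at some node, call it $\vmax$, and then derive the five consequences roughly in the order listed, each building on the previous ones. The whole argument leans on the fact that, restricted to a single channel, \cref{alg:terminating_le} behaves exactly like \cref{alg:nonterminating_le}, so I may freely invoke the warm-up results of \cref{sec:warmup} and, for the \CCW side, \cref{lem:progress}.

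First, for \cref{lem:necessary_term_conditions:prop:cw_quiescence} I would argue \CW-quiescence by contradiction: if \CW-quiescence had not been reached, then \cref{lem:progress} would force the strict inequality $\totrecvccw[\vmax] < \totrecvcw[\vmax]$ (since $\totrecvcw[\vmax] = \ID{\vmax} > 0$), contradicting the assumed equality $\totrecvccw[\vmax] = \totrecvcw[\vmax]$. Once \CW-quiescence is established, I apply \cref{lem:le_equivalences} to the \CW instance to conclude $\totrecvcw[v] = \maxID$ at \emph{every} node $v$. In particular $\ID{\vmax} = \totrecvcw[\vmax] = \maxID$, which is \cref{lem:necessary_term_conditions:prop:l_is_maxid}; here the uniqueness of the maximal ID pins down $\vmax$ as the node $\ell$ of largest ID. This also gives half of \cref{lem:necessary_term_conditions:prop:all_recv_maxid}, namely that all nodes have received exactly $\maxID$ \CW pulses.

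For the \CCW side I would proceed symmetrically but more carefully, since the \CCW instance is what is being driven to completion. The assumption gives $\totrecvccw[\vmax] = \ID{\vmax} = \maxID$. I want to show this forces every node to have $\totrecvccw[v] \ge \ID{v}$, so that by the \CCW analogue of \cref{lem:le_equivalences} the whole \CCW instance is in quiescence with $\totrecvccw[v] = \maxID$ everywhere. The key tool is \cref{lem:leader_last} applied to the \CCW channel: because $\vmax$ carries the unique maximal ID, it is the last node to satisfy $\totrecvccw[v] \ge \ID{v}$, so the instant $\totrecvccw[\vmax]$ reaches $\maxID$ is exactly the instant all nodes satisfy the inequality. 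This yields \cref{lem:necessary_term_conditions:prop:ccw_quiescent} (no \CCW pulses in transit, i.e.\ $\sum_v \totsentccw[v] = \sum_v \totrecvccw[v]$) and completes \cref{lem:necessary_term_conditions:prop:all_recv_maxid}. Finally, \cref{lem:necessary_term_conditions:prop:non_termination} follows because termination requires $\totrecvccw[v] > \totrecvcw[v]$ on \cref{line:terminating_condition_}, whereas we have just shown $\totrecvccw[v] = \totrecvcw[v] = \maxID$ at every node; no node can yet have seen the strict inequality, so none has left the loop.

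The main obstacle I anticipate is the \CCW direction in \cref{lem:necessary_term_conditions:prop:ccw_quiescent} and the corresponding half of \cref{lem:necessary_term_conditions:prop:all_recv_maxid}. The subtlety is that nodes begin the \CCW instance at \emph{different} times (each node $v$ only starts once its \CW counter reaches $\ID{v}$), so the clean ``all start simultaneously'' hypothesis behind \cref{sec:warmup} does not literally apply. I would handle this exactly as the footnote to \cref{lem:progress} suggests: since \CCW pulses are buffered and never lost, staggered start times are equivalent to a schedule in which all nodes start together but some early \CCW pulses are merely delayed in queues. Under that reinterpretation the warm-up lemmas transfer verbatim, and the argument that $\vmax$ is the last to satisfy $\totrecvccw[\vmax] \ge \ID{\vmax}$ goes through. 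Care is needed to ensure the event is indeed being triggered on the \CCW side by the genuine completion of that instance rather than by the extra termination pulse of \crefrange{alg:LE:term-begin}{alg:LE:term-end}; invoking ``the first time'' the event occurs rules this out, since the termination pulse is sent only after the event has already been detected.
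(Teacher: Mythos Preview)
Your plan is correct and matches the paper's proof almost step for step: both establish \CW-quiescence via \cref{lem:progress}, deduce $\ID{\vmax}=\maxID$ and the \CW half of item~4 from \cref{lem:le_equivalences}, then apply \cref{lem:leader_last} and \cref{lem:le_equivalences} on the \CCW side (using the buffering argument for staggered starts and the ``first time'' hypothesis to rule out the termination pulse) to get items~3 and~4.

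One small slip worth tightening is your argument for item~5. Knowing $\totrecvccw[v]=\totrecvcw[v]=\maxID$ \emph{at the moment of the event} does not by itself rule out that the strict inequality $\totrecvccw[v]>\totrecvcw[v]$ held at some earlier moment (two monotone counters can cross and then meet again). The paper closes this by splitting time: before \CW-quiescence, \cref{lem:progress} (equivalently \cref{cor:cw_qu_non_termination}) directly gives $\totrecvccw[v]\le\totrecvcw[v]$; after \CW-quiescence, $\totrecvcw[v]$ is frozen at $\maxID$ while \cref{cor:recv_bound} applied to the \CCW instance bounds $\totrecvccw[v]\le\maxID$. You already have all of these ingredients in hand, so the fix is immediate---just make the two-phase argument explicit rather than inferring non-termination from the final equality alone.
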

\begin{proof}  
    If there is no quiescence over the \CW channel, then no node terminates by \Cref{cor:cw_qu_non_termination}.
    By \Cref{lem:progress}, 
    $\totrecvccw \leq \totrecvcw$ holds for all nodes with equality only when
    $\totrecvccw = \totrecvcw = 0$. Since $\ID{\vmax} > 0$, this proves the first implication.
    
    By \Cref{lem:le_equivalences} (applied to the \CW instance),  \CW-quiescence implies that all nodes $v$ satisfy
    $\totrecvcw[v] = \maxID$, and, in particular, due to the uniqueness of \maxID, only one
    node $\vmax$ with $\ID{\vmax} = \maxID$ satisfies $\totrecvcw[\vmax] = \ID{\vmax}$ at this point, which
    proves the second implication.

    Again, by \Cref{lem:le_equivalences}, \CW-quiescence is equivalent to all nodes $v$
    satisfying $\totrecvcw[v] \geq \ID{v}$. Thus, all nodes satisfy the condition on \Cref{alg:LE:CCW-begin}
    and partake in the \CCW instance.
    Now consider the point of time where $\totrecvcw[\vmax] = \totrecvccw[\vmax] = \ID{\vmax} = \maxID$
    holds for the first time. As $\vmax$ has not yet executed \Cref{line:terminating_pulse_}, an execution
    of \Cref{alg:terminating_le} is identical to an execution of \Cref{alg:nonterminating_le} over the
    \CCW channel. By \Cref{lem:leader_last} (applied to the \CCW channel at that point of time),
    the node~$\vmax$ is the last node
    to meet $\totrecvccw[\vmax] \geq \ID{\vmax}$, which is equivalent to $\forall{v}: \totrecvccw[v] = \totsentccw[v] = \maxID$
    by \Cref{lem:le_equivalences}, yielding the third implication.

    By a symmetric argument for the \CW instance, we get $\forall{v}: \totrecvcw[v] = \totsentcw[v] = \maxID$, 
    which proves the fourth implication.

    Finally, as argued previously, no node terminates until \CW-quiescence is achieved. Moreover, once there is
    \CW-quiescence, then $\totrecvcw[v] = \maxID$ remains to hold for all $v$ by \Cref{lem:le_equivalences}. As all nodes
    are partaking in the \CCW algorithm at this point, by \Cref{cor:recv_bound} applied to the \CCW instance,
    $\totrecvccw[v] \leq \maxID = \totrecvcw[v]$ holds
    until the event $\totrecvcw[\vmax] = \totrecvccw[\vmax] = \ID{\vmax}$ first occurs
    before $\vmax$ executes \Cref{line:terminating_pulse_},
    so no node has terminated, which concludes the proof.
\end{proof}

We now show that the condition $\totrecvcw[\vmax] = \totrecvccw[\vmax] = \ID{\vmax}$ 
in \Cref{lem:necessary_term_conditions} must be met at some point by a node
$\vmax$ possessing the largest ID, hence establishing the uniqueness of the event
we use to drive termination.

\begin{lemma}\label{lem:leader_starts_termination}
    Let $\vmax$ be the node with $\ID{\vmax} = \maxID$. Then there is an iteration
    where the event $\totrecvcw[\vmax] = \totrecvccw[\vmax] = \ID{\vmax}$ occurs.
\end{lemma}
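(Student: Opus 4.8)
The plan is to show that the node $\vmax$ of largest ID eventually reaches the combined event $\totrecvcw[\vmax] = \totrecvccw[\vmax] = \ID{\vmax}$ by arguing that both the \CW\ and \CCW\ instances eventually stabilize at $\vmax$, and that they do so without $\vmax$ ever prematurely terminating or sending a termination pulse. The key structural fact I would lean on is that \Cref{alg:terminating_le} runs two essentially independent copies of \Cref{alg:nonterminating_le}, one per channel, so I can import the convergence results of \Cref{sec:warmup} for each direction separately.

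First I would handle the \CW\ instance. Since the \CW\ portion of \Cref{alg:terminating_le} is identical to \Cref{alg:nonterminating_le} over the \CW\ channel, \Cref{cor:sent_recv_maxid} guarantees that eventually every node, and in particular $\vmax$, satisfies $\totrecvcw[\vmax] = \maxID = \ID{\vmax}$, and that the network reaches \CW-quiescence at that point. Crucially, I must confirm that up to this moment $\vmax$ has not deviated from plain \Cref{alg:nonterminating_le} behavior: it cannot have terminated (by \Cref{cor:cw_qu_non_termination}, no node terminates before \CW-quiescence) and it cannot yet have fired the termination pulse on \Cref{line:terminating_pulse_}, since that requires $\totrecvccw[\vmax] = \ID{\vmax}$, which by \Cref{lem:progress} cannot hold while \CW-quiescence is pending (there $\totrecvccw[\vmax] \le \totrecvcw[\vmax]$ with equality only at zero). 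So the \CW\ analysis applies cleanly to $\vmax$.

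Next I would handle the \CCW\ instance. Once \CW-quiescence is reached, \Cref{lem:le_equivalences} tells us that every node $v$ has $\totrecvcw[v] \ge \ID{v}$, so every node has passed the guard on \Cref{alg:LE:CCW-begin} and is now participating in the \CCW\ copy of \Cref{alg:nonterminating_le}. As noted in the footnote accompanying \Cref{lem:progress}, the fact that different nodes joined the \CCW\ instance at different times is harmless, since undelivered \CCW\ pulses simply wait in queues; the situation is equivalent to a synchronized late start with extra delays. Therefore the convergence guarantee \Cref{cor:sent_recv_maxid} applies to the \CCW\ instance as well, forcing $\totrecvccw[\vmax]$ to eventually reach $\maxID = \ID{\vmax}$. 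Combining this with the already-established $\totrecvcw[\vmax] = \maxID$ (which persists, being monotone and capped at $\maxID$ by \Cref{cor:recv_bound}) gives the desired simultaneous event $\totrecvcw[\vmax] = \totrecvccw[\vmax] = \ID{\vmax}$.

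The main obstacle I anticipate is carefully justifying that $\vmax$ does not leave the regime where the warm-up analysis applies before the \CCW\ instance converges. Specifically, I must rule out that $\vmax$ executes \Cref{line:terminating_pulse_} (and thereby sends an \emph{extra} \CCW\ pulse that would break the correspondence with \Cref{alg:nonterminating_le}) at any earlier iteration. Since that line is triggered exactly by the event $\totrecvcw[\vmax] = \totrecvccw[\vmax] = \ID{\vmax}$ whose first occurrence I am trying to locate, I would phrase the argument so that I consider the \emph{first} iteration at which $\totrecvccw[\vmax]$ reaches $\ID{\vmax}$: up to that iteration $\vmax$ has behaved exactly as in \Cref{alg:nonterminating_le} over \CCW, so \Cref{cor:sent_recv_maxid} legitimately applies to produce that first occurrence, and at that same iteration $\totrecvcw[\vmax] = \maxID = \ID{\vmax}$ still holds. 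This makes the event well-defined and shows it genuinely occurs, completing the proof.
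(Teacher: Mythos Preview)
Your proposal is correct and follows essentially the same approach as the paper: both arguments reduce \Cref{alg:terminating_le} to two copies of \Cref{alg:nonterminating_le} (valid because no termination pulse is sent until the event in question first occurs) and then invoke \Cref{cor:sent_recv_maxid} on each channel. The paper packages this as a short proof by contradiction (assume the event never occurs at $\vmax$, note it cannot occur elsewhere by \Cref{lem:necessary_term_conditions:prop:l_is_maxid}, so both instances are pure \Cref{alg:nonterminating_le} and must converge, contradiction), whereas you give the direct version and are more explicit about why $\vmax$ cannot deviate prematurely; the underlying reasoning is the same.
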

\begin{proof}
    By \Cref{lem:necessary_term_conditions:prop:l_is_maxid}, the event cannot occur at any other node $v \neq \vmax$.
    Assume arguably that $\vmax$ never meets this condition, then all nodes are running exactly
    \cref{alg:nonterminating_le} over both channels. Thus, eventually, by \Cref{cor:sent_recv_maxid} applied
    to both instances, there is some point of time where every node has sent and received
    exactly $\maxID$ pulses over both channels. For node $\vmax$, we then have
    that $\totrecvcw[\vmax] = \totrecvccw[\vmax] = \maxID = \ID{\vmax}$ occurs, a contradiction.
\end{proof}

With the above, we can now conclude the proof of Theorem~\ref{thm:LE-oriented}.
\begin{proof}[Proof of Theorem~\ref{thm:LE-oriented}]
Assume an oriented ring of $n$ nodes with unique IDs, where each node executes \Cref{alg:terminating_le}.
Let $\vmax$ be the node with the largest ID. We claim that all nodes terminate at some point with exactly one node, $\vmax$,
outputting \leader while all other nodes output \nonleader.

 By \Cref{lem:leader_starts_termination}, 
$\vmax$ meets the condition $\totrecvcw[\vmax] = \ID{\vmax} = \totrecvccw[\vmax]$ at some
iteration, and, due to \Cref{lem:necessary_term_conditions:prop:l_is_maxid}, no other node does.
Consider the first such iteration.
By \Cref{lem:necessary_term_conditions:prop:non_termination}, no node has
terminated prior to that.
Additionally, $\totrecvcw[v] = \totrecvccw[v] = \maxID$ holds
for all $v$ due to \Cref{lem:necessary_term_conditions:prop:all_recv_maxid}. Thus, exactly one node, $\vmax$, still satisfies $\totrecvcw[\vmax] = \maxID = \ID{\vmax}$ and
has a state of \leader while every other node $v \neq \vmax$ has $\totrecvcw[v] = \maxID > \ID{v}$ and
a state of \nonleader. By \Cref{lem:necessary_term_conditions:prop:cw_quiescence},
the network is in \CW-quiescence, and no more \CW pulses
are sent or received, so those states are final.

It remains to show that all nodes terminate and that
when they do so, the network is in quiescence (no pulses still in transit, i.e., \emph{a quiescent termination}).

By \Cref{lem:necessary_term_conditions:prop:ccw_quiescent}, no \CCW pulses are in transit when the event
$\totrecvcw[\vmax] = \totrecvccw[\vmax] = \ID{\vmax}$ occurs firstly.
Once $\vmax$ executes \Cref{line:terminating_pulse_} during that iteration, there is exactly one \CCW
pulse in transit. Therefore, every node $v \neq \vmax$, having
$\totrecvcw[v] = \totrecvccw[v] = \maxID > \ID{v}$, receives this \CCW pulse, relays it, and then meets the termination
condition on \Cref{line:terminating_condition_} and outputs \nonleader correctly.
Finally, due to \Cref{line:leader_waits}, node $\vmax$ has been waiting on a \CCW pulse, which it then receives and consumes,
reaching \CCW-quiescence. The node $\vmax$ then meets the condition $\totrecvccw[\vmax] > \totrecvcw[\vmax]$
and terminates with the correct output of \leader. 
As quiescence has been reached over both channels, the algorithm terminates quiescently and correctly elects
a leader.

We finally describe the complexity of the algorithm. By termination,
every node $v$ in the algorithm has $\totrecvcw[v] = \maxID$ and
$\totrecvccw[v] = \maxID + 1$. Consequently, the overall complexity is
$n \cdot ({\maxID + 1}) + n \cdot \maxID = n (2\cdot \maxID + 1)$.
\end{proof}

\section{Anonymous Rings}
\label{sec:anonymousRings}
In this section, we consider the setting where a ring consists of $n$
identical nodes, each with access to an independent source of randomness.
We call such a ring \textit{anonymous}. As is standard in the literature about randomized algorithms, we aim to solve our computational task \emph{with high probability}, defined as bounding the probability of failure by an arbitrary inverse power of~$n$, the size of the network. That is, we present an algorithm parameterized by a value $c>0$, such that for any~$n$, the algorithm correctly elects a leader in $n$-sized rings with probability at least $1-O(n^{-c})$. Our algorithm also correctly orients a non-oriented ring with high probability.

As alluded to in the main text, electing a leader with quiescent termination is impossible in the context of this section, even if we relax the objective to only succeed with some arbitrary small constant probability. This follows from the following negative result by Itai and Rodeh~\cite{IR90}. 
\begin{corollary}[Corollary of {\cite[Thm.\ 4.2]{IR90}}]
    \label{cor:anonymous-impossibility}
    For any constant $\varepsilon>0$, there is no quiescently terminating algorithm for electing a leader in the anonymous ring that succeeds with at least constant probability $\varepsilon$.
\end{corollary}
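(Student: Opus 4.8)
The plan is to obtain the statement as a reduction from the impossibility of computing the ring size~$n$ with termination, which is precisely the content of \cite[Thm.~4.2]{IR90}. Suppose toward a contradiction that, for some constant $\varepsilon>0$, there were a quiescently terminating content-oblivious algorithm~$A$ electing a leader on anonymous rings with success probability at least~$\varepsilon$. Since quiescent termination is a special case of termination, $A$ is in particular a terminating algorithm, so it suffices to rule out terminating leader election. The key first observation I would make is that a content-oblivious algorithm reacts only to the arrival pattern of pulses; running it in the standard content-carrying asynchronous message-passing model of Itai and Rodeh---with nodes sending fixed-content messages and ignoring all content---reproduces every state transition and every message event verbatim, under any scheduler. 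Hence $A$ is equally a terminating leader-election algorithm in their model, with the same success probability~$\varepsilon$.

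Next I would convert this leader-election procedure into a size estimator. Whenever $A$ succeeds (which happens with probability at least~$\varepsilon$) it leaves a single distinguished leader; that leader then launches one traversal of the ring, emitting a token that each node forwards to its opposite port while incrementing a counter carried inside the token. When the token returns, the leader reads off the value~$n$ and disseminates it. This post-processing is deterministic, always correct once a unique leader is present, works irrespective of orientation (each node simply forwards to the port it did not receive on, so the argument applies verbatim to non-oriented anonymous rings), and plainly terminates. Composing it with~$A$ therefore yields a terminating algorithm that outputs the correct value of~$n$ with probability at least~$\varepsilon$, contradicting \cite[Thm.~4.2]{IR90}.

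The only delicate point---and the one I expect to be the real obstacle---is the probability regime in which \cite[Thm.~4.2]{IR90} is invoked: the reduction is useful only if their impossibility already forbids \emph{constant} success probability, rather than merely exact (Las~Vegas) correctness. This is indeed the case, so I would cite their result in exactly this strong form. The underlying reason is the standard covering argument, in which an execution on a ring of size~$n$ is coupled with the execution on the ring of size~$kn$ obtained by gluing together $k$ identical copies and feeding congruent nodes the same random bits; the terminating size estimator then behaves identically on both rings and is therefore wrong on at least one of them with constant probability. Granted this strong form, the composed estimator above is immediately ruled out, while the two preceding steps---the passage from the content-oblivious to the standard model, and the leader-to-size post-processing---are entirely routine.
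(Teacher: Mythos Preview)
Your proposal is correct and follows essentially the same route as the paper: transfer the hypothetical content-oblivious algorithm to the standard message-passing model of Itai and Rodeh, then argue that a terminating leader election yields a terminating size computation, contradicting \cite[Thm.~4.2]{IR90} in its constant-probability form. The only cosmetic difference is that the paper invokes \cite{ccgs23} for the leader-to-size step, whereas you spell out the token traversal explicitly; both are equally valid once one is working in the content-carrying model where messages can be tagged and composition is unproblematic.
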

\begin{proof}
    Itai and Rodeh show that no terminating algorithm can compute the number of nodes in an anonymous ring with any constant probability of success~\cite[Thm.\ 4.2]{IR90}. As any algorithm in our setting also works in the setting considered in that prior work (in which messages carry content instead of being pulses), the impossibility result carries to our setting. As any algorithm for electing a leader with quiescent termination would allow us to count the number of nodes in the ring afterwards~\cite{ccgs23}, leader election suffers from the same impossibility result.
\end{proof}

Consequently, we only aim for a quiescently stabilizing algorithm.

As in \cref{sec:LE-non-oriented}, 
observe that despite assuming the uniqueness of all IDs in \Cref{sec:warmup} to elect the node with the largest ID, the task remains
well-defined even if only the largest ID is unique. \Cref{lem:nt-le-duplicate-ids} showed that on a ring with possibly non-unique IDs, \cref{alg:nonterminating_le} elects as leader all nodes $v$ such that $\ID{v} = \maxID$. Hence, if a single node satisfies $\ID{v} = \maxID$, a single leader is elected. \Cref{alg:ring_orientation} is simply two parallel executions of \cref{alg:nonterminating_le}. As we already showed in the proof of \cref{thm:orient-ring}, those two parallel executions still yield the desired result as long as the maximal ID in the two executions is unique. Hence, providing an algorithm for sampling IDs with the guarantee that the maximal ID is unique with high probability is sufficient to obtain a variant of
\cref{thm:orient-ring} for anonymous rings.

In what follows, we outline a process by which nodes can utilize their access to
randomness to sample random IDs with the guarantee that, with high probability, the maximal ID is unique and of order $n^{O(1)}$. The algorithm terminates quiescently---in fact, it uses no communication---enabling composition: any algorithm can be performed afterwards with the sampled IDs. As a result, with high probability,
the anonymous setting reduces to the one considered in \Cref{lem:nt-le-duplicate-ids}. As explained previously, such an algorithm for sampling IDs immediately implies the following result.

\begin{restatable}{theorem}{AnonymousLEThm}
\label{thm:anon_le}
    There is a content-oblivious algorithm of complexity $n^{O(1)}$
    that elects a leader and orients an anonymous ring of $n$ nodes, each
    with access to its own source of randomness, with high probability.
    The algorithm reaches quiescence but does not terminate.
\end{restatable}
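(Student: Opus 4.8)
The plan is to reduce \cref{thm:anon_le} to the already-established results on rings with non-unique IDs, namely \cref{lem:nt-le-duplicate-ids} and the proof of \cref{thm:orient-ring}, by supplying a purely local, communication-free procedure that samples node IDs. As the surrounding text explains, the entire task boils down to guaranteeing that, with high probability, the maximum sampled ID is attained by a \emph{unique} node and that all IDs are of size $n^{O(1)}$. Once this guarantee holds, the modified \cref{alg:ring_orientation} (with $\ID{v}^{(1)} \gets \ID{v}+1$ and $\ID{v}^{(0)} \gets \ID{v}$) runs exactly as in the non-anonymous case, elects the single node of maximal ID, and orients the ring, all while reaching quiescence without terminating. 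The sampling itself uses no pulses, so composition is immediate and the overall message complexity is inherited from \cref{thm:orient-ring}, i.e.\ $O(n \cdot \maxID) = n^{O(1)}$ since the IDs are polynomially bounded.

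First I would specify the sampling rule: each node independently draws an ID uniformly at random from $[N]$ for a suitable polynomial bound $N = n^{\Theta(1)}$. The natural difficulty is that the nodes do \emph{not} know $n$ (the algorithm is uniform), so one cannot simply set $N = n^{c+2}$. I would resolve this in the standard way for anonymous rings: have each node draw an unboundedly long random string (say, sampling bits one at a time) and treat the resulting real number in $[0,1)$, or equivalently an arbitrarily long binary ID, as its identifier, comparing IDs lexicographically. Any finite prefix long enough to break all ties suffices; since the nodes are event-driven and comparisons only ever examine finitely many bits, this can be implemented lazily. Alternatively, and more cleanly for stating a polynomial bound, I would note that the nodes may use the parameter $c$ together with the fact that the relevant failure event (a tie for the maximum) can be analyzed without knowing $n$, because collision probabilities only improve as $n$ grows relative to a fixed ID range.

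The key probabilistic step is to bound the probability that the maximum is \emph{not} unique. With $n$ nodes drawing IDs independently and uniformly from $[N]$, the probability that the two largest values collide is at most $\binom{n}{2}/N \le n^2/N$ by a union bound over all pairs sharing a value at least as large as any third node's; more directly, the probability that any two nodes collide at all is at most $n^2/N$, which already implies a unique maximum. Setting $N = n^{c+2}$ yields a failure probability of at most $n^{-c}$, establishing the high-probability guarantee and keeping all IDs bounded by $n^{c+2} = n^{O(1)}$. To dispense with the knowledge of $n$, I would instead draw IDs of growing length until uniqueness of the maximum is locally certifiable, or simply invoke the unbounded-string formulation above, where the probability of two infinite random strings being equal is zero and a union bound over the $\binom{n}{2}$ pairs gives failure probability $0$; the polynomial length bound then holds with high probability because, with probability at least $1 - n^{-c}$, all pairwise longest common prefixes have length $O(\log n)$, so $O(\log n)$-bit IDs already break all ties and hence $\maxID = n^{O(1)}$.

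\textbf{Main obstacle.} The principal subtlety is not the collision bound itself, which is routine, but reconciling the \emph{uniformity} of the algorithm (no knowledge of $n$) with the requirement that the sampled $\maxID$ be polynomially bounded \emph{and} that composition with \cref{alg:ring_orientation} go through cleanly. The cleanest route is the lazy unbounded-ID approach, where correctness (unique maximum) holds with probability $1$ and the polynomial message complexity holds with high probability via the $O(\log n)$ longest-common-prefix bound. I would then simply invoke \cref{lem:nt-le-duplicate-ids} and the argument in the proof of \cref{thm:orient-ring}: conditioned on the maximum being unique, those results give a single elected leader, a consistent orientation, and quiescence, completing the proof.
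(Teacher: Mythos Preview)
You have correctly identified the reduction (sample IDs, then invoke \cref{lem:nt-le-duplicate-ids} and the proof of \cref{thm:orient-ring}) and the real obstacle (uniformity: the nodes do not know~$n$, yet must commit to a finite integer ID whose maximum is both unique and polynomially bounded). However, your proposed resolution of that obstacle does not work in this model.

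The ``lazy unbounded-ID'' idea---treating the ID as a real in $[0,1)$ or an infinite bit string, with comparisons done lazily---is incompatible with how \cref{alg:ring_orientation} (and \cref{alg:nonterminating_le} underneath) actually uses the ID. The ID is not merely compared to other IDs; it is used as an \emph{integer threshold on a pulse counter} (the test $\totrecvport{1-i} \neq \ID{v}^{(i)}$). A node must know a concrete positive integer before it starts processing pulses, because that integer determines the single moment at which it absorbs rather than relays. A real in $[0,1)$, or a string revealed bit by bit, gives no such integer. Your fallback, ``draw IDs of growing length until uniqueness of the maximum is locally certifiable,'' also fails: uniqueness of the global maximum cannot be certified locally without communication, and the sampling phase is communication-free by design. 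Finally, ``collision probabilities only improve as $n$ grows relative to a fixed ID range'' goes the wrong way---for fixed $N$ the collision probability $\binom{n}{2}/N$ \emph{increases} with $n$.

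The paper closes this gap with a specific distribution rather than a lazy mechanism: each node samples the \emph{bit-length} of its ID from a geometric distribution with parameter $1-p$, $p = 2^{-1/(c+2)}$, and then samples that many uniform bits. The point is that the maximum of $n$ i.i.d.\ geometrics is $\Theta(\log n)$ with high probability, so the maximum ID is automatically $n^{\Theta(1)}$ even though no node knows~$n$. A short calculation then shows the maximum is unique with high probability. This is the missing idea: a length distribution whose maximum over $n$ samples scales correctly with $n$ without any node being told $n$.
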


We now present the algorithm for sampling IDs guaranteeing a unique maximal ID with high probability; see \cref{alg:id_sampling}.
At a high level, the algorithm has each node first sample
the number of bits in its ID from a geometric distribution with parameter
$p = 2^{-1/\Theta(c)}$ before sampling said bits uniformly at random.
While each ID is of expected length $\Theta(c)$ and value $2^{\Theta(c)}$ at the end of this process, the maximal ID is of length $\Theta(c^2 \log n)$ and value $n^{\Theta(c^2)}$, with high probability. See \cite{bruss1990maximum}.
Intuitively, while sampling a large ID is an unlikely event, this unlikely event becomes more and more likely to occur somewhere in the network as the network grows. Similar ideas have previously appeared in the literature, e.g., in~\cite{AM94,FPSS22}.
While in other settings, as in~\cite{AM94}, similar ID sampling algorithms
may utilize message-passing to assign unique IDs, this
is not an option in a content-oblivious setting.
Fortunately, communication between nodes is not necessary in order
to just achieve a unique maximal ID with high probability.
Giving unique IDs to all nodes with quiescent termination is also impossible, even with just a small constant probability of success from the impossibility result of Itai and Rodeh~\cite{IR90} (see \cref{cor:anonymous-impossibility}). This is because unique IDs would allow us to run \cref{alg:terminating_le} and elect a leader. In fact, most nodes have non-unique IDs after running \cref{alg:id_sampling}.

\begin{algorithm}[ht]\caption{Message-Free Algorithm for Sampling an ID of Order $n^{O(c^2)}$ for Node $v$, $c > 0$}\label{alg:id_sampling}
		\begin{algorithmic}[1]
            \State{$p \gets 2^{-1/(c+2)}$}
            \State{Sample \textit{BitCount }$\sim Geo(1-p)$, i.e., according to the geometric distribution with
            parameter $1-p$}
            \State{Sample $\ID{v}$ uniformly at random from $\{0,1\}^{\textit{BitCount}}$}
		\end{algorithmic}
\end{algorithm}

\begin{lemma}\label{lem:id_sampling}
    For any constant $c > 0$, with high probability,
    running \Cref{alg:id_sampling} in an anonymous ring of $n$ nodes assigns each one an ID of size $n^{O(c^2)}$
    such that the maximal ID is attained uniquely by one node and is at least $n^{\Omega(c)}$.
\end{lemma}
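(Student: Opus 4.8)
The plan is to analyze \cref{alg:id_sampling} through the behavior of the maximum of $n$ i.i.d.\ random variables, where each node's ID is determined by first sampling a bit-length from a geometric distribution and then sampling uniform random bits of that length. The key observation is that the bit-count is the dominant source of variation: a node whose \textit{BitCount} strictly exceeds that of all others will automatically have the strictly largest ID (since an ID with more bits, assuming a leading-one convention or simply comparing as integers with the convention that more sampled bits yields a larger magnitude range, dominates any ID with fewer bits). Therefore I would reduce the uniqueness-of-maximum question to two sub-events: (1) the maximum bit-count is attained by a \emph{unique} node, and (2) even when the top two bit-counts tie, the uniform bits break the tie. Handling (1) alone suffices for the high-probability claim, so I would focus there and treat the tie-breaking as a bonus that only helps.

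First I would set $q = 1-p = 1 - 2^{-1/(c+2)}$, so that $\Pr[\textit{BitCount} \ge k] = p^{k} = 2^{-k/(c+2)}$ (using the tail of the geometric distribution). I would then choose a threshold $k^\star = \Theta(c \log n)$ and show two things. For the \textbf{lower bound} $\maxID \ge n^{\Omega(c)}$: the probability that a fixed node has \textit{BitCount} at least $k^\star$ is $2^{-k^\star/(c+2)} = n^{-\Theta(1)}$, so by independence across the $n$ nodes, with high probability at least one node has bit-count $\ge k^\star$, giving an ID of value $\ge 2^{k^\star-1} = n^{\Omega(c)}$. For the \textbf{upper bound} $n^{O(c^2)}$: the probability that any node has bit-count exceeding $K = \Theta(c^2 \log n)$ is, by a union bound over $n$ nodes, at most $n \cdot 2^{-K/(c+2)} = n \cdot n^{-\Theta(c)} = n^{-\Omega(c)}$, so with high probability every ID is at most $2^{K} = n^{O(c^2)}$.

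The uniqueness of the maximum is the crux, and I would argue it via the standard ``maximum of geometric random variables'' phenomenon, citing the concentration result of Bruss and O'Cinneide~\cite{bruss1990maximum} already referenced in the text. The clean way is to condition on the realized value $M$ of the maximum bit-count and bound the probability that two or more nodes achieve exactly $M$ bits and additionally tie in their uniform bits. Concretely, the event that the maximum is non-unique at the bit-count level and survives tie-breaking requires at least two nodes to agree on both their (maximal) bit-count and all $M \ge k^\star = \Omega(c\log n)$ of their uniform bits; the probability that two specified nodes so agree is at most $\sum_{k} \Pr[\textit{BitCount}=k]^2 \cdot 2^{-k}$, which is dominated by small-$k$ terms but, crucially, conditioned on both being maximal and hence $\ge k^\star$, the $2^{-k}$ factor contributes $2^{-k^\star} = n^{-\Omega(c)}$; a union bound over the $\binom{n}{2}$ pairs then gives failure probability $n^{2} \cdot n^{-\Omega(c)} = n^{-\Omega(1)}$ for $c$ a large enough constant. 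Combining the three high-probability events by a final union bound yields the lemma.

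The \textbf{main obstacle} I anticipate is making the uniqueness argument fully rigorous without hand-waving over the conditioning on the maximum. The subtlety is that the value of the maximum bit-count $M$ is itself random and correlated with which nodes achieve it, so one cannot naively multiply ``probability of a tie at level $M$'' by ``tie-breaking failure.'' The cleanest route is to avoid conditioning entirely and instead directly bound, for each unordered pair $\{u,w\}$, the probability that both $u$ and $w$ produce identical IDs that are simultaneously $\ge 2^{k^\star - 1}$; this is a single clean sum $\sum_{k \ge k^\star} q p^{k-1} \cdot q p^{k-1} \cdot 2^{-k}$ over the shared bit-length, which is a convergent geometric series bounded by $n^{-\Omega(c)}$, after which a union bound over pairs closes the argument. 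Provided the large ID event and the no-collision-among-large-IDs event both hold, the unique largest ID follows, since any non-maximal-length ID is strictly smaller and the two top-length IDs cannot collide.
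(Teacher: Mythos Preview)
Your three-step skeleton---lower-bound the maximum \textit{BitCount} by $\Theta(c\log n)$ via $(1-p^L)^n\le e^{-np^L}$, upper-bound it by $\Theta(c^2\log n)$ via a union bound over the $n$ nodes, and control uniqueness of the maximum by a birthday bound of order $\binom{n}{2}\cdot 2^{-\Theta(c\log n)}$---is exactly the paper's proof. The only cosmetic difference is that the paper conditions on the event $M>L$ and writes the collision probability as $\binom{n}{2}\cdot 2^{-M}$ directly, whereas you unroll this into a pairwise geometric sum over bit-lengths; these are the same computation.

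There is, however, a real gap in how you close the argument. You twice invoke ``larger \textit{BitCount} $\Rightarrow$ larger ID'': once when asserting that a node with strictly maximal \textit{BitCount} ``automatically'' has the strictly largest ID, and again in your last sentence (``any non-maximal-length ID is strictly smaller''). This is false for \cref{alg:id_sampling} as written, since the ID is uniform on $\{0,1\}^{\textit{BitCount}}$ with no forced leading one; e.g., a $5$-bit sample $00001$ is smaller than a $3$-bit sample $111$. Consequently your sum $\sum_{k\ge k^\star}(qp^{k-1})^2\cdot 2^{-k}$ only rules out equal-length string collisions, not integer ties across different lengths, and a large maximum \textit{BitCount} does not by itself yield $\maxID\ge n^{\Omega(c)}$. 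The repair is cheap: for the lower bound, intersect the large-\textit{BitCount} event with ``top sampled bit is $1$'' (costing only a constant factor in the exponent); for uniqueness, either bound $\Pr[\ID{u}=\ID{w}\ge 2^{k^\star}]$ summed over integer values directly, or note (as the paper effectively does) that any collision with a node holding $M$ bits has probability at most $2^{-M}$ regardless of the other node's bit-length. But as stated, the final step does not go through without the leading-one convention you yourself flagged as an extra assumption.
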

\begin{proof}
    Let $S_1,\ldots, S_n$ be random variables associated to the
    the variable \textit{BitCount} for each node after termination
    is reached across all nodes.
    
    Observe that for all $i \in [n]$ and $x > 0$, we have $\Pr[S_i > x] = p^{x}$.
    
    Define the random variable $M := \max_i{S_i}$, that is, $M$ is the length of the longest
    sampled ID. We first aim to show that $M$ exceeds
    $L := (c+2)\log{\frac{n}{c\ln{n}}} = \Omega(c \log{n})$
     with high probability.

    Indeed, the complementary event $M \leq L$ has probability
        $\Pr[M \leq L] = \Pr[\forall i \in [n], S_i \leq L]
                               = (1-\Pr[S_i > L])^n 
                               = (1-p^{L})^n$,
    where the second equality follows due to the independence between
    the variables $S_i$.

    By definition of $p := 2^{-1/(c+2)}$ and the bound $1-x \leq e^{-x}$,
    we get
    \[
        \Pr[M \leq L] = (1-p^{L})^n 
        \leq \exp(-p^{L}n)
        \leq \exp(-c\ln{n}) = n^{-c}\text{.}
        \tag{1}\label{lem:max_len_lb}
    \]
    Thus, with high probability, $M > L = \Omega(c \log{n})$.
    To see that the largest ID is unique with high probability,
    note that the probability of a collision on $M$ bits is at most
    ${\binom{n}{2}} \cdot 2^{-M}$, which, with high probability due inequality~\eqref{lem:max_len_lb}, is bounded by 
    $\frac{n^2}{2} \cdot 2^{-L} =
                \frac{n^2}{2} \cdot (\frac{n}{c\ln{n}})^{-(c+2)} = O(n^{-c+\varepsilon})$
    for any small constant $\varepsilon > 0$.
    
    Finally, we show that $M = O(c^2 \log{n})$ with high probability.
    Letting $U := (c+1)(c+2) \log{n}$, the probability of $M$
    exceeding $U$ can be bound as follows:
    \[
        \Pr[M > U] = \Pr\left[\exists i \in [n], S_i > U\right]
        \leq p^{U} \cdot n = n^{-(c+1)} \cdot n = n^{-c} 
        \text{,}
    \]
    where the inequality follows from a union bound.
    Therefore, with probability at least $1 - n^{-c}$,
    the sampled IDs are bounded by $2^U = n^{O(c^2)}$.
\end{proof}

The proof of \Cref{thm:anon_le} follows immediately from \Cref{lem:id_sampling}.

Finally, we note that a slight modification of \cref{alg:ring_orientation} would also allow us to sample a unique ID for each node 
(\cref{prop:anonymous-unique-ids}), with high probability. As a result, the three settings of (1) the anonymous ring, (2) the ring with a leader, 
and (3) the ring with unique IDs and an orientation, are crucially separated by the possibility of quiescent termination. Leader election, orienting 
the ring, and assigning unique IDs can all be computed in setting (1). However, they can only be done without termination, while in settings (2) and (3), 
the same tasks and more can be performed with quiescent termination.

\begin{proposition}
    \label{prop:anonymous-unique-ids}
    Let $\ID{v}$ be the ID sampled by nodes in \cref{alg:id_sampling}, before running \cref{alg:ring_orientation}. Modify \cref{alg:ring_orientation} so that whenever a node receives a pulse, if $\min(\totrecvport{0},\totrecvport{1}) > \ID{v}$, node $v$ updates its ID to a new ID sampled uniformly at random between $1$ and $\min(\totrecvport{0},\totrecvport{1})-1$. Then, with high probability, all nodes have distinct IDs when reaching quiescence.
\end{proposition}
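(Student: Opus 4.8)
The plan is to decouple the two effects of the modification: the circulation of pulses, which governs quiescence and the final counter values, and the evolution of the resampled variable $\ID{v}$, which determines the final identities. First I would observe that the forwarding decisions in \cref{alg:ring_orientation} (\cref{line:nt-le-forward-test}) consult only the virtual identifiers $\IDsup{v}{0}, \IDsup{v}{1}$ frozen at initialization on \cref{line:nt-le-new-ids}, and never the current value of $\ID{v}$. Hence resampling $\ID{v}$ cannot alter which pulses a node relays, so the pulse dynamics are identical to the unmodified algorithm, and all conclusions of \cref{thm:orient-ring} carry over verbatim: with the virtual IDs $\IDsup{v}{1} = \ID{v}+1$ and $\IDsup{v}{0} = \ID{v}$, the network reaches quiescence and every node ends up having received $\maxID+1$ pulses in one direction and $\maxID$ in the other, where $\maxID$ is the largest initial ID. In particular $\min(\totrecvport{0}[v], \totrecvport{1}[v]) = \maxID$ at every node at quiescence, and this minimum is non-decreasing and capped at $\maxID$ throughout the run, since one of the two counters never exceeds $\maxID$.

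Next I would invoke \cref{lem:id_sampling}: with high probability the largest initial ID $\maxID$ is attained uniquely, by a node $v^\ast$, and $\maxID \ge n^{\Omega(c)}$. Conditioning on this event, the quantity $\min(\totrecvport{0}[v^\ast], \totrecvport{1}[v^\ast])$ never exceeds $\maxID = \ID{v^\ast}$, so the resampling guard $\min > \ID{v^\ast}$ never fires and $v^\ast$ retains its ID $\maxID$. For every other node $v$ the initial ID is strictly below $\maxID$, while its minimum counter climbs to $\maxID$, so the guard fires. The key mechanism to spell out here is that immediately after a resample the new ID lies strictly below $\min$, whence the guard fires again on the very next pulse reception and keeps firing on every subsequent one; the last such reception occurs with $\min = \maxID$, at which point the fresh draw is uniform on $\{1, \dots, \maxID-1\}$. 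Thus the final ID of each of the $n-1$ non-maximal nodes is an independent uniform sample from $\{1, \dots, \maxID-1\}$, all differing from $v^\ast$'s value $\maxID$.

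Finally, distinctness reduces to a birthday-style collision bound. Conditioned on $\maxID$, the $n-1$ non-maximal final IDs are independent and uniform on $\{1, \dots, \maxID-1\}$, so the probability that two of them coincide is at most $\binom{n-1}{2}/(\maxID-1) \le n^2/(2\,n^{\Omega(c)}) = n^{-\Omega(1)}$, using $\maxID \ge n^{\Omega(c)}$. A union bound over this collision event and the two failure events of \cref{lem:id_sampling} then gives that all $n$ final IDs are distinct with high probability.

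I expect the main obstacle to be the first step, namely arguing rigorously that resampling is inert with respect to the pulse dynamics. The cleanest route is the observation above that forwarding reads only the frozen virtual IDs from \cref{line:nt-le-new-ids}; under the alternative reading in which a resample also recomputes $\IDsup{v}{0}, \IDsup{v}{1}$ from the updated $\ID{v}$, one must instead verify that a resample can only be triggered after both of a node's absorption events have already occurred, so the relayed sequence is unchanged, and that the lowered thresholds never induce a further absorption. A secondary point meriting care is that the minimum counter stabilizes to exactly $\maxID$ uniformly across all nodes despite the non-unique virtual IDs, which is precisely the guarantee \cref{lem:nt-le-duplicate-ids} supplies for each of the two parallel executions.
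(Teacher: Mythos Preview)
Your proof is correct and follows essentially the same three-step structure as the paper: argue that resampling leaves the pulse dynamics unchanged, observe that every non-maximal node's final resample occurs with $\min(\totrecvport{0},\totrecvport{1}) = \maxID$, and finish with a birthday bound using $\maxID \ge n^{\Omega(c)}$ from \cref{lem:id_sampling}. Your treatment is in fact more careful than the paper's in two respects: you explicitly separate out the leader $v^\ast$ (whose guard never fires) from the other $n-1$ nodes, and you justify inertness via the observation that the forwarding test on \cref{line:nt-le-forward-test} reads only the frozen virtual IDs from \cref{line:nt-le-new-ids}, whereas the paper argues instead that the resampled $\ID{v}$ stays below $\min$ so that even recomputed thresholds would already lie in the past.
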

\begin{proof}
    First, remark that nodes always forward pulses once $\min(\totrecvport{0},\totrecvport{1})$ has exceeded $\ID{v}$. The way we update the ID keeps the ID below $\min(\totrecvport{0},\totrecvport{1})$, so this change does not change the behavior of nodes in the algorithm regarding how they send or receive pulses. It also does not affect the computation of the output.
    
    Each node last resamples its ID when $\min(\totrecvport{0},\totrecvport{1})$ reaches $\maxID$. As the largest ID is at least $n^{\Omega(c)}$ with high probability, all nodes end up sampling their IDs in a sampling space of size at least $n^{\Omega(c)}$. Therefore, the probability of a collision for the last sampled IDs is at most $\binom{n}{2}\cdot n^{-\Omega(c)} = n^{-\Omega(c)}$, i.e., there is no collision with high probability.  
\end{proof}
\end{document}